\newtheorem{thm}{Theorem}
\newtheorem{prop}{Proposition}
\newtheorem{cor}{Corollary}
\newtheorem{lemma}{Lemma}
\newcommand{\bbr}{{\mathbb R}}
\newcommand{\bbs}{{\mathbb S}}
\begin{document}
\title{Future global existence and asymptotic behaviour of solutions to the Einstein-Boltzmann system with Bianchi I symmetry}

\author[1]{Ho Lee\footnote{holee@khu.ac.kr}}
\author[2,3]{Ernesto Nungesser\footnote{ernesto.nungesser@icmat.es}}

\affil[1]{Department of Mathematics and Research Institute for Basic Science, Kyung Hee University, Seoul, 130-701, Republic of Korea}
\affil[2]{School of Mathematics, Trinity College Dublin, Dublin 2, Ireland}
\affil[3]{Instituto de Ciencias Matem\'{a}ticas, Consejo Superior de Investigaciones Cient\'{i}ficas, 28049 Madrid, Spain}

\maketitle

\begin{abstract}
In this paper we study the Einstein-Boltzmann system with Bianchi I symmetry. We show that for small initial data the corresponding solutions of the Einstein-Boltzmann system are future geodesically complete and that they isotropize and have a dust-like behaviour at late times. Detailed information about the metric and the matter terms is obtained, and the results show that the solutions tend asymptotically to the Einstein-de Sitter solution.
\end{abstract}

\section{Introduction}
In theoretical cosmology one wants to understand the future and the past behaviour of different cosmological models. This is in general too ambitious: instead, one looks at simplified models, such as the important class of homogeneous spacetimes. Before obtaining a global and detailed picture of the future and past behaviour of a cosmological model, one hopes to obtain an understanding of the asymptotic behaviour, i.e., what happened close to the initial singularity (the big bang) and what is the distant future of our Universe. Both problems are still far from being trivial. Indeed, only very special models are understood to date, and the predictions are very different from each other. In this paper we will only deal with the future direction.

How do the asymptotics depend on the way the matter content of the Universe is described? It is standard to model the Universe with a perfect fluid, but this is only one of the several options for the choice of the matter model that exist, and it is of interest to know if and how the dynamics change with a different matter model. The most common ones apart from the perfect fluid are elastic matter, field-theoretic models and kinetic models. We are interested in kinetic models and will focus in the present paper on the case where no cosmological constant is present. It is mathematically more challenging and we plan to investigate the cosmological constant case in the future. In the last years there has been a lot of progress concerning the dynamics in the Einstein-Vlasov case. For an overview of different results we refer to \cite{CH}. There the dynamics of collisionless matter is analyzed for Bianchi spacetimes which are locally rotationally symmetric (LRS). In absence of LRS symmetry, the asymptotic behaviour in the case of homogeneous spacetimes has been studied for a large class of solutions (cf.\ \cite{EN,E4}). On the other hand, not so much has been done in the Einstein-Boltzmann case. One of the reasons is that the formulation of the Einstein-Boltzmann system is not trivial in general relativity (cf.\ \cite{BC,LeeRen}). Instead, one often considers certain symmetries on the spacetime to simplify the full system of equations, and can find several results \cite{AN,Lee2,LeeRen2,ND,NDT,NT} for Bianchi type I case. In the present paper we also consider the Bianchi type I spacetime, and the objective of this paper is twofold. First, we study non-diagonal Bianchi I spacetime. In the results mentioned above the authors only deal with the diagonal case, for instance FLRW or LRS cases, but our result is valid for spacetimes of Bianchi type I which are not necessarily diagonal. Second, we obtain asymptotic behaviour of solutions at late times. For instance in \cite{ND}, the authors considered large initial data and only obtained global existence of solutions, but in this paper we consider small solutions and obtain late time behaviour.

This paper is organized as follows. In Section \ref{Sec_EB}, we derive the Einstein-Boltzmann system with Bianchi symmetry. We first review the Einstein-Vlasov system with Bianchi symmetry, and then parametrize the mass shell with an orthonormal frame to obtain the Einstein-Boltzmann system with Bianchi symmetry.  In Section \ref{Sec_EB1} we present the equations we will work with, namely the Einstein-Boltzmann system with Bianchi I symmetry. In Section \ref{Sec_Estimates}, we collect several elementary lemmas, which are used to estimate the collision operator of the Boltzmann equation. Global-in-time existence of solutions to the Einstein-Boltzmann system will be proved by a standard iteration method. In Section \ref{Sec_Einstein}, we assume that a distribution function is given, and then show that classical solutions to the Einstein equations exist globally in time. In Section \ref{Sec_Boltzmann}, we assume that a metric is given, and then show that classical solutions of the Boltzmann equation exist globally in time. We combine the results of the previous sections to obtain the main results. Future geodesic completeness and the asymptotic behaviour of solutions to the Einstein-Boltzmann system with Bianchi I symmetry are obtained under suitable small data assumptions. Details on the relativistic Boltzmann equation in the Minkowski case are collected in the appendix.

\section{The Einstein-Boltzmann system with Bianchi symmetry}\label{Sec_EB}
In this section we derive the Einstein-Boltzmann system with Bianchi symmetry. We introduce the Einstein-Vlasov system in the first subsection. In the second subsection we consider this system with Bianchi symmetry. Then parametrizing the mass shell with an orthonormal frame we obtain the Boltzmann equation with Bianchi  symmetry in the third subsection. In the last subsection we present two different parametrizations of the post-collision momenta which will be necessary in the following.

\subsection{The Einstein-Vlasov system}
A cosmological model is described via a time-orientable Lorentzian metric $g_{\alpha\beta}$
 (we will use signature $(-,+,+,+)$) on a manifold $\mathcal{M}$, which means that at each point of $\mathcal{M}$ the two halves of the light cone can be labelled past and future in a way which varies continuously from point to point. This enables to distinguish between future-pointing and past-pointing timelike vectors. Using units such that the gravitational constant and the speed of light in vacuum are equal to one, the interaction between the geometry and the matter is described by the Einstein field equations:
\begin{eqnarray*}
G_{\alpha\beta}  = 8\pi T_{\alpha \beta},
\end{eqnarray*}
where $G_{\alpha\beta}$ is the Einstein tensor and $T_{\alpha \beta}$ is the energy-momentum tensor.  Indices are lowered and raised by the spacetime metric $g_{\alpha\beta}$ and its inverse $g^{\alpha\beta}$. We use Greek indices for indices which run from 0 to 3 and Latin indices from 1 to 3. The energy-momentum tensor describes the matter content, and we will use a kinetic theory to describe it. We refer to \cite{GLW,Hans} for an introduction to the Einstein-Vlasov system.

Consider now a particle which moves only under the influence of the gravitational field. The worldline $x^\alpha$ of a particle is a null or timelike geodesic depending on whether it has zero or non-zero rest mass. The tangent vector to this curve is the 4-velocity $v^{\alpha}$ and $p_{\alpha}$ is the 4-momentum of the particle. Let $T_x \mathcal{M}$ be the tangent space and $T_x^* \mathcal{M}$ the cotangent space at a point $x^{\alpha}$ in the spacetime $\mathcal{M}$. For particles of the same type and with the same (constant) rest mass $m$, the mass shell relation is given by:
\begin{eqnarray*}
 p_{\alpha} p_{\beta}g^{\alpha\beta}+m^2=0,
\end{eqnarray*}
where the Einstein summation convention that repeated indices are to be summed over is used.
The component $p_0$ can be expressed in terms of the other components via this relation:
\begin{eqnarray*}
p_0=\frac{1}{g^{00}}\bigg(-p_ag^{0a}\pm\sqrt{(p_ag^{0a})^2-g^{00}(p_a p_b g^{ab}+m^2)}\bigg).
\end{eqnarray*}
We are interested in future pointing momenta, i.e.,
\begin{eqnarray*}
p^{0}=p_0g^{00}+p_ag^{0a}=+\sqrt{(p_a g^{0a})^2-g^{00}(p_ap_bg^{ab}+m^2)}>0.
\end{eqnarray*}
Define the \textit{phase space $P_m$ for particles of mass $m$}:
\begin{eqnarray*}
P_m=\{(x^{\alpha},p_{\alpha}):\ x^{\alpha} \in \mathcal{M},\ p_{\alpha} \in T^*_x \mathcal{M},\ p_{\alpha} p_{\beta}g^{\alpha\beta}+m^2=0,\ p^{0}>0\},
\end{eqnarray*}
which is a subset of the cotangent bundle $T^*\mathcal{M}=\{(x^{\alpha},p_{\alpha}):\ x^{\alpha} \in \mathcal{M},\ p_{\alpha} \in T^*_x \mathcal{M}\}$. Usually the geodesic equations are expressed in terms of $v^{\alpha}$, but one can do the same with $v_{\alpha}$. Using the convention that $\nabla_{E_{\beta}} E_{\alpha}=\Gamma^{\gamma}_{\alpha\beta} E_{\gamma}$, 
$\nabla$ being the covariant derivative
and  $\Gamma_{\alpha\beta\mu}=g_{\alpha\delta}\Gamma^{\delta}_{\beta\mu}$ 
 we have that
\begin{eqnarray*}
0=\nabla_{ v^\alpha e_\alpha} (v^\beta e_\beta)=\left( v^\beta v^\alpha \Gamma^\gamma_{\beta\alpha} + v^\alpha \frac{\partial v^\gamma}{\partial x^\alpha}\right)  e_\gamma
=\left(  -v^\alpha v^\beta  \Gamma_{\beta\delta\alpha}  + v^\alpha \frac{\partial v_\delta }{\partial x^\alpha}\right)e_\gamma g^{\gamma\delta}.
\end{eqnarray*}
The geodesic equations are then:
\begin{eqnarray*}
\frac{dx^{\alpha}}{d\tau}=v^{\alpha}; \quad \frac{dv_{\beta}}{d\tau}=\Gamma_{\alpha\beta\mu} v^{\alpha} v^{\mu},
\end{eqnarray*}
where $\tau$ is an affine parameter.  The components of the metric connection $\Gamma_{\alpha\beta\mu}$ can be expressed via the Koszul formula in the vector basis $E_\alpha$ with commutators $[E_{\alpha},E_{\beta}]=\gamma^{\mu}_{\alpha \beta}E_{\mu}$, where $\gamma^{\mu}_{\alpha \beta}$ are called commutation functions: 
\begin{eqnarray}\label{con}
\Gamma_{\alpha\beta\mu}=\frac12\Big(E_{\beta}(g_{\alpha\mu})+E_{\mu}(g_{\beta\alpha})-E_{\alpha}(g_{\mu\beta})
+\gamma^\delta_{\mu\beta}g_{\alpha\delta}+\gamma^{\delta}_{\alpha\mu}g_{\beta\delta}
-\gamma^{\delta}_{\beta\alpha}g_{\mu\delta}\Big),
\end{eqnarray}
Using the fact that we are summing over $v^\alpha v^\mu$ the geodesic equation for $v_\beta$ can thus be expressed as
\begin{eqnarray*}
\frac{dx^{\alpha}}{d\tau}=v^{\alpha}; \quad \frac{dv_{\beta}}{d\tau}=\frac12 \left [E_{\beta}(g_{\alpha\mu}) +2\gamma^\delta_{\mu\beta}g_{\alpha\delta}\right] v^{\alpha} v^{\mu},
\end{eqnarray*}

The restriction of the Liouville operator to the mass shell is defined as:
\begin{eqnarray*}
L=\frac{dx^{\alpha}}{ds}\frac{\partial}{\partial x^{\alpha}}+\frac{dp_b}{ds}\frac{\partial}{\partial p_{b}}.
\end{eqnarray*}
Using the geodesic equations it has the following form
\begin{eqnarray*}
L=v^{\alpha}\frac{\partial}{\partial x^{\alpha}}+m\Gamma_{\alpha b\mu} v^{\alpha} v^{\mu}\frac{\partial}{\partial p_{b}}.
\end{eqnarray*}
This operator is sometimes also called geodesic spray. The matter is described by a nonnegative real valued distribution function $f(x^\alpha,p_\alpha)$ on $P_m$. In the Vlasov case we have that
\begin{eqnarray*}
L(f)=0.
\end{eqnarray*}
It remains to define the energy-momentum tensor $T_{\alpha\beta}$ in terms of the distribution and the metric. Before that we need a Lorentz invariant volume element on the mass shell. The cotangent space has the  Lorentz invariant volume element $[-\det g^{(4)}]^{-\frac{1}{2}} dp_0 dp_1 dp_2 dp_3$, where $\det g^{(4)}$ is the determinant of the spacetime metric. Now considering $p_0$ as a dependent variable the induced Riemannian volume of the mass shell considered as a hypersurface in the cotangent space at that point is
\begin{eqnarray*}
\varpi=\frac{m}{p^0} [-\det g^{(4)}]^{-\frac{1}{2}} dp_*,
\end{eqnarray*}
where $dp_*=dp_1dp_2dp_3$. Now we define the energy momentum tensor as follows:
\begin{eqnarray*}
T_{\alpha\beta}=\int f(x^{\alpha},p_{*}) p_{\alpha}p_{\beta}\varpi.
\end{eqnarray*}
One can show that $T_{\alpha\beta}$ is divergence-free and thus it is compatible with the Einstein equations. For kinetic matter all the energy conditions hold. In particular the dominant energy condition is equivalent to the statement that in any orthonormal basis the energy density dominates the other components of $T_{\alpha\beta}$, i.e., $T_{\alpha\beta}\leq T_{00}$ for each $\alpha,\beta$. Using the mass shell relation one can see that this holds for kinetic matter. The nonnegative sum pressures condition is in our case equivalent to $g_{ab}T^{ab} \ge 0$. Another important quantity is the particle current density, which is defined as
\begin{eqnarray*}
N^{\alpha}=\int  f(x^{\alpha},p_{*}) p^{\alpha}\varpi.
\end{eqnarray*}
It holds that
\begin{eqnarray*}
\nabla_{\alpha} N^{\alpha}=E_{\alpha} N^{\alpha}+\Gamma^{\alpha}_{\mu\alpha} N^{\mu}=0,
\end{eqnarray*}
which is an expression of the conservation of the number of particles. For simplicity we will assume that our system of particles is made of a single species of particles (no mixtures) and that all particles have unit mass, i.e., the units are chosen such that $m=1$, so that a distinction between velocities and momenta is not necessary and the mass shell relation is
\begin{eqnarray}
p_{\alpha}p_{\beta}g^{\alpha\beta}=-m^2=-1.\label{mass_shell}
\end{eqnarray}

\subsection{The 3+1 decomposition of the Einstein-Vlasov system with Bianchi symmetry}
A Bianchi spacetime is defined to be a spatially homogeneous spacetime whose isometry group possesses a three-dimensional subgroup that acts simply transitively on the spacelike orbits. A Bianchi spacetime admits a Lie algebra of Killing vector fields. These vector fields are tangent to the group orbits, which are the surfaces of homogeneity. Using a left-invariant frame, the metric induced on the spacelike hypersurfaces depends only on the time variable. If $\omega^a$ denote the one-forms dual to the spatial frame vectors, the metric of a Bianchi spacetime in the left-invariant frame is written as
\begin{eqnarray}\label{metric}
g^{(4)}=-dt^2+g,\quad g=g_{ab}(t) \omega^a(t) \omega^b(t).
\end{eqnarray}
Due to the simple form of the metric in the Bianchi case, it is useful to do the 3+1 decomposition of the Einstein equations in the left-invariant frame. Using the sign conventions as in \cite{RA} we obtain with zero shift and the lapse function equal to one ($\beta=0$ and $\alpha=1$ in (2.29) and (2.34) of \cite{RA}) the following evolution equations:
\begin{eqnarray*}
\dot{g}_{ab}&=&-2k_{ab},\\
\dot{k}_{ab}&=&R_{ab}+k\, k_{ab}-2(g^{cd}k_{bd})k_{ac}-8\pi S_{ab}+4\pi g_{ab}(S-\rho),
\end{eqnarray*}
where $k_{ab}$ is the second fundamental form, $k=k_{ab}g^{ab}$ its trace, $R_{ab}$ the Ricci tensor of the spatial metric $g_{ab}$, $S=g^{ab}S_{ab}$, and the dot denotes the derivative with respect to time $t$. The constraint equations are given as ((2.26) and (2.27) of \cite{RA})
\begin{eqnarray*}
R-k^{ab}k_{ab}+k^2&=&16\pi \rho,\\
\nabla^a k_{ab}&=&-8\pi T_{0b},
\end{eqnarray*}
where $R=R_{ab}g^{ab}$ is the scalar curvature. Since we use a left-invariant frame the distribution function $f$ will not depend on $x^a$. Moreover since $g_{00}=g^{00}=-1$ and $g^{0a}=0$ we have that $p^0=-p_0=\sqrt{1+p_a p_b g^{ab}}$. Defining $\rho=T_{00}$, $J_{a}=-T_{0a}$ and $S_{ij}=T_{ij}$ the frame components of the energy-momentum tensor and the particle current density are thus
\begin{eqnarray*}
\rho&=&(\det g)^{-\frac12}  \int_{\bbr^3} f(t,p_*) \sqrt{1+p_ap_bg^{ab}} dp_*,\\
J_{i}&=&(\det g)^{-\frac12} \int_{\bbr^3} f(t,p_*) p_i  dp_*,\\
S_{ij}&=&(\det g)^{-\frac12} \int_{\bbr^3} f(t,p_*) \frac{p_i p_j} {\sqrt{1+p_ap_bg^{ab}}}dp_*,\\
N^{0}&=& (\det g)^{-\frac12} \int_{\bbr^3} f(t,p_*) dp_*,\\
N^{i}&=& (\det g)^{-\frac12} \int_{\bbr^3} f(t,p_*) \frac{p^i} {\sqrt{1+p_ap_bg^{ab}}}dp_*.
\end{eqnarray*}
The conservation of the number of particles implies then
\begin{eqnarray}\label{conspart}
\dot{N}^{0}+{\Gamma^{\alpha}}_{\beta\alpha}N^{\beta}=0.
\end{eqnarray}

We now consider the Vlasov equation. Since $b$ is spacelike
\begin{eqnarray*}
\Gamma_{\alpha b\mu} p^{\alpha}p^{\mu}=\gamma^\delta_{\mu b}g_{\alpha\delta}p^{\alpha}p^{\mu}.
\end{eqnarray*}
Consider now two different cases.
\begin{itemize}
\item In the metric approach, a basis of the frame vectors is chosen, such that they are \textbf{time-independent} and the commutation functions are made equal to the structure constants of the Lie algebra which are denoted by $C^i_{jk}$. Only the spatial components remain and as a result we have
\begin{eqnarray*}
\frac{\partial f}{\partial t}+\frac{1}{p^0}C^d_{ab}p^{a}p_{d}\frac{\partial f}{\partial p_b}=0.
\end{eqnarray*}
\item In the orthonormal frame approach, $g_{ab}=\eta_{ab}$. On the other hand the commutation functions and the spacelike frame vectors depend on time. However $\gamma^0_{\alpha\beta}=0$ and $\gamma^a_{0b}=\hat{k}^{a}_{b}$, where we use a hat to indicate that we are in the orthonormal frame. As a result
\begin{eqnarray*}
\frac{\partial \hat{f}}{\partial t}+ \Big(\hat{k}^d_b \hat{p}_d
+(\hat{p}^0)^{-1} \gamma^d_{ab}\hat{p}_d\hat{p}^{a}\Big)\frac{\partial \hat{f}}{\partial \hat{p}_b}=0.
\end{eqnarray*}
\end{itemize}
For more details for these approaches we refer to \cite{WE}. Now by coupling the Vlasov equation to the Einstein equations through the energy-momentum tensor, we obtain the Einstein-Vlasov system with Bianchi symmetry.

\subsection{The Boltzmann equation with Bianchi symmetry}
We now have to introduce the collision operator. The idea is to use first the orthonormal frame, because then the collision operator has the same form as for the Minkowski case. Then, we have
\begin{eqnarray*}
\frac{\partial \hat{f}}{\partial t}+ \Big(\hat{k}^d_b \hat{p}_d
+(\hat{p}^0)^{-1} \gamma^d_{ab}\hat{p}_d\hat{p}^{a}\Big)\frac{\partial \hat{f}}{\partial \hat{p}_b}=Q(\hat{f},\hat{f}),
\end{eqnarray*}
where $Q(\hat{f},\hat{f})$ is the collision operator as in the Minkowski case. If we make the transformations from one frame to the other we arrive at
\begin{eqnarray*}
\frac{\partial f}{\partial t}+\frac{1}{p^0}C^d_{ab}p^{a}p_{d}\frac{\partial f}{\partial p_b}=Q(f,f),
\end{eqnarray*}
where $Q(f,f)$ is now the collision operator for the Bianchi case and $f$ depends on $p_*$. The details for the collision operator in the Minkowski case are presented in the appendix. Let us introduce now some standard terminology. If two particles with momenta $p^{\alpha}$ and $q^{\alpha}$ collide, due to the conservation of energy and momentum the following relation holds between the momenta $(p^{\alpha}, q^{\alpha})$ and the post-collision momenta $(p'^{\alpha},q'^{\alpha})$:
\begin{eqnarray*}
p'^{\alpha}+q'^{\alpha}=p^{\alpha}+q^{\alpha}.
\end{eqnarray*}
The post-collision momenta $p'^\alpha$ will be parametrized in two ways with parameters $\omega,\xi\in\bbs^2$. The relative momentum $h$ and the total energy $s$ are defined by
\begin{eqnarray*}
h=\sqrt{(p_\alpha-q_\alpha)(p^\alpha-q^\alpha)},\quad
s=-(p_\alpha+q_\alpha)(p^\alpha+q^\alpha).
\end{eqnarray*}
Using these variables the M{\o}ller velocity $v_M$ is defined as
\begin{eqnarray*}
v_M=\frac{h\sqrt{s}}{4p^0q^0}.
\end{eqnarray*}
For notational simplicity we also use
\begin{eqnarray*}
n^\alpha=p^\alpha+q^\alpha,\quad u^\alpha=p^\alpha-q^\alpha,
\end{eqnarray*}
and then we may also write $h^2=u_\alpha u^\alpha$ and $s=-n_\alpha n^\alpha$. Finally we denote by $\sigma=\sigma(h,\theta)$ the scattering kernel, where $\theta$ is the scattering angle, which is the angle between $u^\alpha$ and $u'^\alpha$. For background on the relativistic kinetic theory and intuition on the variables and expressions introduced, we refer to \cite{GLW} and pages 94--116 of \cite{Glassey}. For a more general presentation of the non-relativistic Boltzmann equation we refer to \cite{Vil}.

\subsection{Two representations of the collision operator and the post-collision momenta}
We generalize the Boltzmann collision operator in the Minkowski case to the Bianchi case. One way to do this is to use an orthonormal frame in the momentum space \cite{LeeRen}. Applying the usual Gram-Schmidt process, which also works for pseudo-Riemannian metrics as mentioned in \cite{LeeRen}, we obtain an orthonormal frame $\{e_\mu\}$ from a given frame $\{E_\alpha\}$ such that $e_\mu=e_\mu^\alpha E_\alpha$ or $E_\alpha=\theta^\mu_\alpha e_\mu$, where $\theta^\mu_\alpha$ is the inverse of $e^\alpha_\mu$. It satisfies the following relations:
\begin{eqnarray*}
p^\alpha=e_\mu^\alpha \hat{p}^\mu,\quad
p_\alpha=\theta^\mu_\alpha\hat{p}_\mu,\quad
g_{\alpha\beta}=\theta^\mu_\alpha\theta^\nu_\beta\eta_{\mu\nu},\quad
g^{\alpha\beta}=e_\mu^\alpha e_\nu^\beta\eta^{\mu\nu},\label{components}
\end{eqnarray*}
where $p_\alpha=g_{\alpha\beta}p^\beta$ and $\hat{p}_\mu=\eta_{\mu\nu}\hat{p}^\nu$. In the Bianchi case with the metric \eqref{metric}, we may take $e_0=\frac{\partial}{\partial t}$ to obtain
\begin{eqnarray*}
e^0_0=\theta^0_0=1,\quad
e^a_0=e^0_a=\theta^a_0=\theta^0_a=0,\quad a=1,2,3.
\end{eqnarray*}
Thus, $p_0=\hat{p}_0$, $p^0=\hat{p}^0$, and
\begin{eqnarray*}
p_a=\theta^b_a\hat{p}_b,\quad
p^a=e_b^a \hat{p}^b,\quad
g_{ab}=\theta^c_a\theta^d_b\eta_{cd},\quad
g^{ab}=e_c^a e_d^b\eta^{cd}.
\end{eqnarray*}
We can also compute the other components as follows:
\begin{eqnarray}
e^a_b=
\left(
\begin{array}{c}
e_1\\
e_2\\
e_3
\end{array}
\right)=
\left(
\begin{array}{ccc}
\frac{1}{\sqrt{m_1}}
& 0
& 0\\
\frac{-g_{12}}{\sqrt{m_1}\sqrt{m_2}}
& \frac{g_{11}}{\sqrt{m_1}\sqrt{m_2}}
& 0 \\
\frac{g_{12}g_{23}-g_{13}g_{22}}{\sqrt{m_2}\sqrt{m_3}}
& \frac{-g_{11}g_{23}+g_{12}g_{13}}{\sqrt{m_2}\sqrt{m_3}}
& \frac{g_{11}g_{22}-(g_{12})^2}{\sqrt{m_2}\sqrt{m_3}}
\end{array}
\right),\label{e^a_b}
\end{eqnarray}
and its inverse
\begin{eqnarray}
\theta^a_b=
\left(
\begin{array}{c}
\theta_1\\
\theta_2\\
\theta_3
\end{array}
\right)=
\left(
\begin{array}{ccc}
\frac{g_{11}}{\sqrt{m_1}}
& 0
& 0\\
\frac{g_{12}}{\sqrt{m_{1}}}
& \frac{g_{11}g_{22}-(g_{12})^2}{\sqrt{m_1}\sqrt{m_2}}
& 0 \\
\frac{g_{13}}{\sqrt{m_1}}
& \frac{g_{11}g_{23}-g_{12}g_{13}}{\sqrt{m_1}\sqrt{m_2}}
& \frac{\det g}{\sqrt{m_2}\sqrt{m_3}}
\end{array}
\right),\label{theta^a_b}
\end{eqnarray}
where $m_1$, $m_2$, and $m_3$ are the principal minors of the matrix $g_{ab}$, i.e., the quantities $g_{11}$, $g_{11}g_{22}-(g_{12})^2$, and $\det g$, respectively. In fact, the matrix $\theta^a_b$ is uniquely determined, since $g_{ab}$ is symmetric and positive definite, which is known as the Cholesky factorization. Note that in order to estimate the quantities $e_b^a$ and $\theta^a_b$ we need to estimate the lower bounds of the principal minors of the matrix $g_{ab}$. The following two lemmas give the lower bounds of the principal minors. Below, $M_n$ denotes the set of all $n\times n$ matrices over $\bbr$. For the proof of the lemmas we refer to \cite{HJ}.

\begin{lemma}{(Hadamard's inequality)}\label{Lem_Hadamard}
Let $A=(a_{ij})\in M_n$ be positive definite. Then,
\[
\det A\leq a_{11}\cdots a_{nn}
\]
with equality if and only if $A$ is diagonal.
\end{lemma}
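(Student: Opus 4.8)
The plan is to deduce the inequality directly from the Cholesky factorization discussed just above. Since $A$ is positive definite, there is a unique lower triangular $L=(\ell_{ij})\in M_n$ with $\ell_{ii}>0$ and $A=LL^{T}$ --- precisely the structure exhibited for $g_{ab}$ in \eqref{e^a_b}--\eqref{theta^a_b}. First I would compare diagonal entries: $a_{ii}=(LL^{T})_{ii}=\sum_{k=1}^{i}\ell_{ik}^{2}\geq \ell_{ii}^{2}$ for each $i=1,\dots,n$. Then, since $\det A=(\det L)^{2}=\prod_{i=1}^{n}\ell_{ii}^{2}$, multiplying the $n$ inequalities yields $\det A\leq\prod_{i=1}^{n}a_{ii}$, which is the claimed bound.

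For the equality statement I would observe that equality in the product forces $a_{ii}=\ell_{ii}^{2}$ for every $i$, i.e.\ $\ell_{ik}=0$ for all $k<i$; hence $L$ is diagonal and therefore so is $A=LL^{T}$. The converse implication is immediate.

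An alternative route avoids Cholesky altogether: set $D=\diag(a_{11},\dots,a_{nn})$ and $B=D^{-1/2}AD^{-1/2}$, which is positive definite with unit diagonal, so $\tr B=n$; applying the arithmetic--geometric mean inequality to the eigenvalues $\lambda_1,\dots,\lambda_n>0$ of $B$ gives $\det B=\prod_i\lambda_i\leq\big(n^{-1}\sum_i\lambda_i\big)^{n}=1$, whence $\det A=\det D\cdot\det B\leq\prod_i a_{ii}$, with equality forcing $\lambda_1=\dots=\lambda_n$, i.e.\ $B=I$ and $A=D$. I do not expect a genuine obstacle here; the only point requiring care is the justification of the Cholesky factorization (equivalently, positivity of all leading principal minors of $A$), which is standard and is exactly the fact underlying \eqref{e^a_b}, together with a little attention to propagating the equality condition through the chain of inequalities.
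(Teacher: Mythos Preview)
Your proposal is correct; both the Cholesky argument and the AM--GM argument on the eigenvalues of $D^{-1/2}AD^{-1/2}$ are standard and complete proofs of Hadamard's inequality, and your treatment of the equality case is fine in each. There is nothing to compare against, however: the paper does not prove this lemma at all but simply cites Horn and Johnson, \emph{Matrix Analysis} \cite{HJ}, for the proof (and likewise for Fischer's inequality). Your write-up therefore supplies more than the paper does.
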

\begin{lemma}{(Fischer's inequality)}\label{Lem_Fischer}
Suppose that the partitioned symmetric matrix
\[
H=\bigg(
\begin{array}{cc}
A & B \\
B^T & C
\end{array}
\bigg)
\in M_{p+q},\quad A\in M_p,\quad C\in M_q
\]
is positive definite. Then,
\[
\det H\leq (\det A)(\det C).
\]
\end{lemma}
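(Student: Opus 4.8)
The plan is to deduce Fischer's inequality from the block (Schur-complement) factorization of $H$ together with one elementary determinantal estimate. Since $H$ is positive definite, every principal submatrix is positive definite; in particular $A\in M_p$ and $C\in M_q$ are positive definite, hence invertible. First I would record the identity
\[
\begin{pmatrix} A & B \\ B^T & C \end{pmatrix}
=\begin{pmatrix} I & 0 \\ B^TA^{-1} & I \end{pmatrix}
\begin{pmatrix} A & 0 \\ 0 & S \end{pmatrix}
\begin{pmatrix} I & A^{-1}B \\ 0 & I \end{pmatrix},
\qquad S:=C-B^TA^{-1}B,
\]
which is valid because $A$ is invertible. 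Taking determinants and using that the two triangular factors have determinant $1$ gives $\det H=(\det A)(\det S)$. Moreover $D:=\diag(A,S)$ is congruent to $H$ via this factorization, so $D$ is positive definite; in particular the Schur complement $S$ is positive definite.

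Next I would prove the auxiliary fact: if $X\in M_q$ is positive definite and $Y\in M_q$ is positive semidefinite, then $\det(X+Y)\ge\det X$. Letting $X^{1/2}$ denote the positive definite square root of $X$, we have $\det(X+Y)=(\det X)\det\!\big(I+X^{-1/2}YX^{-1/2}\big)$. The matrix $X^{-1/2}YX^{-1/2}$ is symmetric and positive semidefinite, so its eigenvalues $\lambda_1,\dots,\lambda_q$ are nonnegative and $\det\!\big(I+X^{-1/2}YX^{-1/2}\big)=\prod_{i=1}^q(1+\lambda_i)\ge1$, whence $\det(X+Y)\ge\det X$.

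Finally I would combine the two ingredients. Since $A^{-1}$ is positive definite, $B^TA^{-1}B$ is positive semidefinite, so applying the auxiliary fact with $X=S$ and $Y=B^TA^{-1}B$ yields $\det C=\det\!\big(S+B^TA^{-1}B\big)\ge\det S$. Therefore
\[
\det H=(\det A)(\det S)\le(\det A)(\det C),
\]
which is the claim; iterating with all blocks of size one recovers Lemma \ref{Lem_Hadamard}. The argument is entirely routine, so there is no genuine obstacle; the only point deserving a moment's care is the positive definiteness of the Schur complement $S$, which as noted follows immediately from the block factorization since congruence preserves positive definiteness. (One could alternatively give the slick information-theoretic proof via subadditivity of differential entropy for a centered Gaussian with covariance $H$, but the linear-algebra route above is the most self-contained.)
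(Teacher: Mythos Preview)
Your proof is correct; the Schur-complement factorization together with the monotonicity estimate $\det(X+Y)\ge\det X$ is the standard route, and each step is justified. Note, however, that the paper does not actually prove this lemma: it simply states Fischer's inequality and refers the reader to Horn and Johnson's \emph{Matrix Analysis} for the proof, so there is no paper-side argument to compare against.
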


The following estimates are now easily obtained,
\begin{eqnarray}\label{est_principal_minors}
\frac{1}{m_1}\leq\frac{g_{22}g_{33}}{\det g},\quad
\frac{1}{m_2}\leq\frac{g_{33}}{\det g},
\end{eqnarray}
and we can conclude that the quantities $e^a_b$ and $\theta^a_b$ are bounded as long as $(\det g)^{-1}$ and $g_{ab}$ are bounded. The above inequalities are also used to estimate the asymptotic behaviour of the components $e^a_b$ and $\theta^a_b$.

In an orthonormal frame the representations of collision operator and post-collision momentum become identical to the Minkowski case (see Appendix). Hence, we have
\begin{eqnarray*}
Q(\hat{f},\hat{f})=\int_{\bbr^3}\int_{\bbs^2}\frac{h\sqrt{s}}{4\hat{p}^0\hat{q}^0}
\sigma(h,\theta)
\Big(\hat{f}(\hat{p}')\hat{f}(\hat{q}')-\hat{f}(\hat{p})\hat{f}(\hat{q})\Big)
d\omega d\hat{q}.\label{QMS}
\end{eqnarray*}
Here, $h$ and $s$ are understood as
\begin{eqnarray*}
h=\sqrt{\eta_{\mu\nu}(\hat{p}^\mu-\hat{q}^\mu)(\hat{p}^\nu-\hat{q}^\nu)},\quad
s=-\eta_{\mu\nu}(\hat{p}^\mu+\hat{q}^\mu)(\hat{p}^\nu+\hat{q}^\nu),
\end{eqnarray*}
where $\hat{p}^\mu$ and $\hat{q}^\mu$ denote the components of momenta in the orthonormal frame. The post-collision momentum is written by
\begin{eqnarray*}
\left(
\begin{array}{c}
\hat{p}'^0\\
\hat{p}'^k
\end{array}
\right)=
\left(
\begin{array}{c}
\displaystyle
\frac{\hat{p}^0+\hat{q}^0}{2}
+\frac{h}{2}
\frac{(\hat{n}\cdot\omega)}{\sqrt{s}}\\
\displaystyle
\frac{\hat{p}^k+\hat{q}^k}{2}
+\frac{h}{2}
\bigg(\omega^k
+\bigg(\frac{\hat{n}^0}{\sqrt{s}}-1\bigg)
\frac{(\hat{n}\cdot\omega)\hat{n}^k}{|\hat{n}|^2}\bigg)
\end{array}
\right),\label{p'MS}
\end{eqnarray*}
where $\hat{n}^\mu=\hat{p}^\mu+\hat{q}^\mu$. In Glassey-Strauss's framework, the above quantities are written as follows:
\begin{eqnarray*}
Q(\hat{f},\hat{f})=\int_{\bbr^3}\int_{\bbs^2}
\frac{hs(\hat{n}^0)^2\sigma(h,\theta)}{4\hat{p}^0\hat{q}^0
((\hat{n}^0)^2-(\hat{n}\cdot\xi)^2)^{3/2}}
\Big(\hat{f}(\hat{p}')\hat{f}(\hat{q}')-\hat{f}(\hat{p})\hat{f}(\hat{q})\Big)d\xi d\hat{q},\label{QMGS}
\end{eqnarray*}
and
\begin{eqnarray*}
\left(
\begin{array}{c}
\hat{p}'^0\\
\hat{p}'^k
\end{array}
\right)=
\left(
\begin{array}{c}
\displaystyle
\frac{\hat{p}^0+\hat{q}^0}{2}+\frac{h}{2}
\frac{(\hat{n}\cdot\xi)}{\sqrt{(\hat{n}^0)^2-(\hat{n}\cdot\xi)^2}}\\
\displaystyle
\frac{\hat{p}^k+\hat{q}^k}{2}+\frac{h}{2}\frac{\hat{n}^0\xi^k}
{\sqrt{(\hat{n}^0)^2-(\hat{n}\cdot\xi)^2}}
\end{array}
\right).\label{p'MGS}
\end{eqnarray*}
We now switch to the $p_*$ and $q_*$ variables through the transformations $e^\alpha_\mu$ or $\theta^\mu_\alpha$, and then obtain representations of the collision operator and post-collision momentum in the original frame. In the end what we obtain is the collision operator $Q$ in Strain's framework,
\begin{eqnarray*}
Q(f,f)=(\det g)^{-\frac12}\int_{\bbr^3}\int_{\bbs^2}v_M
\sigma(h,\theta)\Big(f(p'_*)f(q'_*)-f(p_*)f(q_*)\Big)d\omega dq_*
\end{eqnarray*}
with the parametrization of post-collision momenta
\begin{eqnarray*}
\left(
\begin{array}{c}
p'^0\\
p'_i
\end{array}
\right)=
\left(
\begin{array}{c}
\displaystyle
\frac{p^0+q^0}{2}+\frac{h}{2}\frac{n_ae^a_b\omega^b}{\sqrt{s}}\\
\displaystyle
\frac{p_i+q_i}{2}+\frac{h}{2}\bigg(
g_{ia}e^a_b\omega^b
+\bigg(\frac{n^0}{\sqrt{s}}-1\bigg)
\frac{n_ae^a_b\omega^b n_i}{g^{cd}n_cn_d}
\bigg)
\end{array}
\right),
\end{eqnarray*}
and in Glassey-Strauss's framework,
\begin{eqnarray*}
Q(f,f)=(\det g)^{-\frac12} \int_{\bbr^3}\int_{\bbs^2}
\frac{v_M\sqrt{s}(n^0)^2\sigma(h,\theta)}{((n^0)^2-(n_ae^a_b\xi^b)^2)^{3/2}}
\Big(f(p'_*)f(q'_*)-f(p_*)f(q_*)\Big)d\xi dq_*
\end{eqnarray*}
with the parametrization of post-collision momenta
\begin{eqnarray*}
\left(
\begin{array}{c}
p'^0\\
p'_i
\end{array}
\right)=
\left(
\begin{array}{c}
\displaystyle
\frac{p^0+q^0}{2}+\frac{h}{2}
\frac{n_ae^a_b\xi^b}
{\sqrt{(n^0)^2-(n_ce^c_d\xi^d)^2}}\\
\displaystyle
\frac{p_i+q_i}{2}+\frac{h}{2}\frac{n^0g_{ia}e^a_b\xi^b}
{\sqrt{(n^0)^2-(n_ce^c_d\xi^d)^2}}
\end{array}
\right).
\end{eqnarray*}
For simplicity we used upper indices for zeroth components since they only differ in sign, i.e., $p'_0=-p'^0$, and moreover we will regard $p'^0$, as well as $p^0$, as a function of spatial components through the mass shell condition \eqref{mass_shell} as in the Vlasov case. Note that the collision operator can be written as $Q=Q_+-Q_-$ in both cases, which are called the gain and the loss terms, respectively.

\section{The Einstein-Boltzmann system with Bianchi I symmetry}\label{Sec_EB1}
The Einstein-Boltzmann system with Bianchi I symmetry is obtained by combining the equations derived so far. At the end of the section we make assumptions on the scattering kernel and the Hubble variable. Several basic estimates of the Hubble variable will also be given. Note that the Ricci tensor, the scalar curvature and the coefficients of the metric connection with spacelike indices vanish in the Bianchi I case. Now, the evolution equations of the Einstein equations are
\begin{eqnarray}
\dot{g}_{ab}&=&-2k_{ab},\label{evolution1}\\
\dot{k}_{ab}&=&(g^{cd}k_{cd})k_{ab}-2(g^{cd}k_{bd})k_{ac}-8\pi S_{ab}+4\pi g_{ab}(S-\rho),\label{evolution2}
\end{eqnarray}
and the constraint equations are
\begin{eqnarray}
-k^{ab}k_{ab}+k^2&=&16\pi \rho,\label{constraint1}\\
0&=&-8\pi T_{0b}.\label{constraint2}
\end{eqnarray}
The first expression of the Boltzmann equation is
\begin{eqnarray}
\frac{\partial f}{\partial t}
=(\det g)^{-\frac12}\int_{\bbr^3}\int_{\bbs^2}v_M
\sigma(h,\theta)\Big(f(p'_*)f(q'_*)-f(p_*)f(q_*)\Big)d\omega dq_*,\label{boltzmann1}
\end{eqnarray}
where
\begin{eqnarray}
\left(
\begin{array}{c}
p'^0\\
p'_i
\end{array}
\right)=
\left(
\begin{array}{c}
\displaystyle
\frac{p^0+q^0}{2}+\frac{h}{2}\frac{n_ae^a_b\omega^b}{\sqrt{s}}\\
\displaystyle
\frac{p_i+q_i}{2}+\frac{h}{2}\bigg(
g_{ia}e^a_b\omega^b
+\bigg(\frac{n^0}{\sqrt{s}}-1\bigg)
\frac{n_ae^a_b\omega^bn_i}{g^{cd}n_cn_d}
\bigg)
\end{array}
\right),\label{p'1}
\end{eqnarray}
and the second expression is
\begin{eqnarray}
\frac{\partial f}{\partial t}=(\det g)^{-\frac12} \int_{\bbr^3}\int_{\bbs^2}
\frac{v_M\sqrt{s}(n^0)^2\sigma(h,\theta)}{((n^0)^2-(n_ae^a_b\xi^b)^2)^{3/2}}
\Big(f(p'_*)f(q'_*)-f(p_*)f(q_*)\Big)d\xi dq_*,\label{boltzmann2}
\end{eqnarray}
where
\begin{eqnarray}
\left(
\begin{array}{c}
p'^0\\
p'_i
\end{array}
\right)=
\left(
\begin{array}{c}
\displaystyle
\frac{p^0+q^0}{2}+\frac{h}{2}
\frac{n_ae^a_b\xi^b}
{\sqrt{(n^0)^2-(n_ce^c_d\xi^d)^2}}\\
\displaystyle
\frac{p_i+q_i}{2}+\frac{h}{2}\frac{n^0g_{ia}e^a_b\xi^b}
{\sqrt{(n^0)^2-(n_ce^c_d\xi^d)^2}}
\end{array}
\right).\label{p'2}
\end{eqnarray}
They are coupled to each other through the energy-momentum tensor
\begin{eqnarray}
\rho&=&(\det g)^{-\frac{1}{2}} \int_{\bbr^3} f(t,p_*)(1+g^{cd}p_{c}p_d)^{\frac{1}{2}}dp_*,\label{rho}\\
S_{ab}&=&(\det g)^{-\frac{1}{2}} \int_{\bbr^3} f(t,p_*)p_a p_b(1+g^{cd}p_{c}p_d)^{-\frac{1}{2}}dp_*.\label{S_ab}
\end{eqnarray}
In this paper the Einstein-Boltzmann system with Bianchi I symmetry will refer to the system of equations \eqref{evolution1}--\eqref{S_ab}, and global-in-time existence and asymptotic behaviour of solutions will be studied. We expect the spacetime to behave as the Einstein-de Sitter model at late times. Hence, it will be useful to use the following notation throughout the text for a scaled version of the spatial metric and its inverse:
\begin{eqnarray}\label{bar}
g_{ab}=t^{\frac43} \bar{g}_{ab}\quad\mbox{and}\quad g^{ab}=t^{-\frac43} \bar{g}^{ab}.
\end{eqnarray}
Throughout the paper, $e^a_b$ and $\theta^a_b$ will denote the components of an orthonormal frame and its inverse, respectively. Change of variables between $p$ and $\hat{p}$ will be used frequently. Indices will be lowered by $g_{ab}$ for $p^a$, i.e., $p_a=g_{ab}p^b$, but will be lowered by $\eta_{ab}$ in the orthonormal frame case, i.e., $\hat{p}_a=\eta_{ab}\hat{p}^b$.

In the rest of the section we make two assumptions on the Einstein-Boltzmann system. The first one is on the scattering kernel of the Boltzmann equation, and the second one is on the Hubble variable which will be defined below. Several basic estimates of the Hubble variable will also be given, and the estimates hold for any classical solutions. Those estimates will be improved in later sections by assuming smallness of initial data.

\subsection{Assumption on the scattering kernel}
The scattering kernel is a quantity which depends on the relative momentum $h$ and the scattering angle $\theta$. The relative momentum $h$ is defined as $h^2=u_\alpha u^\alpha$, and the scattering angle $\cos\theta=u_\alpha u'^\alpha/h^2$, where we used $h^2=u_\alpha u^\alpha=u'_\alpha u'^\alpha$. For a technical reason, in the present paper, we will assume that the scattering kernel is given by $\sigma=\sigma(p_*,q_*,\omega)$. To be precise, we assume that the scattering kernel, when written in an orthonormal frame, i.e.,
\begin{eqnarray*}
\sigma=\hat{\sigma}(\hat{p},\hat{q},\omega),
\end{eqnarray*}
satisfies the following:
\begin{itemize}
\item It is non-negative and bounded.
\item It is differentiable with respect to $\hat{p}$, and its derivatives are bounded.
\item It is supported in $\Big\{|\hat{u}|^2\leq |\hat{u}'|^2+ Ct^{-\frac43}\Big\}$ for some positive number $C$.
\end{itemize}
The third assumption on the scattering kernel may not seem to be physically well-motivated. However, we remark that for any covariant variables $p_*$ and $q_*$ the support will eventually cover all $\omega\in\bbs^2$. By the estimate \eqref{e2} we have $|\hat{u}|^2-|\hat{u}'|^2\leq h^2(p^0q^0-1)$ and note that
\begin{eqnarray*}
p^0=\sqrt{1+g^{ab}p_ap_b},\quad h^2=-(p^0-q^0)^2+g^{ab}(p_a-q_a)(p_b-q_b).
\end{eqnarray*}
In the present paper we will show that $g^{ab}\sim t^{-\frac43}$. This implies that the non-negative quantity $h^2(p^0q^0-1)$ decays to zero with a rate $t^{-\frac83}$. Hence, we can see that the inequality $|\hat{u}|^2\leq |\hat{u}'|^2+Ct^{-\frac43}$ will eventually hold.

\subsection{Assumption on the Hubble variable and time origin choice}
Instead of the mean curvature, i.e., the trace of the second fundamental form $k=g^{ab}k_{ab}$, in cosmology often the Hubble variable is used:
\begin{eqnarray*}
H=-\frac13 k.
\end{eqnarray*}
In this paper we assume that the Hubble variable $H$ is initially positive (the trace of the second fundamental form is initially negative), i.e.,
\begin{eqnarray*}
H(t_0)>0,
\end{eqnarray*}
which corresponds to an initially expanding universe. Then, the first constraint equation implies that if the spacetime is neither Bianchi IX nor flat, then $H>0$ for all time. The initial data set corresponds to the triple $(g_{ij}^0, k_{ij}^0, f^0)$. Since one can always add an arbitrary constant to the time origin, we choose our time coordinate and time origin such that $t_0=-2(k^0)^{-1}$, which is just some positive number. In other words we can set without loss of generality
\begin{eqnarray}
H(t_0)=\frac{2}{3t_0}.\label{choice}
\end{eqnarray}

\subsection{Basic estimates of the Hubble variable}
It is useful to express the second fundamental form in terms of its trace free part $\sigma_{ab}$ and the Hubble variable:
\begin{eqnarray}\label{tracefree}
k_{ab}=\sigma_{ab}-Hg_{ab}.
\end{eqnarray}
The evolution equation for the Hubble variable is
\begin{eqnarray}\label{H}
\dot{H}=-3H^2+\frac{4\pi}{3}(3\rho-S).
\end{eqnarray}
Let us use the notation
\begin{eqnarray*}
F=\frac{1}{4H^2}\sigma_{ab}\sigma^{ab}
\end{eqnarray*}
to denote the shear. If we substitute the energy density with the constraint equation, then we obtain $F=\frac32(1-\Omega)$ with $\Omega=\frac{8\pi \rho}{3H^2}$. The evolution equation for $H$ can then be expressed as
\begin{eqnarray}\label{H2}
\frac{d(H^{-1})}{dt}=\frac32 +F+\frac{4\pi S}{3H^2}.
\end{eqnarray}
From (\ref{H}) using the inequality $\rho\geq S$ and from (\ref{H2}) using the fact that both $F$ and $S$ are positive we obtain:
\begin{eqnarray*}
\frac32\leq \frac{d(H^{-1})}{dt}\leq 3,
\end{eqnarray*}
which integrating leads to
\begin{eqnarray}
\frac{1}{3t-\frac32 t_0}\leq H\leq \frac{2}{3}t^{-1},\label{estH}
\end{eqnarray}
where we used \eqref{choice}. We remark that the estimate \eqref{estH} holds for any classical solution, and this estimate will be improved by assuming smallness.

We collect several evolution equations which will be used to obtain detailed asymptotic behaviour in later sections. The evolution equation for $F$ is
\begin{eqnarray}\label{F}
\dot{F}=-3H\bigg(F\bigg(1-\frac23 F -\frac{8\pi S}{9H^2}\bigg)-\frac{4\pi}{3H^3}S_{ab}\sigma^{ab}\bigg).
\end{eqnarray}
For the Bianchi I case the conservation of the number of particles \eqref{conspart} is simply
\begin{eqnarray}\label{N0}
\dot{N}^0=-3HN^0.
\end{eqnarray}
Another useful relation concerns the determinant of the metric $g$ ((2.30) of \cite{RA}):
\begin{eqnarray}\label{det}
\frac{d(\log \det g)}{dt}= -2 k=6H.
\end{eqnarray}

\section{Estimates of the collision kernel}\label{Sec_Estimates}
In this section we collect several elementary lemmas concerning estimates of quantities related to the Boltzmann collision operator. The proofs are straightforward, and the lemmas will be used in later sections to estimate the distribution function of the Boltzmann equation.

\begin{lemma}
The following estimates hold:
\begin{align}
&s=4+h^2,\quad 2\leq\sqrt{s},\quad h\leq\sqrt{s}\leq 2\sqrt{p^0q^0},\label{e1}\\
&h^2\leq u_au^a\leq p^0q^0h^2,\label{e2}\\
&\frac{q^0}{p^0}+\frac{p^0}{q^0} \leq s\leq (n^0)^2-(n_ae^a_b\xi^b)^2.\label{e3}
\end{align}
\end{lemma}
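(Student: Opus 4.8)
The plan is to pass to the orthonormal frame, where by \eqref{metric} one has $g_{00}=-1$, $g_{0a}=0$, so that $p^0=\hat p^0=\sqrt{1+|\hat p|^2}$, $q^0=\hat q^0=\sqrt{1+|\hat q|^2}$ ($|\hat p|$ denoting the Euclidean norm of the spatial momentum), and to reduce every assertion to elementary algebra in the frame components, since $h$, $s$, $p^0$, $q^0$ and the contractions appearing are scalars. Writing $p\cdot q:=\eta_{\mu\nu}\hat p^\mu\hat q^\nu=-p^0q^0+\hat p\cdot\hat q$ and using the mass-shell relation $p_\alpha p^\alpha=q_\alpha q^\alpha=-1$, one expands $s=-n_\alpha n^\alpha$ and $h^2=u_\alpha u^\alpha$ with $n^\alpha=p^\alpha+q^\alpha$, $u^\alpha=p^\alpha-q^\alpha$ to obtain $s=2-2\,p\cdot q$ and $h^2=-2-2\,p\cdot q$, hence the first identity $s=4+h^2$. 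Next I would record the reverse Cauchy--Schwarz inequality for future-directed unit timelike vectors, $-p\cdot q=p^0q^0-\hat p\cdot\hat q\ge p^0q^0-|\hat p||\hat q|\ge 1$, which is itself immediate from $(p^0q^0)^2=(1+|\hat p|^2)(1+|\hat q|^2)\ge(1+|\hat p||\hat q|)^2$. It gives $h^2\ge 0$ (so $h$ is real and $h\le\sqrt s$) and $\sqrt s\ge 2$ at once, while $s=2+2p^0q^0-2\hat p\cdot\hat q\le 2+2p^0q^0+2|\hat p||\hat q|\le 4p^0q^0$ (using $1+|\hat p||\hat q|\le p^0q^0$) yields $\sqrt s\le 2\sqrt{p^0q^0}$. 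This settles \eqref{e1}.

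For \eqref{e2}, note that $u_au^a=g^{ab}u_au_b$ equals $|\hat p-\hat q|^2$ in the orthonormal frame, i.e. $u_au^a=h^2+(p^0-q^0)^2\ge h^2$, which is the lower bound. The upper bound $u_au^a\le p^0q^0h^2$ is the one genuine computation: it is equivalent to $(p^0-q^0)^2\le(p^0q^0-1)h^2$, into which I would substitute $h^2=2(p^0q^0-\hat p\cdot\hat q-1)$ and, since $p^0q^0-1\ge 0$, replace $\hat p\cdot\hat q$ by its upper bound $|\hat p||\hat q|$, so that it remains to prove
\[
2(p^0q^0-1)\big(p^0q^0-|\hat p||\hat q|-1\big)\ \ge\ (p^0-q^0)^2.
\]
Setting $a=p^0$, $b=q^0$, $c=|\hat p||\hat q|$ and using $c^2=(a^2-1)(b^2-1)$, i.e. $a^2+b^2=a^2b^2-c^2+1$ and hence $(a-b)^2=(ab-1)^2-c^2$, the difference of the two sides collapses exactly to the perfect square $(ab-1-c)^2\ge 0$. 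Carrying out this completion of the square correctly, without getting lost in the expansion, is the main obstacle; everything else is bookkeeping.

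For \eqref{e3}, the relations $p^a=e^a_b\hat p^b$, $p_a=\theta^b_a\hat p_b$ and $\theta^c_ae^a_b=\delta^c_b$ give $n^0=\hat n^0$ and $n_ae^a_b\xi^b=\hat n_b\xi^b=\hat n\cdot\xi$, so that
\[
(n^0)^2-(n_ae^a_b\xi^b)^2=(\hat n^0)^2-(\hat n\cdot\xi)^2\ \ge\ (\hat n^0)^2-|\hat n|^2=s
\]
by Cauchy--Schwarz and $|\xi|=1$; this is the right-hand inequality of \eqref{e3}. For the left-hand one I would feed in the already-established \eqref{e2}: $(p^0+q^0)^2=(p^0-q^0)^2+4p^0q^0\le(p^0q^0-1)h^2+4p^0q^0\le p^0q^0(h^2+4)=p^0q^0\,s$, whence $s\ge(p^0+q^0)^2/(p^0q^0)\ge\big((p^0)^2+(q^0)^2\big)/(p^0q^0)=q^0/p^0+p^0/q^0$. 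Thus the logical order is: the identity $s=4+h^2$ and the elementary parts of \eqref{e1}; then the square-completion for the hard half of \eqref{e2}; then \eqref{e3}, whose lower bound is an immediate corollary of \eqref{e2} and whose upper bound is the Cauchy--Schwarz estimate above.
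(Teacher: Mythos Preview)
Your proof is correct. The overall structure---passing to the orthonormal frame and reducing to elementary inequalities in $p^0,q^0,\hat p,\hat q$---is the same as the paper's, and for \eqref{e1} and the upper bound in \eqref{e3} the arguments essentially coincide. The two places where you deviate are the upper bound in \eqref{e2} and the lower bound in \eqref{e3}. For the former, the paper rationalizes $h^2=2p^0q^0-2(1+g_{ab}p^aq^b)$ as
\[
h^2=\frac{2\big((p^0q^0)^2-(1+g_{ab}p^aq^b)^2\big)}{p^0q^0+1+g_{cd}p^cq^d}\ \ge\ \frac{(p^0q^0)^2-(1+g_{ab}p^aq^b)^2}{p^0q^0}\ \ge\ \frac{u_au^a}{p^0q^0},
\]
using $(1+g_{ab}p^aq^b)^2\le(p^0q^0)^2$ twice; you instead rephrase the inequality as $(p^0-q^0)^2\le(p^0q^0-1)h^2$ and close it by the neat identity $2(ab-1)(ab-c-1)-(a-b)^2=(ab-1-c)^2$. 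For the lower bound in \eqref{e3} the paper gives an independent rationalization argument for $s\ge p^0/q^0+q^0/p^0$, whereas you observe that it is an immediate corollary of the hard half of \eqref{e2} via $(p^0+q^0)^2=(p^0-q^0)^2+4p^0q^0\le p^0q^0\,s$. Your route is a little more economical in that it reuses \eqref{e2} rather than redoing a similar computation; the paper's route keeps \eqref{e3} logically independent of \eqref{e2}. Both are perfectly valid.
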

\begin{proof}
It is easy to compute that
\begin{eqnarray*}
s=4+h^2,
\end{eqnarray*}
which obviously gives us that
\begin{eqnarray*}
\sqrt{s}\geq h, \quad \sqrt{s} \geq 2.
\end{eqnarray*}
For the inequalities in \eqref{e1} and \eqref{e2}, we use
\begin{align*}
(1\pm g_{ab}p^aq^b)^2&=1\pm 2g_{ab}p^aq^b+(g_{ab}p^aq^b)^2\\
&\leq 1+g_{ab}p^ap^b+g_{ab}q^aq^b+(g_{ab}p^ap^b)(g_{cd}q^cq^d)=(p^0)^2(q^0)^2.
\end{align*}
Then, the last inequality of \eqref{e1} is verified as follows:
\begin{align*}
s&=-(p_\alpha+q_\alpha)(p^\alpha+q^\alpha)=2+2p^0q^0-2g_{ab}p^aq^b\leq 4p^0q^0.
\end{align*}
Since $h^2=-(u^0)^2+u_au^a$, the first inequality of \eqref{e2} is clear, and the second inequality is proved in a similar way:
\begin{align*}
h^2&=2p^0q^0-2(1+g_{ij}p^iq^j)\\
&=2\bigg(\frac{(p^0)^2(q^0)^2-(1+g_{ab}p^aq^b)^2}{p^0q^0+1+g_{cd}p^cq^d}\bigg)
\geq\frac{(p^0)^2(q^0)^2-(1+g_{ab}p^aq^b)^2}{p^0q^0}\\
&=\frac{1+g_{ab}p^ap^b+g_{ab}q^aq^b+(g_{ab}p^ap^b)(g_{cd}q^cq^d)
-1-2g_{ab}p^aq^b-(g_{ab}p^aq^b)^2}{p^0q^0}\\
&\geq \frac{g_{ab}(p^a-q^a)(p^b-q^b)}{p^0q^0},
\end{align*}
where we used the fact that $g_{0b}=0$.

For the estimate \eqref{e3}, we use the definition of $s$ as follows:
\begin{align*}
s&=(n^0)^2-g_{ab}n^an^b\\
&=(n^0)^2-(g_{ab}n^an^b)(g_{cd}e^c_i\xi^ie^d_j\xi^j)
\leq (n^0)^2-(g_{ab}n^ae^b_c\xi^c)^2,
\end{align*}
where we used $g_{ab}e^a_ce^b_d=\eta_{cd}$, and this
gives the upper bound of $s$.
The lower bound for $s$ in \eqref{e3} is proved as
\begin{align*}
s&=2+2p^0q^0-2g_{ab}p^aq^b
\geq 2+2p^0q^0-2\sqrt{g_{ab}p^ap^b}\sqrt{g_{cd}q^cq^d}\\
&=2+2\bigg(\frac{(p^0)^2(q^0)^2-g_{ab}p^ap^bg_{cd}q^cq^d}{p^0q^0+\sqrt{g_{ab}p^ap^b}\sqrt{g_{cd}q^cq^d}}\bigg)\\
&\geq 2+\frac{1+g_{ab}p^ap^b+g_{ab}q^aq^b}{p^0q^0}
=\frac{2p^0q^0+(p^0)^2+(q^0)^2-1}{p^0q^0}\geq\frac{q^0}{p^0}+\frac{p^0}{q^0},
\end{align*}
 and this completes the proof of the lemma.
\end{proof}

\begin{lemma}\label{Lem_partial_p}
The following estimates hold:
\begin{align}
&\partial_{p_i}p^0=\frac{p^i}{p^0},\label{ed1}\\
&\partial_{p_i}h=\frac{q^0}{h}\bigg(\frac{p^i}{p^0}-\frac{q^i}{q^0}\bigg),\label{ed2}\\
&\partial_{p_i}\sqrt{s}=\frac{q^0}{\sqrt{s}}
\bigg(\frac{p^i}{p^0}-\frac{q^i}{q^0}\bigg),\label{ed3}\allowdisplaybreaks\\
&\partial_{p_i}\sqrt{(n^0)^2-(n_ae^a_b\xi^b)^2}=\frac{1}{\sqrt{(n^0)^2-(n_ae^a_b\xi^b)^2}}\nonumber\\
&\quad\hspace{2cm}\times\bigg(q^0\bigg(\frac{p^i}{p^0}-\frac{q_ae^a_b\xi^be^i_c\xi^c}{q^0}\bigg)
+p^0\bigg(\frac{p^i}{p^0}-\frac{p_ae^a_b\xi^be^i_c\xi^c}{p^0}\bigg)\bigg),\label{ed4}\\
&\partial_{p_i}\bigg[\frac{n_ae^a_b\omega^bn_j}{n_cn^c}\bigg]
=\frac{e^i_a{\omega}^an_j}{n_bn^b}
+\frac{n_ae^a_b{\omega}^b\delta^i_j}{n_cn^c}
-2\frac{n_ae^a_b{\omega}^bn_jg^{ic}n_c}{(n_dn^d)^2}.\label{ed5}
\end{align}
\end{lemma}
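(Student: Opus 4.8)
The statement collects five derivative identities, and the plan is to prove each one directly by differentiating the defining expression, using the single basic building block $\partial_{p_i}p^0 = p^i/p^0$. I would establish that first identity \eqref{ed1} at the outset: since $p^0 = \sqrt{1+g^{cd}p_cp_d}$ and $\partial_{p_i}(g^{cd}p_cp_d) = 2g^{ic}p_c = 2p^i$, the chain rule gives $\partial_{p_i}p^0 = g^{ic}p_c/p^0 = p^i/p^0$. This is the only genuinely new computation; everything else is bookkeeping built on top of it, together with the fact that $q_*$ (hence $q^0$, $q^i$, $q_a$) is held fixed when differentiating in $p_i$.

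For \eqref{ed2}, I would start from $h^2 = -(u^0)^2 + u_au^a$ with $u^0 = p^0-q^0$ and recall from the proof of the previous lemma that $h^2 = 2p^0q^0 - 2(1+g_{ab}p^aq^b)$; actually the cleanest route is $h^2 = -(p^0-q^0)^2 + g^{ab}(p_a-q_a)(p_b-q_b)$, so $2h\,\partial_{p_i}h = -2(p^0-q^0)\frac{p^i}{p^0} + 2g^{ib}(p_b-q_b) = -2(p^0-q^0)\frac{p^i}{p^0} + 2(p^i - g^{ib}q_b)$. Rearranging the right-hand side into $2q^0\big(\frac{p^i}{p^0} - \frac{q^i}{q^0}\big)$ is a short algebraic step once one notes $g^{ib}q_b = q^i$ and collects terms over $p^0$; dividing by $2h$ yields \eqref{ed3}... sorry, \eqref{ed2}. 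The identity \eqref{ed3} is then immediate: $\sqrt{s} = \sqrt{4+h^2}$ by \eqref{e1}, so $\partial_{p_i}\sqrt{s} = \frac{h}{\sqrt{s}}\partial_{p_i}h = \frac{q^0}{\sqrt{s}}\big(\frac{p^i}{p^0}-\frac{q^i}{q^0}\big)$.

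The main obstacle is \eqref{ed4}, because the argument $(n^0)^2 - (n_ae^a_b\xi^b)^2$ mixes the zeroth and spatial components and the $e^a_b$ depend on $g_{ab}$ only (not on $p_*$), so they pass through the derivative as constants. I would write $n^0 = p^0+q^0$, $n_a = p_a+q_a$, and differentiate: the first term contributes $2n^0\frac{p^i}{p^0}$, and the second contributes $-2(n_ae^a_b\xi^b)\,e^i_c\xi^c$ since $\partial_{p_i}n_a = \delta^i_a$. Dividing by $2\sqrt{(n^0)^2-(n_ae^a_b\xi^b)^2}$ gives $\frac{1}{\sqrt{\cdots}}\big(n^0\frac{p^i}{p^0} - n_ae^a_b\xi^b e^i_c\xi^c\big)$, and the claimed form follows by splitting $n^0 = p^0+q^0$ and $n_ae^a_b\xi^b = p_ae^a_b\xi^b + q_ae^a_b\xi^b$ and regrouping into the two symmetric pieces $q^0\big(\frac{p^i}{p^0} - \frac{q_ae^a_b\xi^b e^i_c\xi^c}{q^0}\big)$ and $p^0\big(\frac{p^i}{p^0} - \frac{p_ae^a_b\xi^b e^i_c\xi^c}{p^0}\big)$; the bookkeeping here is the only place one has to be careful about index placement. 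Finally \eqref{ed5} is the quotient rule applied to $\frac{n_ae^a_b\omega^b n_j}{n_cn^c}$ with $n_cn^c = g^{cd}n_cn_d$: using $\partial_{p_i}n_a = \delta^i_a$, $\partial_{p_i}n_j = \delta^i_j$, and $\partial_{p_i}(g^{cd}n_cn_d) = 2g^{ic}n_c$, the numerator-derivative gives $e^i_a\omega^a n_j + n_ae^a_b\omega^b\delta^i_j$ over $n_cn^c$, and the denominator-derivative gives $-2\frac{n_ae^a_b\omega^b n_j\, g^{ic}n_c}{(n_dn^d)^2}$, which is exactly \eqref{ed5}. No smallness or asymptotic input is needed anywhere; these are pointwise algebraic identities valid for any admissible metric.
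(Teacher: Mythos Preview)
Your proposal is correct and follows exactly the same approach as the paper: the paper's proof simply says these are ``direct calculations,'' works out \eqref{ed1} as an example, and leaves the rest to the reader. You have filled in those details accurately, and in particular your algebraic regroupings for \eqref{ed2} and \eqref{ed4} are right.
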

\begin{proof}
The results of this lemma are direct calculations. For instance, \eqref{ed1}
is obtained as follows:
\begin{align*}
\partial_{p_i}p^0=\partial_{p_i}\sqrt{1+p_ap^a}
=\frac{\delta^i_ap^a}{\sqrt{1+p_bp^b}}=\frac{p^i}{p^0},
\end{align*}
where $\delta^i_a$ is the Kronecker delta. The other identities are similarly obtained,
and we skip the proof.
\end{proof}

\begin{lemma}\label{lem_weight}
Let $p_*$ and $q_*$ be given. Suppose that $p_*'$ and $q_*'$ are post-collision momenta with a parameter $\omega$ which satisfies the assumption on the scattering kernel. Then, we have
\begin{eqnarray*}
\bar{g}^{ab}(p_ap_b+q_aq_b-p'_ap'_b-q'_aq'_b)\leq C,
\end{eqnarray*}
where $\bar{g}^{ab}$ is defined by \eqref{bar}.
\end{lemma}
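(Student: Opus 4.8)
The plan is to exploit the conservation of energy in an orthonormal frame together with the support condition on the scattering kernel. Recall that in the orthonormal frame the post-collision momenta satisfy $\hat p'^0+\hat q'^0=\hat p^0+\hat q^0$, so the quantity $\hat n^0=\hat p^0+\hat q^0$ is a collision invariant. Since $\hat p^0=p^0$ and $\hat q^0=q^0$ (the zeroth frame vector is $\partial/\partial t$), and since $(p^0)^2=1+g^{ab}p_ap_b=1+\eta^{ab}\hat p_a\hat p_b=1+|\hat p|^2$, we have the exact identities
\begin{eqnarray*}
g^{ab}(p_ap_b+q_aq_b-p'_ap'_b-q'_aq'_b)
=|\hat p|^2+|\hat q|^2-|\hat p'|^2-|\hat q'|^2
=((p^0)^2-1)+((q^0)^2-1)-((p'^0)^2-1)-((q'^0)^2-1).
\end{eqnarray*}
Writing $a=p^0,\ b=q^0,\ a'=p'^0,\ b'=q'^0$ with $a'+b'=a+b$, the right-hand side equals $a^2+b^2-a'^2-b'^2 = (a+b)^2-2ab-((a'+b')^2-2a'b') = 2(a'b'-ab)$. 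Thus the first key step is the clean algebraic reduction
\begin{eqnarray*}
g^{ab}(p_ap_b+q_aq_b-p'_ap'_b-q'_aq'_b)=2(p'^0q'^0-p^0q^0).
\end{eqnarray*}

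The second step is to bound $p'^0q'^0-p^0q^0$ using the support assumption. Here I would use the relative momentum $h$ (a collision invariant: $h^2=\hat u_\alpha\hat u^\alpha=\hat u'_\alpha\hat u'^\alpha$) and the estimate \eqref{e2}, which in the orthonormal frame reads $h^2\le\hat u_a\hat u^a=|\hat u|^2\le p^0q^0h^2$, and likewise $|\hat u'|^2\le p'^0q'^0 h^2$. Combined with the third bullet of the scattering-kernel assumption, $|\hat u|^2\le|\hat u'|^2+Ct^{-4/3}$, and with $h^2\ge 1$ being false in general — instead I should go the other way. From $|\hat u|^2\ge h^2$ and $|\hat u'|^2\le p'^0q'^0h^2$ together with the support condition, $h^2\le |\hat u|^2\le |\hat u'|^2 + Ct^{-4/3}\le p'^0q'^0 h^2 + Ct^{-4/3}$, which is the wrong direction too. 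The correct route is: the support condition gives $|\hat u|^2-|\hat u'|^2\le Ct^{-4/3}$, and by \eqref{e2} applied to primed and unprimed momenta, $|\hat u|^2\ge h^2$ and $|\hat u'|^2\le p'^0q'^0h^2$; but what we actually want bounded from above is $p'^0q'^0 h^2$ in terms of $p^0q^0h^2$. Use instead the lower bound $|\hat u'|^2 \ge$ (something). I expect the clean path is: by \eqref{e2}, $p'^0q'^0-1 \le (|\hat u'|^2 - h^2)/h^2 + (p'^0q'^0-1)$ — this is circular, so the intended argument must be that $|\hat u|^2 - |\hat u'|^2 = (p_\alpha-q_\alpha)(p^\alpha-q^\alpha)$-type expansion directly yields $|\hat u|^2-|\hat u'|^2 = h^2(p^0q^0 - p'^0q'^0) + (\text{terms involving }(p^0-q^0)^2\text{ etc.})$. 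Working this out from $|\hat u|^2 = h^2+(p^0-q^0)^2$ (since $h^2=-(p^0-q^0)^2+|\hat u|^2$), we get $|\hat u|^2-|\hat u'|^2 = (p^0-q^0)^2-(p'^0-q'^0)^2 = (p^0+q^0)^2-4p^0q^0-((p'^0+q'^0)^2-4p'^0q'^0) = 4(p'^0q'^0-p^0q^0)$, using again that $p^0+q^0$ is invariant. Hence
\begin{eqnarray*}
p'^0q'^0-p^0q^0 = \tfrac14(|\hat u|^2-|\hat u'|^2)\le \tfrac14 Ct^{-4/3}
\end{eqnarray*}
by the third bullet of the scattering-kernel assumption.

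Combining the two steps, $g^{ab}(p_ap_b+q_aq_b-p'_ap'_b-q'_aq'_b) = 2(p'^0q'^0-p^0q^0)\le \tfrac12 Ct^{-4/3}$, and multiplying by $t^{4/3}$ — since $\bar g^{ab}=t^{4/3}g^{ab}$ by \eqref{bar} — yields $\bar g^{ab}(p_ap_b+q_aq_b-p'_ap'_b-q'_aq'_b)\le \tfrac12 C$, which is the claim (after relabelling the constant). The main obstacle, and the only place where any care is needed, is the bookkeeping in the second step: one must be sure that $|\hat u|^2-|\hat u'|^2$ really equals $4(p'^0q'^0-p^0q^0)$ exactly, which relies crucially on the conservation laws $\hat p'^0+\hat q'^0=\hat p^0+\hat q^0$ and the Minkowski-frame identity $|\hat u|^2 = h^2 + (\hat p^0-\hat q^0)^2$; both hold in the orthonormal frame by construction, and the transition back to covariant variables is just the substitution $g^{ab}=t^{-4/3}\bar g^{ab}$.
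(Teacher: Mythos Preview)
Your proposal is correct and follows essentially the same route as the paper: pass to the orthonormal frame, reduce $g^{ab}(p_ap_b+q_aq_b-p'_ap'_b-q'_aq'_b)$ to $2(\hat p'^0\hat q'^0-\hat p^0\hat q^0)$ via energy conservation, then use the exact identity $|\hat u|^2=h^2+(\hat p^0-\hat q^0)^2$ (equivalently, the paper's $h^2+(\hat n^0)^2=|\hat u|^2+4\hat p^0\hat q^0$) together with conservation of $h$ and $\hat n^0$ to get $4(\hat p'^0\hat q'^0-\hat p^0\hat q^0)=|\hat u|^2-|\hat u'|^2$, and finish with the support assumption and the scaling $\bar g^{ab}=t^{4/3}g^{ab}$. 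The detour through the inequalities of \eqref{e2} was unnecessary --- as you correctly recognized, the identity $|\hat u|^2-|\hat u'|^2=4(\hat p'^0\hat q'^0-\hat p^0\hat q^0)$ is exact and is all that is needed --- but the argument you settle on matches the paper's proof.
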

\begin{proof}
We first consider $g^{ab}$ to take an orthonormal frame such that
\[
g^{ab}(p_ap_b+q_aq_b-p'_ap'_b-q'_aq'_b)
=|\hat{p}|^2+|\hat{q}|^2-|\hat{p}'|^2-|\hat{q}'|^2.
\]
Then, we use the energy conservation of particles to have
\[
|\hat{p}|^2+|\hat{q}|^2-|\hat{p}'|^2-|\hat{q}'|^2
=(\hat{p}^0)^2+(\hat{q}^0)^2-(\hat{p}'^0)^2-(\hat{q}'^0)^2
=2\hat{p}'^0\hat{q}'^0-2\hat{p}^0\hat{q}^0.
\]
Since $h$ is an invariant quantity in the collision process, we have
\[
h^2=(\hat{p}_\alpha-\hat{q}_\alpha)(\hat{p}^\alpha-\hat{q}^\alpha)=(\hat{p}'_\alpha-\hat{q}'_\alpha)(\hat{p}'^\alpha-\hat{q}'^\alpha),
\]
and then 
\[
h^2+(\hat{p}^0+\hat{q}^0)^2=|\hat{u}|^2+4\hat{p}^0\hat{q}^0=|\hat{u}'|^2+4\hat{p}'^0\hat{q}'^0.
\]
By the assumption on the scattering kernel, we have
\[
4\hat{p}'^0\hat{q}'^0-4\hat{p}^0\hat{q}^0=|\hat{u}|^2-|\hat{u}'|^2\leq Ct^{-\frac43}.
\]
Since $g^{ab}=t^{-\frac43}\bar{g}^{ab}$, this concludes the lemma.
\end{proof}

\section{Existence of solutions of the Einstein equations}\label{Sec_Einstein}
In this section we study the Einstein equations for prescribed matter terms.
We will assume that a distribution function $f$ is given, and the energy-momentum
tensor is derived from \eqref{rho} and \eqref{S_ab},
and will show that classical solutions to the Einstein equations \eqref{evolution1}--\eqref{evolution2}
exist globally in time and have certain asymptotic behaviours.
In this section we assume that the distribution function satisfies in an orthonormal frame
the following estimate:
\begin{equation}
f(t,p_*)=\hat{f}(t,\hat{p})\leq C_f\exp(-t^{\frac54}|\hat{p}|^2),\label{assump_distribution}
\end{equation}
where $C_f$ is a positive constant which will be determined later
Below, and in the rest of the paper, $C$ will denote a universal constant which depends only on initial
data and may change from line to line, but the constants with subscripts, such as $C_f$, will be
understood as fixed ones.

\subsection{Global existence of solutions}
Global-in-time existence of solutions to the Einstein equations with a given distribution function is easily proved by
following the arguments of \cite{CCH}, where the Einstein-Vlasov system has been studied for several different
Bianchi type symmetries. In the Vlasov case, it is usual to assume that the distribution function $f$ has a compact
support, i.e., there exists a positive number $P$ such that if $|p_a|\geq P$, then $f(t,p_*)=0$.
In this case, we can estimate the matter terms \eqref{rho} and \eqref{S_ab} as follows:
\[
\rho\leq AP^4,\quad |S_{ab}|\leq AP^5,
\]
where $A$ is a positive constant depending on $(\det g)^{-1}$ and $g_{ab}$. Hence, as long as
$(\det g)^{-1}$ and $g_{ab}$ are bounded, the matter terms are also bounded.
Iteration scheme then works well, and local-in-time existence follows.
For detailed arguments we refer to \cite{CCH}.
In the Boltzmann case, we do not assume that the distribution function has a compact support,
but instead assume that $f$ decays at infinity. If we assume \eqref{assump_distribution},
then the matter terms are estimated as follows:
\[
\rho=\int_{\bbr^3}\hat{f}(t,\hat{p})(1+|\hat{p}|^2)^{\frac12}d\hat{p}
\leq C_f\int_{\bbr^3}\exp(-t^{\frac54}|\hat{p}|^2)(1+|\hat{p}|^2)^{\frac12}d\hat{p}
\leq CC_ft^{-\frac{15}{8}},
\]
where $\rho$ is written in an orthonormal frame. On the other hand, we use
$p_a=g_{ab}e^b_c\hat{p}^c=\eta_{cb}\theta^b_a\hat{p}^c$ to estimate
$S_{ab}$ as follows:
\begin{align*}
S_{ab}&=(\det g)^{-\frac{1}{2}} \int_{\bbr^3} f(t,p_*)p_a p_b(1+g^{cd}p_{c}p_d)^{-\frac{1}{2}}dp_*
=\int_{\bbr^3}\hat{f}(t,\hat{p})p_ap_b(1+|\hat{p}|^2)^{-\frac{1}{2}}d\hat{p}\\
&=\theta^c_a\theta^d_b\int_{\bbr^3}\hat{f}(t,\hat{p})\hat{p}_c\hat{p}_d(1+|\hat{p}|^2)^{-\frac{1}{2}}d\hat{p}
=\theta^c_a\theta^d_b\hat{S}_{cd},
\end{align*}
where $\hat{p}_c=\eta_{ac}\hat{p}^a$,
and $\hat{S}_{cd}$ is easily estimated:
\begin{equation}
|\hat{S}_{cd}|\leq\int_{\bbr^3}\hat{f}(t,\hat{p})|\hat{p}_c||\hat{p}_d|(1+|\hat{p}|^2)^{-\frac{1}{2}}d\hat{p}
\leq C_f\int_{\bbr^3}\exp(-t^{\alpha}|\hat{p}|^2)
|\hat{p}|^2d\hat{p}\leq CC_ft^{-\frac{25}{8}}.\label{estS_ab}
\end{equation}
By the estimates \eqref{est_principal_minors},
the quantities $\theta^a_b$ are also bounded as long as $(\det g)^{-1}$ and $g_{ab}$ are bounded,
and we conclude that the matter terms are bounded. Applying the same iteration as in \cite{CCH},
we obtain local-in-time existence of solutions $g_{ab}$ and $k_{ab}$ to the Einstein equations.
Now, we can extend the local existence to global existence by the arguments exactly as in \cite{CCH},
and obtain the following global-in-time existence:

\begin{lemma}\label{Lem_existence_einstein}
Suppose that $\hat{f}=\hat{f}(t,\hat{p})$ is given such that it is $C^1$ and satisfies \eqref{assump_distribution}.
If $g_{ab}(t_0)$ and $k_{ab}(t_0)$ are initial data of the Einstein equations \eqref{evolution1}--\eqref{evolution2}
satisfying the constraints \eqref{constraint1}--\eqref{constraint2},
then there exist unique classical solutions $g_{ab}$ and $k_{ab}$ on $[t_0,\infty)$.
\end{lemma}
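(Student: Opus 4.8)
The plan is to view the Einstein evolution equations \eqref{evolution1}--\eqref{evolution2} as a first-order ODE system for the pair $(g_{ab},k_{ab})$ on the open set $\mathcal{U}=\{g_{ab}\ \text{symmetric positive definite}\}$, with the sources $\rho$, $S_{ab}$ replaced by the explicit momentum integrals \eqref{rho}, \eqref{S_ab} of the prescribed $\hat f$. On $\mathcal{U}$ the right-hand sides depend in a $C^1$ way on $(g_{ab},k_{ab})$ — the only $g$-dependence of the sources is through $(\det g)^{-1/2}$, $g^{ab}$ and the Cholesky frame $\theta^a_b$, all smooth on $\mathcal{U}$ — and continuously on $t$, since $\hat f\in C^1$ and satisfies the Gaussian bound \eqref{assump_distribution}. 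Hence Picard--Lindel\"of, equivalently the iteration scheme of \cite{CCH}, gives a unique local classical solution on a maximal interval $[t_0,T)$, and the standard continuation criterion says $T<\infty$ is possible only if along the solution $g_{ab}$ degenerates or one of $g_{ab}$, $(\det g)^{-1}$, $k_{ab}$ blows up as $t\uparrow T$. So it suffices to bound these quantities on every $[t_0,T']$ with $T'<T$.

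For the a priori bounds I would first invoke \eqref{estH}, valid on $[t_0,T)$, to get $0<H\le\tfrac23 t^{-1}$; inserting this into \eqref{det}, $\tfrac{d}{dt}\log\det g=6H$, shows $\det g$ is nondecreasing and bounded above by $\det g(t_0)(t/t_0)^4$, so both $\det g$ and $(\det g)^{-1}$ are bounded on $[t_0,T']$. Next, the Hamiltonian constraint \eqref{constraint1}, rewritten via \eqref{tracefree}, gives $\sigma_{ab}\sigma^{ab}=6H^2-16\pi\rho\le 6H^2$; since $\sum_{a,b}\sigma_{ab}^2\le\lambda_{\max}(g)^2\,\sigma_{ab}\sigma^{ab}$ and $\lambda_{\max}(g)\le\tr g:=\delta^{ab}g_{ab}$, this yields $\sum_{a,b}\sigma_{ab}^2\le 6H^2(\tr g)^2$. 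Combining with \eqref{evolution1} written as $\dot g_{ab}=-2\sigma_{ab}+2Hg_{ab}$ gives $\tfrac{d}{dt}\tr g\le C H\tr g$, so Gronwall together with $H\le\tfrac23 t^{-1}$ bounds $\tr g$ polynomially on $[t_0,T']$; then $\lambda_{\min}(g)\ge\det g/(\tr g)^2$ bounds $g^{ab}$, hence $\theta^a_b$ via \eqref{est_principal_minors}, hence $\sigma_{ab}$ and $k_{ab}=\sigma_{ab}-Hg_{ab}$. The sources are then controlled: $\rho\le CC_f t^{-15/8}$ is immediate from \eqref{assump_distribution} in the orthonormal frame (with no $g$-dependence), and $S_{ab}=\theta^c_a\theta^d_b\hat S_{cd}$ with $|\hat S_{cd}|\le CC_f t^{-25/8}$ is bounded once $\theta^a_b$ is. Hence the right-hand sides of \eqref{evolution1}--\eqref{evolution2} are bounded on every $[t_0,T']$ with $T'<T$, so the solution extends past $T'$, forcing $T=\infty$; uniqueness is the usual ODE uniqueness from the $C^1$-dependence on $(g_{ab},k_{ab})$.

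The step I expect to be the main obstacle is exactly the a priori control of the second fundamental form. The shear $\sigma_{ab}$ is naturally estimated only in the metric-weighted norm $\sigma_{ab}\sigma^{ab}$, so turning the constraint bound $\sigma_{ab}\sigma^{ab}\le 6H^2$ into pointwise bounds on the components $g_{ab}$ and $k_{ab}$ is circular unless one closes it with the Gronwall inequality for $\tr g$ above — and that argument relies crucially on the monotone growth of $\det g$ from \eqref{det} (to bound $(\det g)^{-1}$, hence $\lambda_{\min}(g)$, from below) and on the sharp decay $H\le\tfrac23 t^{-1}$ from \eqref{estH} (so that $\int_{t_0}^{T'}H\,dt$ is finite). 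Everything else — the local theory and the source estimates — is routine, essentially a quotation of \cite{CCH} combined with the quadrature bounds already prepared in Section~\ref{Sec_Einstein}.
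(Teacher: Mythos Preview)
Your proposal is correct and takes essentially the same approach as the paper: the paper's proof is little more than a reference to \cite{CCH}, noting that the matter terms are bounded as long as $(\det g)^{-1}$ and $g_{ab}$ are, and your argument simply makes explicit the a priori estimates (via \eqref{estH}, \eqref{det}, the constraint bound $\sigma_{ab}\sigma^{ab}\le 6H^2$, and Gronwall on $\operatorname{tr} g$) that \cite{CCH} supplies. One caveat worth flagging is that both your argument and the paper's tacitly use propagation of the Hamiltonian constraint (through \eqref{estH} and $\sigma_{ab}\sigma^{ab}\le 6H^2$), which for a \emph{prescribed} $\hat f$ not solving the Boltzmann equation is not automatic; this is harmless in the intended iteration but is a point neither treatment addresses.
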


In the above lemma the constant $C_f$ is not assumed to be small.
In the next section we will show that certain asymptotic behaviours are obtained
by assuming smallness.

\subsection{Bootstrap argument}
The argument which will lead us to our main conclusions is of a bootstrap type.
One has a solution of an evolution equation and assumes that an appropriate norm of
the solution decays in a certain rate. In our case the relevant function is $F$.
Let $[t_0,T)$ with $T<\infty$ be the maximal interval on which the solution has the prescribed time decay rate.
One improves the decay rate such that the assumption would lead to a contradiction,
and concludes that the solution has the desired decay rate globally in time.

\subsubsection{Bootstrap assumption}
The bootstrap assumption is motivated by the fact that due to \cite{Lee2,Lee1,EN} we expect the solutions
to behave as solutions to the Einstein-Vlasov system.
We will assume that the shear $F$ satisfies on an interval $[t_0,T)$
\begin{align}
F\leq C_F (1+t)^{-\frac{3}{2}},\label{bootF}
\end{align}
where $C_F$ is a positive constant which will be determined later.

\subsubsection{Estimate for the Hubble variable $H$}\label{secestH}
Since we are assuming \eqref{assump_distribution}, we have from \eqref{estS_ab} that
\begin{equation}
S\leq CC_ft^{-\frac{25}{8}},\label{estS}
\end{equation}
where $S=g^{ab}S_{ab}=\eta^{ab}\hat{S}_{ab}$.
Then, we use the estimate \eqref{estH} for $H$ to get
\begin{eqnarray}\label{matter}
\frac{4\pi S}{3H^2}\leq CC_f t^{-\frac98}.
\end{eqnarray}
Integrating (\ref{H2}), we have
\begin{eqnarray*}
H=\frac{1}{\frac32 t + I}=\frac23 t^{-1} \frac{1}{1+\frac23 t^{-1} I},
\end{eqnarray*}
where
\begin{eqnarray*}
I =\int_{t_0}^t \bigg(F+\frac{4\pi S}{3H^2}\bigg)(s)ds.
\end{eqnarray*}
On the other hand, using the bootstrap assumption on $F$ and the estimate (\ref{matter}), we obtain
\begin{eqnarray*}
I \leq C(C_F+C_f),
\end{eqnarray*}
where $C$ is a positive constant depending only on the time origin $t_0$.
Thus, we have
\begin{eqnarray}
H \geq \frac23 t^{-1} \frac{1}{1+C(C_F+C_f)t^{-1}},\label{estHImproved}
\end{eqnarray}
which implies that we obtain a better estimate for $H$, namely:
\begin{eqnarray*}
H=\frac23t^{-1}\Big(1+O\Big((C_F+C_f) t^{-1}\Big)\Big).
\end{eqnarray*}

\subsubsection{Estimate of the shear $F$}
Our goal is to obtain an improved decay rate
\begin{equation}
F(t)\leq C_F(1+t)^{-2+\delta}\label{bootImproved}
\end{equation}
on the interval $[t_0,T)$ where $F$ satisfies \eqref{bootF}.
More precisely, we will show that if $F$ satisfies \eqref{bootF},
then there exists a small number $0<\delta<\frac15$ satisfying \eqref{bootImproved},
where the small number $\delta$ does not depend on $T$. This implies that $F$ satisfies \eqref{bootImproved}
globally in time, and completes the bootstrap argument.

We now show that $F$ satisfies \eqref{bootImproved}.
If this is the case, there is nothing more to do. Let us suppose the opposite, namely that
for any $0<\delta<\frac15$, there exist $t_1$ and $t_2$ such that
$t_0\leq t_1<t_2\leq T$ and
\begin{align}
F(t)&\leq C_F(1+t)^{-2+\delta},\quad t\in [t_0,t_1],\nonumber\\
F(t)&\geq C_F(1+t)^{-2+\delta},\quad t\in [t_1,t_2).\label{op}
\end{align}
We consider the second time interval $[t_1,t_2)$.
The evolution equation \eqref{F} is written as
\[
\dot{F}=-3H\bigg(1-\frac{2}{3}F-\frac{8\pi S}{9H^2}-\frac{4\pi}{3H^3}\frac{S_{ab}\sigma^{ab}}{F}\bigg)F,
\]
and the quantities on the left side are estimated as follows:
\begin{align*}
&\frac23 F\leq \frac23 C_F (1+t)^{-\frac32} \leq CC_F,\\
&\frac{8\pi S}{9H^2}\leq CC_f t^{-\frac98}  \leq CC_f,\\
&\frac{4\pi}{3H^3}\frac{S_{ab}\sigma^{ab}}{F}
\leq\frac{4\pi}{3}\frac{\sqrt{S_{ab}S^{ab}}}{H^2} \frac{\sqrt{\sigma_{ab}\sigma^{ab}}}{HF}
\leq C\frac{S}{H^2}F^{-\frac12}\leq CC_ft^{-\frac98}C_F^{-\frac{1}{2}}(1+t)^{1-\frac{\delta}{2}}
\leq CC_fC_F^{-\frac12},
\end{align*}
where in the last inequality we have used (\ref{op}) and the Cauchy-Schwarz inequality.
We now assume that the constants $C_F$ and $C_f$ are small.
Note that in the last inequality although $C_F^{-\frac12}$ might be a big quantity,
we have $C_f$ as a factor as well which is independent and can be chosen such that
the last quantity $C_fC_F^{-\frac12}$ is small enough.
As a consequence we can find a small positive number $\delta_1$ such that the following
differential inequality holds:
\begin{align*}
\dot{F}&\leq-2t^{-1}\frac{1}{1+C(C_F+C_f)}\bigg(1-C\bigg(C_F+C_f+C_fC_F^{-\frac12}\bigg)\bigg)F\\
&\leq (-2+\delta_1)t^{-1}F.
\end{align*}
Integrating the above differential inequality on $[t_1,t]$
with $t\in[t_1,t_2)$, we obtain
\[
F(t)\leq C_F(1+t_1)^{-2+\delta}\bigg(\frac{t}{t_1}\bigg)^{-2+\delta_1},
\]
which holds on $[t_1,t_2)$. Here, we have used $F(t_1)\leq C_F(1+t_1)^{-2+\delta}$.
Since $\delta$ has been chosen arbitrarily and $\delta_1$ does not depend on
$\delta$, $t_1$, and $t_2$, we may assume that $\delta$ has been chosen as $\delta=\delta_1$.
Hence, we obtain
\[
F(t)\leq C_F(1+t)^{-2+\delta_1},
\]
which holds on $[t_1,t_2)$, and this contradicts to \eqref{op}.
Consequently, we have shown that there exists a small number $0<\delta<\frac{1}{5}$ such that
the shear $F$ satisfies \eqref{bootImproved} globally in time.

\subsubsection{Optimal estimate of $F$}
We want to improve the estimate of $F$. For this reason we need an inequality in the other
direction. From (\ref{F}) we have
\begin{eqnarray*}
\dot{F} \geq -3HF+\frac{4\pi}{H^2}S_{ab}\sigma^{ab}.
\end{eqnarray*}
Implementing now the estimates of $F$ and $f$ coming from the results of the bootstrap argument,
we estimate the last term as
\begin{align*}
&\frac{4\pi}{H^2}S_{ab}\sigma^{ab}\leq\frac{8\pi}{H}SF^{\frac12}\\
&\leq 8\pi\bigg(\frac23 t^{-1} \frac{1}{1+C(C_F+C_f)}\bigg)^{-1}
CC_ft^{-\frac{25}{8}}C_F^{\frac12}(1+t)^{-1+\frac{\delta}{2}}\\
&\leq CC_fC_F^{\frac12}(1+C_F+C_f)t^{-\frac{17}{8}}(1+t)^{-1+\frac{\delta}{2}}\\
&\leq CC_fC_F^{\frac12}t^{-\frac{25}{8}+\frac{\delta}{2}},
\end{align*}
where we used \eqref{estHImproved}, \eqref{estS}, \eqref{bootImproved}, and the fact that
the constants $C_F$ and $C_f$ are small. We obtain
\begin{eqnarray*}
\dot{F} \geq -3H F - CC_fC_F^{\frac12}t^{-\frac{25}{8}+\frac{\delta}{2}},
\end{eqnarray*}
and together with the upper bound $H\leq\frac23t^{-1}$,
\begin{eqnarray*}
\dot{F} \geq -2t^{-1} F - CC_fC_F^{\frac12}t^{-\frac{25}{8}+\frac{\delta}{2}},
\end{eqnarray*}
from which we obtain
\begin{eqnarray*}
t^2 F(t)\geq t_0^2 F(t_0)-CC_fC_F^{\frac12}\int^t_{t_0}s^{-\frac{9}{8}+\frac{\delta}{2}}ds,
\end{eqnarray*}
and since $\delta<\frac15$,
\begin{eqnarray*}
F(t)\geq \Big(t_0^2F(t_0)- CC_fC_F^{\frac12}\Big)t^{-2}.
\end{eqnarray*}
Here, $F(t_0)$ should not be zero, and we take $C_f$ small enough such that
\begin{align}\label{U}
F(t)\geq\frac{1}{2}t_0^2F(t_0)t^{-2}.
\end{align}
With this lower bound of $F$, we come back to the evolution equation \eqref{F}
and estimate it again as follows:
\begin{align*}
&\frac23 F\leq CC_F (1+t)^{-2+\delta} \leq CC_Ft^{-2+\delta},\\
&\frac{8\pi S}{9H^2}\leq CC_f t^{-\frac98},\\
&\frac{4\pi}{3H^3}\frac{S_{ab}\sigma^{ab}}{F}
\leq C\frac{S}{H^2}F^{-\frac12}
\leq CC_fF(t_0)^{-\frac12}t^{-\frac18},
\end{align*}
where we used \eqref{bootImproved}, \eqref{matter}, and \eqref{U}.
We now obtain
\begin{align*}
\dot{F}&\leq -2t^{-1}\frac{1}{1+C(C_F+C_f)t^{-1}}
\bigg(1-C\bigg(C_F+C_f+C_fF(t_0)^{-\frac12}\bigg)t^{-\frac18}\bigg)F\\
&\leq-2t^{-1}\Big(1-C\bigg(C_F+C_f+C_fF(t_0)^{-\frac12}\bigg)t^{-\frac18}\Big)F,
\end{align*}
where we used the first inequality of \eqref{estHImproved},
and consequently obtain an improved estimate:
\[
F(t)\leq CF(t_0)t^{-2}.
\]
Let us summarize the results obtained in the bootstrap argument part:
\begin{lemma}\label{Lem_bootstrap}
Suppose that $F(t_0)\neq 0$ is sufficiently small.
There exists a small positive constant $C_f$ such that if $f$ is $C^1$ and satisfies \eqref{assump_distribution},
then there exists a small positive constant $\varepsilon$ such that the following estimates hold:
\begin{eqnarray*}
H(t)&=&\frac{2}{3}t^{-1}(1+O(\varepsilon t^{-1})),\\
F(t)&=&O(\varepsilon t^{-2}),
\end{eqnarray*}
where the constant $\varepsilon$ depends on $F(t_0)$ and $C_f$.
\end{lemma}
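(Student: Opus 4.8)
The plan is to organize the argument exactly along the four subsubsections of the bootstrap part, which together already constitute the proof; what remains is to record how the pieces fit and which quantitative choices of the constants close the loop. First I would fix $F(t_0)\neq 0$ small and, given $C^1$ distribution function $f$ satisfying \eqref{assump_distribution} with a sufficiently small $C_f$, invoke Lemma \ref{Lem_existence_einstein} to produce the global classical solution $g_{ab},k_{ab}$ on $[t_0,\infty)$, so that $H$ and $F$ are well-defined smooth functions of $t$. Then I would introduce the maximal interval $[t_0,T)$ on which the bootstrap assumption \eqref{bootF}, namely $F\leq C_F(1+t)^{-3/2}$, holds; note that by the a priori estimate \eqref{estH} together with \eqref{estS} and the continuity of $F$ at $t_0$, such a $T>t_0$ exists provided $C_F$ is chosen comfortably larger than the initial value.

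The core of the argument is to show $T=\infty$ and, simultaneously, to upgrade the decay rate. For this I would carry out the estimate of \S\ref{secestH}: plugging \eqref{estH}, \eqref{estS} into the integrated form of \eqref{H2} gives $I\leq C(C_F+C_f)$, hence \eqref{estHImproved} and the sharper $H=\tfrac23 t^{-1}(1+O((C_F+C_f)t^{-1}))$. Next, feeding this and the bootstrap bound on $F$ into the evolution equation \eqref{F}, rewritten as $\dot F = -3H(1-\tfrac23 F - \tfrac{8\pi S}{9H^2} - \tfrac{4\pi}{3H^3}F^{-1}S_{ab}\sigma^{ab})F$, and controlling the last term by Cauchy--Schwarz as $C (S/H^2) F^{-1/2}\leq CC_f C_F^{-1/2}$, one produces a differential inequality $\dot F\leq(-2+\delta_1)t^{-1}F$ with $\delta_1$ independent of $T$; integrating contradicts the hypothetical lower bound \eqref{op}, so \eqref{bootImproved} holds, i.e. $F\leq C_F(1+t)^{-2+\delta}$ on all of $[t_0,\infty)$. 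In particular $F$ stays well below $C_F(1+t)^{-3/2}$ for large $t$, so the bootstrap does not break at any finite $T$ and $T=\infty$.

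Finally I would obtain the optimal rate by using the lower-bound direction of \eqref{F}, $\dot F\geq -3HF + \tfrac{4\pi}{H^2}S_{ab}\sigma^{ab}$, estimating the source term with \eqref{estHImproved}, \eqref{estS}, \eqref{bootImproved} as $CC_f C_F^{1/2}t^{-25/8+\delta/2}$; since $\delta<\tfrac15$ this integrates (after multiplying by $t^2$ and using $H\leq\tfrac23 t^{-1}$) to $F(t)\geq(t_0^2 F(t_0) - CC_f C_F^{1/2})t^{-2}$, and choosing $C_f$ small compared with $F(t_0)$ yields $F(t)\geq\tfrac12 t_0^2 F(t_0)t^{-2}$. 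Re-inserting this lower bound for $F$ into the $F^{-1/2}$ term of \eqref{F} now gives the decaying factor $t^{-1/8}$ rather than a constant, so $\dot F\leq -2t^{-1}(1-C(\cdots)t^{-1/8})F$, whence $F(t)\leq CF(t_0)t^{-2}$. Setting $\varepsilon$ to be a suitable multiple of $\max\{F(t_0),C_f\}$ (equivalently, the quantity controlling $C_F$ and $C_f$ in the above) gives the stated $H=\tfrac23 t^{-1}(1+O(\varepsilon t^{-1}))$ and $F=O(\varepsilon t^{-2})$. The main obstacle — and the only genuinely delicate point — is the treatment of the $\tfrac{4\pi}{3H^3}F^{-1}S_{ab}\sigma^{ab}$ term: the factor $F^{-1/2}$ (or $F^{-1}$) threatens to blow up precisely when one wants $F$ to be small, and the resolution is the observation that it always appears multiplied by $S\sim C_f$, so that the dangerous product $C_f C_F^{-1/2}$ (resp.\ $C_f F(t_0)^{-1/2}$) can be made small by choosing $C_f$ small \emph{after} $C_F$ and $F(t_0)$ are fixed; keeping this order of quantifiers straight is the crux of the proof.
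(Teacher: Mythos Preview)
Your proposal is correct and follows essentially the same approach as the paper: you organize the argument along the four bootstrap subsubsections (improved $H$ estimate, improvement of the bootstrap decay rate for $F$ via the differential inequality $\dot F\leq(-2+\delta_1)t^{-1}F$, closing the bootstrap to get $T=\infty$, then the optimal $t^{-2}$ rate via the lower bound for $F$), and you correctly flag the order-of-quantifiers issue for the $C_fC_F^{-1/2}$ term, which is exactly the delicate point the paper singles out.
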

\begin{proof}
If the initial data $F(t_0)$ is small, then by continuity there must be an interval on which $F$ satisfies
$F(t)\leq C_F(1+t)^{-\frac32}$ for some $C_F$. Let $[t_0,T)$ be the maximal interval, where $T$ may be infinite.
Suppose that $T<\infty$. Since $F(t_0)$ is small, the constant $C_F$ can be chosen small,
and we have shown in this section that in fact $F(t)\leq C_F(1+t)^{-2+\delta}<C_F(1+t)^{-\frac32}$ on $[t_0,T)$.
Now, by continuity $F(t)$ remains less than $C_F(1+t)^{-\frac32}$ for a short time after $T$,
but this is a contradiction. Thus, $T=\infty$, and the estimates obtained hold globally in time.
The estimate of $H$ and the improvement of $F$ follow using these estimates.
\end{proof}

\subsection{Estimate of the metric}
In this part we estimate the metric $g_{ab}$ to obtain its asymptotic behaviour.
In the previous lemma we only have asymptotic behaviours of the Hubble variable
and the shear, but this is not enough to study the Boltzmann equation in the next section.
To obtain global-in-time existence of solutions to the Boltzmann equation, we need
an asymptotic behaviour of the metric at late times. We first estimate the metric $g_{ab}$
and $g^{ab}$, and then show that the scaled version $\bar{g}^{ab}$
is almost constant in the sense of quadratic forms.

\subsubsection{Estimate of the metric}
From the definitions (\ref{bar}) and (\ref{tracefree}) we have that
\begin{eqnarray*}
\dot{\bar{g}}_{ab}=2\Big(H-\frac23 t^{-1}\Big)\bar{g}_{ab}-2t^{-\frac{4}{3}}\sigma_{ab}.
\end{eqnarray*}
Using the usual matrix norm we obtain that
\begin{eqnarray*}
\Vert \bar{g}_{ab}(t) \Vert \leq \Vert \bar{g}_{ab}(t_0) \Vert + C\int^t_{t_0}
\Big(\Big\vert H(s)-\frac{2}{3} s^{-1}\Big\vert +(\sigma_{cd}\sigma^{cd}(s))^{\frac{1}{2}}\Big) \Vert \bar{g}_{ab}(s) \Vert ds,
\end{eqnarray*}
and with Gronwall's inequality we obtain
\begin{eqnarray*}
\Vert \bar{g}_{ab}(t) \Vert \leq \Vert \bar{g}_{ab}(t_0) \Vert \exp\bigg(
\int^t_{t_0} \Big\vert H(s)-\frac{2}{3} s^{-1}\Big\vert +(\sigma_{cd}\sigma^{cd}(s))^{\frac{1}{2}}ds\bigg)
\leq C\|\bar{g}_{ab}(t_0)\|,
\end{eqnarray*}
where we used Lemma \ref{Lem_bootstrap} as follows:
\begin{align*}
&\Big|H(s)-\frac23s^{-1}\Big|\leq C\varepsilon s^{-2},\\
&(\sigma_{ab}\sigma^{ab}(s))^{\frac12}=2HF^{\frac12}\leq C\sqrt{\varepsilon}s^{-2}.
\end{align*}
Therefore, $\bar{g}_{ab}$ is bounded for all $t\geq t_0$, and we have
\begin{eqnarray*}
\vert t^{-\frac{4}{3}}g_{ab} \vert \leq C,
\end{eqnarray*}
where we can choose the constant $C$ as a constant which is independent of $\varepsilon$.
Note that $\|\sigma_{ab}\|\leq(\sigma_{cd}\sigma^{cd})^{\frac12}\|g_{ab}\|$ by \cite{CCH}.
Then, we can conclude that
\begin{eqnarray*}
\Vert \sigma_{ab} \Vert \leq C\sqrt{\varepsilon} t^{-\frac{2}{3}},\quad
\sigma_{ab} = O(\sqrt{\varepsilon}t^{-\frac{2}{3}}).
\end{eqnarray*}
Looking again at the derivative of $\bar{g}_{ab}$ and putting the facts which have been obtained together,
we see that
\begin{eqnarray*}
\dot{\bar{g}}_{ab}=O(\sqrt{\varepsilon}t^{-2}).
\end{eqnarray*}
This is enough to conclude that there exists a constant matrix $\mathcal{G}_{ab}$ defined by
\[
\mathcal{G}_{ab}=\bar{g}_{ab}(t_0)+\int_{t_0}^\infty
2\Big(H(s)-\frac23 s^{-1}\Big)\bar{g}_{ab}(s)-2s^{-\frac{4}{3}}\sigma_{ab}(s)ds.
\]
Note that
\begin{align*}
|\mathcal{G}_{ab}-\bar{g}_{ab}(t)|
&\leq C\sqrt{\varepsilon}\int_t^\infty s^{-2}ds\leq C\sqrt{\varepsilon}t^{-1},
\end{align*}
and in this sense we obtain
\[
g_{ab}=t^{+\frac43}\Big(\mathcal{G}_{ab}+O(\sqrt{\varepsilon}t^{-1})\Big).
\]
We take the opportunity to note that in \cite{EN} there is a mistake, where instead of $O(t^{-1})$ there is a wrong $O(t^{-2})$ in (61) and (62).
By similar computations we can see that there exists a constant matrix $\mathcal{G}^{ab}$ such that
the inverse $g^{ab}$ is estimated as follows:
\[
g^{ab}=t^{-\frac43}\Big(\mathcal{G}^{ab}+O(\sqrt{\varepsilon}t^{-1})\Big).
\]
We may choose a small number $\varepsilon$ again, if necessary, and combine the above results
with Lemma \ref{Lem_bootstrap} to obtain the following proposition:
\begin{prop}\label{Prop_Einstein}
Suppose that $F(t_0)\neq 0$ is sufficiently small. There exists a small positive constant $C_f$ such that
if $f$ is $C^1$ and satisfies \eqref{assump_distribution}, then there exist a small positive constant
$\varepsilon$ and constant matrices $\mathcal{G}_{ab}$ and $\mathcal{G}^{ab}$
such that the following estimates hold:
\begin{eqnarray*}
H(t)&=&\frac23 t^{-1}(1+O(\varepsilon t^{-1})),\\
F(t)&=&O(\varepsilon t^{-2}),\\
g_{ab}(t)&=& t^{+\frac43}\Big(\mathcal{G}_{ab}+O(\varepsilon t^{-1})\Big),\\
g^{ab}(t)&=& t^{-\frac43}\Big(\mathcal{G}^{ab}+O(\varepsilon t^{-1})\Big),
\end{eqnarray*}
where the constant $\varepsilon$ depends on $F(t_0)$ and $C_f$.
\end{prop}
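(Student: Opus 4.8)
The plan is to assemble Proposition~\ref{Prop_Einstein} entirely from results already in hand, namely Lemma~\ref{Lem_existence_einstein} (global existence) and Lemma~\ref{Lem_bootstrap} (decay of $H$ and $F$), together with the metric estimates derived in the preceding subsection. First I would invoke Lemma~\ref{Lem_existence_einstein} to guarantee that, given $f$ which is $C^1$ and satisfies \eqref{assump_distribution}, unique classical solutions $g_{ab}$, $k_{ab}$ exist on all of $[t_0,\infty)$; this uses no smallness and fixes the arena on which everything else takes place. Then, assuming $F(t_0)\ne 0$ is small and choosing $C_f$ small, I would apply Lemma~\ref{Lem_bootstrap} to obtain $H(t)=\tfrac23 t^{-1}(1+O(\varepsilon t^{-1}))$ and $F(t)=O(\varepsilon t^{-2})$ with $\varepsilon$ depending only on $F(t_0)$ and $C_f$. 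These are the first two displayed estimates verbatim.

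Next I would feed these into the metric analysis. From $\dot{\bar g}_{ab}=2(H-\tfrac23 t^{-1})\bar g_{ab}-2t^{-4/3}\sigma_{ab}$, the bounds $|H-\tfrac23 s^{-1}|\le C\varepsilon s^{-2}$ and $(\sigma_{ab}\sigma^{ab})^{1/2}=2HF^{1/2}\le C\sqrt\varepsilon s^{-2}$ make the integral $\int_{t_0}^\infty(\cdots)\,ds$ converge; Gronwall then gives $\|\bar g_{ab}(t)\|\le C\|\bar g_{ab}(t_0)\|$, hence $\|\sigma_{ab}\|\le C\sqrt\varepsilon t^{-2/3}$ and $\dot{\bar g}_{ab}=O(\sqrt\varepsilon t^{-2})$. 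The latter is an integrable tail, so $\bar g_{ab}(t)$ converges to a constant matrix $\mathcal G_{ab}$ with $|\mathcal G_{ab}-\bar g_{ab}(t)|\le C\sqrt\varepsilon\int_t^\infty s^{-2}\,ds\le C\sqrt\varepsilon t^{-1}$, i.e.\ $g_{ab}=t^{4/3}(\mathcal G_{ab}+O(\sqrt\varepsilon t^{-1}))$. Running the identical argument for the inverse (using \eqref{est_principal_minors} to keep $g^{ab}$ controlled, since $\det\bar g_{ab}$ stays bounded away from $0$ and $\infty$ once $\bar g_{ab}$ is two-sided bounded) produces $\mathcal G^{ab}$ and the companion estimate $g^{ab}=t^{-4/3}(\mathcal G^{ab}+O(\sqrt\varepsilon t^{-1}))$. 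A final relabelling absorbs $\sqrt\varepsilon$ into $\varepsilon$ (shrinking $C_f$ and $F(t_0)$ once more if needed) to present all four estimates with the same $\varepsilon$.

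Most of this is bookkeeping; the genuine content was already spent in the bootstrap lemma, so the proof of the proposition itself is short. The one point that deserves care — and which I'd flag as the main obstacle if anything is — is the passage from convergence of $\bar g_{ab}$ to the \emph{nondegeneracy} of the limit $\mathcal G_{ab}$ (equivalently the existence of $\mathcal G^{ab}=(\mathcal G_{ab})^{-1}$): one must check that the two-sided bound on $\bar g_{ab}$ is preserved in the limit so that $\mathcal G_{ab}$ is still positive definite, which is what legitimises running the Cholesky-based estimates \eqref{est_principal_minors} uniformly in $t$ and in the limit. This is immediate from the uniform lower bound on $\det\bar g_{ab}$ implied by the uniform upper bound on $\bar g^{ab}$, but it is the link that makes the inverse estimate — and hence the usability of the proposition in the Boltzmann section — actually go through. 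Everything else is a direct quotation of Lemmas~\ref{Lem_existence_einstein} and~\ref{Lem_bootstrap} and the displayed Gronwall computation.
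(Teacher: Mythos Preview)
Your proposal is correct and follows essentially the same route as the paper: invoke Lemma~\ref{Lem_existence_einstein} for global existence, Lemma~\ref{Lem_bootstrap} for the $H$ and $F$ estimates, then run the Gronwall argument on $\dot{\bar g}_{ab}$ to get boundedness, deduce $\dot{\bar g}_{ab}=O(\sqrt{\varepsilon}t^{-2})$, integrate the tail to obtain $\mathcal{G}_{ab}$, do the same for the inverse, and finally absorb $\sqrt{\varepsilon}$ into $\varepsilon$. One small remark: the paper obtains $\mathcal{G}^{ab}$ by running the parallel argument directly on $\bar g^{ab}$ (via $\dot g^{ab}=2k^{ab}$) rather than by inverting $\mathcal{G}_{ab}$, so the nondegeneracy issue you flag does not actually arise in the paper's version---$\mathcal{G}_{ab}$ and $\mathcal{G}^{ab}$ are defined as independent limits and are not a priori asserted to be mutual inverses.
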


\subsubsection{Estimate of the scaled metric}
In this part we estimate the scaled metric $\bar{g}^{ab}$ defined by $g^{ab}=t^{-\frac43}\bar{g}^{ab}$.
From algebra we know that in the sense of \emph{quadratic forms}
$\sigma^{ab} \leq (\sigma_{cd} \sigma^{cd})^{\frac12} g^{ab}$, i.e., for any $p_*$ the following holds:
\begin{eqnarray}\label{qf}
\sigma^{ab}p_ap_b \leq (\sigma_{cd} \sigma^{cd})^{\frac12} g^{ab}p_ap_b.
\end{eqnarray}
Introduce now for technical reasons $\varepsilon_1$ and consider the following expression where
we use the fact that $\dot{g}^{ab}=2k^{ab}=2(\sigma^{ab}-H g^{ab})$:
\begin{eqnarray*}
\frac{d}{dt}(t^{-\varepsilon_1}\bar{g}^{ab})
=\Big(-\varepsilon_1t^{-1}
+\frac43t^{-1}-2H\Big)t^{-\varepsilon_1}\bar{g}^{ab}
+2t^{-\varepsilon_1+\frac43}\sigma^{ab}.
\end{eqnarray*}
Using expression (\ref{qf}) we obtain
\begin{align*}
\frac{d}{dt}(t^{-\varepsilon_1}\bar{g}^{ab}p_ap_b)
&=\Big(-\varepsilon_1t^{-1}+\frac43t^{-1}-2H \Big)t^{-\varepsilon_1}\bar{g}^{ab}p_ap_b
+2t^{-\varepsilon_1+\frac43}\sigma^{ab}p_ap_b\\
&\leq(-\varepsilon_1t^{-1}+C\varepsilon t^{-2})t^{-\varepsilon_1}\bar{g}^{ab}p_ap_b
+4t^{-\varepsilon_1}HF^{\frac12}\bar{g}^{ab}p_ap_b.
\end{align*}
Using the results of Proposition \ref{Prop_Einstein}, we have
\begin{align*}
\frac{d}{dt}(t^{-\varepsilon_1}\bar{g}^{ab}p_ap_b)
&\leq\Big(C\sqrt{\varepsilon} t^{-2} -\varepsilon_1t^{-1}\Big)(t^{-\varepsilon_1}\bar{g}^{ab}p_ap_b).
\end{align*}
We can choose a positive number $\varepsilon_1$ appropriately such that the following inequality holds:
\begin{align}\label{est_gbar}
\frac{d}{dt}(t^{-\varepsilon_1}\bar{g}^{ab}p_ap_b) \leq 0,
\end{align}
where the constant $\varepsilon_1$ depends on the constant $\varepsilon$ of Proposition \ref{Prop_Einstein},
which in turn depends on $F(t_0)$ and $C_f$.
Note that $\varepsilon_1$ can be chosen small, since the constant $\varepsilon$ is small
and $t^{-2}< t^{-1}$ at late times.
This estimate can also be stated as $d(t^{-\varepsilon_1}\bar{g}^{ab})/dt\leq 0$ in the sense of quadratic forms,
and will be used to estimate the distribution function of the Boltzmann equation.

\subsubsection{Estimate of the determinant}
We apply \eqref{det} to the results of Proposition \ref{Prop_Einstein}.
Integrating \eqref{det}, we get
\[
\det g(t)=\det g(t_0)\exp\Big(6\int_{t_0}^tH(s)ds\Big),
\]
and applying it to Proposition \ref{Prop_Einstein},
\[
\det g(t_0) e^{-C\varepsilon}\bigg(\frac{t}{t_0}\bigg)^4
\leq \det g(t)
\leq \det g(t_0) e^{C\varepsilon}\bigg(\frac{t}{t_0}\bigg)^4.
\]
Since $\varepsilon$ is a small number, we may simply write
\begin{equation}\label{est_det}
\frac{1}{2}\bigg(\frac{t}{t_0}\bigg)^4
\leq \frac{\det g(t)}{\det g(t_0)}
\leq 2\bigg(\frac{t}{t_0}\bigg)^4.
\end{equation}
Applying this to the estimates \eqref{est_principal_minors}, we obtain
\[
\frac{1}{g_{11}}\leq Ct^{-\frac43},\quad
\frac{1}{g_{11}g_{22}-(g_{12})^2}\leq Ct^{-\frac83}.
\]
From the explicit expressions of $e^a_b$ and $\theta^a_b$ in \eqref{e^a_b} and \eqref{theta^a_b},
we derive
\begin{equation}\label{est_e^a_b}
|e^a_b|\leq Ct^{-\frac23},\quad |\theta^a_b|\leq Ct^{\frac23},
\end{equation}
where the constants $C$ depend only on initial data.

\section{Existence of solutions of the Boltzmann equation}\label{Sec_Boltzmann}
In this section we study the Boltzmann equation. We will assume that a spacetime is given and will show that classical solutions to the Boltzmann equation exist globally in time. The spacetime will be assumed to satisfy the properties of Proposition \ref{Prop_Einstein} at late times. To be precise, we assume that the metric satisfies the following estimates.\bigskip

\noindent{\sf (A)} {\bf Assumption on the spatial metric.}
There exists a positive constant $C$ such that the metric $g_{ab}$ and its inverse $g^{ab}$ satisfy $|\bar{g}_{ab}|\leq C$ and $|\bar{g}^{ab}|\leq C$ for $g_{ab}=t^{\frac43}\bar{g}_{ab}$ and $g^{ab}=t^{-\frac43}\bar{g}^{ab}$. In an orthonormal frame, the metric can be written as $g_{ab}e^a_ce^b_d=\eta_{cd}$ or $g_{ab}=\theta^c_a\theta^d_b\eta_{cd}$, and they satisfy $|e^a_b|\leq Ct^{-\frac23}$ and $|\theta^a_b|\leq Ct^{\frac23}$. Moreover, there exists a small number $0<\varepsilon_1<\frac{1}{12}$ such that $d(t^{-\varepsilon_1}\bar{g}^{ab})/dt\leq 0$ in the sense of quadratic forms.\bigskip

Under the assumption {\sf (A)} on the metric, we will prove global-in-time existence of solutions to the Boltzmann equation. The following inequalities are direct consequences of Lemma \ref{Lem_partial_p}.

\begin{lemma}
Suppose that the metric satisfies {\sf (A)}. Then,
the following estimates hold:
\begin{eqnarray}
&&|\partial_{p_i}p^0|\leq Ct^{-\frac23},\quad
|\partial_{p_i}h|\leq\frac{Cq^0t^{-\frac23}}{h},\quad
|\partial_{p_i}\sqrt{s}|\leq\frac{Cq^0t^{-\frac23}}{\sqrt{s}},\label{edR1}\\
&&|\partial_{p_i}h|\leq C(p^0)^{\frac12}(q^0)^{\frac32}t^{-\frac23},\quad
|\partial_{p_i}\sqrt{s}|\leq C(p^0)^{\frac12}(q^0)^{\frac32}t^{-\frac23},\label{edR2}\\
&&\bigg|\partial_{p_i}\sqrt{(n^0)^2-(n_ae^a_b\xi^b)^2}\bigg|
\leq\frac{Cn^0t^{-\frac23}}{\sqrt{(n^0)^2-(n_ae^a_b\xi^b)^2}},\label{edR3}\\
&&\bigg|\partial_{p_i}\bigg[\frac{n_ae^a_b\omega^bn_j}{n_cn^c}\bigg]\bigg|
\leq\frac{C}{\sqrt{n_an^a}},\label{edR4}
\end{eqnarray}
where the constants $C$ do not depend on the metric.
\end{lemma}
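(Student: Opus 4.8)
The plan is to substitute the closed-form identities \eqref{ed1}--\eqref{ed5} of Lemma \ref{Lem_partial_p} and to bound each factor that appears using only Assumption {\sf (A)} together with the elementary inequalities \eqref{e1}--\eqref{e3}. Before doing so I would record three preliminary facts. First, a Cauchy--Schwarz estimate in the spatial metric: since $p^i=g^{ij}p_j$ and $g^{jk}p_jp_k\leq 1+g^{jk}p_jp_k=(p^0)^2$, one has $|p^i/p^0|\leq (g^{ii})^{1/2}\leq Ct^{-2/3}$ by {\sf (A)}, and likewise $|n_j|\leq (g_{jj})^{1/2}(n_an^a)^{1/2}\leq Ct^{2/3}\sqrt{n_an^a}$ and $|n^i|=|g^{ic}n_c|\leq Ct^{-2/3}\sqrt{n_an^a}$. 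Second, since $g_{ab}e^a_ce^b_d=\eta_{cd}$ gives $\sum_b e^a_be^c_b=g^{ac}$, the passage to the orthonormal frame is an isometry on the spatial part, so for $\omega,\xi\in\bbs^2$ one has $|p_ae^a_b\omega^b|=|\hat p\cdot\omega|\leq|\hat p|\leq p^0$, $|q_ae^a_b\xi^b|\leq q^0$, and $|n_ae^a_b\omega^b|=|\hat n\cdot\omega|\leq\sqrt{n_an^a}$. Third, $|e^i_c\xi^c|\leq Ct^{-2/3}$ directly from {\sf (A)}.

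With these in place, the bound on $\partial_{p_i}p^0$ in \eqref{edR1} is immediate from \eqref{ed1}, and the first forms of the bounds on $\partial_{p_i}h$ and $\partial_{p_i}\sqrt s$ in \eqref{edR1} follow from \eqref{ed2}--\eqref{ed3} by estimating $|p^i/p^0-q^i/q^0|\leq Ct^{-2/3}$ and leaving $h$, $\sqrt s$ in the denominator. For the alternative forms \eqref{edR2}, which clear the denominator, I would differentiate $h^2$ rather than $h$: from $\partial_{p_i}(h^2)=2(p^i-q^i)-2(p^0-q^0)p^i/p^0$ one estimates $|p^i-q^i|\leq (g^{ii})^{1/2}(u_au^a)^{1/2}$ and $|p^0-q^0|=|u^0|\leq (u_au^a)^{1/2}$, then invokes $u_au^a\leq p^0q^0h^2$ from \eqref{e2} to get $|\partial_{p_i}(h^2)|\leq Ct^{-2/3}(p^0q^0)^{1/2}h$; dividing by $2h$ yields $|\partial_{p_i}h|\leq C(p^0q^0)^{1/2}t^{-2/3}$, which implies \eqref{edR2} since $q^0\geq1$, and the $\sqrt s$ version then follows from $\partial_{p_i}\sqrt s=(h/\sqrt s)\partial_{p_i}h$ with $h/\sqrt s\leq1$ by \eqref{e1}. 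For \eqref{edR3} I would expand the bracket in \eqref{ed4} as $q^0(p^i/p^0)-q_ae^a_b\xi^be^i_c\xi^c+p^i-p_ae^a_b\xi^be^i_c\xi^c$ and bound each of the four pieces by $Ct^{-2/3}(p^0+q^0)=Ct^{-2/3}n^0$ using the preliminary facts, keeping the denominator unchanged. Finally, for \eqref{edR4} I would bound the three terms of \eqref{ed5} separately: $|e^i_a\omega^an_j/(n_bn^b)|\leq Ct^{-2/3}\cdot Ct^{2/3}\sqrt{n_an^a}/(n_an^a)=C/\sqrt{n_an^a}$; $|n_ae^a_b\omega^b/(n_cn^c)|\leq\sqrt{n_an^a}/(n_an^a)=1/\sqrt{n_an^a}$; and $|n_ae^a_b\omega^bn_jg^{ic}n_c/(n_dn^d)^2|\leq(\sqrt{n_an^a})(Ct^{2/3}\sqrt{n_an^a})(Ct^{-2/3}\sqrt{n_an^a})/(n_an^a)^2=C/\sqrt{n_an^a}$.

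I do not expect a genuine obstacle: the computation is routine once the identities of Lemma \ref{Lem_partial_p} are available. The only two points requiring care are (i) passing from the $q^0/h$ and $q^0/\sqrt s$ forms to the $(p^0q^0)^{1/2}$ forms, where one must differentiate $h^2$ rather than $h$ and feed in $u_au^a\leq p^0q^0h^2$ from \eqref{e2}; and (ii) the bookkeeping of the powers $t^{\pm 2/3}$ in \eqref{edR3}--\eqref{edR4}, which balance exactly because each orthonormal-frame index carried by $e^a_b$ costs a factor $t^{-2/3}$ while a lowered spatial index (as in $n_j$) supplies $t^{+2/3}$, so that all the weights collapse to the stated ones.
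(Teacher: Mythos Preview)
Your proof is correct and follows the same overall scheme as the paper: substitute the closed-form identities of Lemma~\ref{Lem_partial_p} and bound the resulting factors using Assumption~{\sf (A)} together with \eqref{e1}--\eqref{e3}. The arguments for \eqref{edR1}, \eqref{edR3}, and \eqref{edR4} are essentially identical to the paper's.

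The one genuine difference is in \eqref{edR2}. The paper keeps the factored form \eqref{ed2} and bounds the velocity difference via the orthonormal frame: writing $p^i/p^0-q^i/q^0=e^i_a(\hat p^a/\hat p^0-\hat q^a/\hat q^0)$, it uses that $x\mapsto x/\sqrt{1+|x|^2}$ is $1$-Lipschitz to get $|p^i/p^0-q^i/q^0|\leq Ct^{-2/3}|\hat p-\hat q|=Ct^{-2/3}\sqrt{u_au^a}$, and then invokes \eqref{e2}. You instead differentiate $h^2$ directly, bound $|p^i-q^i|$ and $|p^0-q^0|$ by $\sqrt{u_au^a}$ (up to the metric factor), and then invoke \eqref{e2}. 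Your route avoids the Lipschitz step and in fact yields the slightly sharper $|\partial_{p_i}h|\leq C(p^0q^0)^{1/2}t^{-2/3}$, which you then relax to the stated $(p^0)^{1/2}(q^0)^{3/2}$ form using $q^0\geq1$. Both arguments are short and equivalent in strength for the purposes of the later lemmas.
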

\begin{proof}
The quantity in \eqref{ed1} can be written as
\[
\partial_{p_i}p^0=\frac{p^i}{p^0}=\frac{e^i_a\hat{p}^a}{\sqrt{1+|\hat{p}|^2}},
\]
and then the assumption {\sf (A)} gives the first inequality.
The other inequalities of \eqref{edR1} are similarly proved.
For the first inequality of \eqref{edR2}, we have
\begin{align*}
\bigg|\frac{p^i}{p^0}-\frac{q^i}{q^0}\bigg|
&=\bigg|\frac{e^i_a\hat{p}^a}{\sqrt{1+|\hat{p}|^2}}
-\frac{e^i_a\hat{q}^a}{\sqrt{1+|\hat{q}|^2}}\bigg|
\leq Ct^{-\frac23}|\hat{p}-\hat{q}|\\
&=Ct^{-\frac23}\sqrt{(p_a-q_a)(p^a-q^a)}
\leq Ct^{-\frac23}h\sqrt{p^0q^0},
\end{align*}
where we used \eqref{e2} in the last inequality, and then from \eqref{ed2} we obtain
\[
|\partial_{p_i}h|=\frac{q^0}{h}\bigg|\frac{p^i}{p^0}-\frac{q^i}{q^0}\bigg|
\leq C(p^0)^{\frac12}(q^0)^{\frac32}t^{-\frac23}.
\]
The second estimate of \eqref{edR2} is similarly obtained, since $h\leq \sqrt{s}$,
and the inequalities \eqref{edR3} and \eqref{edR4} are also similarly proved from Lemma \ref{Lem_partial_p}.
\end{proof}

\begin{lemma}\label{lem_partial2}
Consider a momentum $p'_i$ in \eqref{p'2}. The following estimate holds:
\[
|\partial_{p_k}p'_i|\leq Cp^0(q^0)^4,
\]
where the constant $C$ does not depend on the metric.
\end{lemma}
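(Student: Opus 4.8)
The plan is to differentiate the explicit formula \eqref{p'2} for $p'_i$ term by term with respect to $p_k$, using the derivative identities collected in Lemma~\ref{Lem_partial_p} together with the metric bounds in the new lemmas above (the estimates \eqref{edR1}--\eqref{edR4}). Write $p'_i = \tfrac12(p_i+q_i) + \tfrac{h}{2}\,\dfrac{n^0\, g_{ia}e^a_b\xi^b}{W}$, where $W = \sqrt{(n^0)^2-(n_ce^c_d\xi^d)^2}$. The first summand contributes $\tfrac12\delta_{ki}$, which is harmless. For the second summand I would apply the product/quotient rule, producing four pieces: one from $\partial_{p_k}h$, one from $\partial_{p_k}n^0 = \partial_{p_k}p^0 = p^k/p^0$ (note $q$ is fixed), one from $\partial_{p_k}(g_{ia}e^a_b\xi^b)$ — but $g_{ia}e^a_b = \theta^c_i\eta_{cb}$ is \emph{time-dependent only}, not $p$-dependent, so this derivative vanishes — and one from $\partial_{p_k}W^{-1} = -W^{-2}\partial_{p_k}W$, controlled by \eqref{edR3}.

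Now I would combine the bounds. From \eqref{edR2}, $|\partial_{p_k}h|\le C(p^0)^{1/2}(q^0)^{3/2}t^{-2/3}$; the factor $n^0 g_{ia}e^a_b\xi^b/W$ multiplying it needs a size bound: since $|g_{ia}e^a_b\xi^b| = |\theta^c_i\eta_{cb}\xi^b|\le Ct^{2/3}$ by \eqref{est_e^a_b}, and $W\ge\sqrt{s}\ge 2$ by \eqref{e3} and \eqref{e1} (here \eqref{e3} gives $s\le (n^0)^2-(n_ae^a_b\xi^b)^2 = W^2$, so $W\ge\sqrt{s}\ge 2$), while $n^0=p^0+q^0\le 2p^0q^0$ trivially, this piece is bounded by $Cp^0q^0\cdot t^{2/3}\cdot\frac12$; multiplying by $|\partial_{p_k}h|$ and absorbing the $t^{\pm2/3}$ against each other gives a term $\lesssim (p^0)^{3/2}(q^0)^{5/2}$, comfortably inside $Cp^0(q^0)^4$. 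The $\partial_{p_k}n^0$ piece carries $\tfrac{h}{2}\cdot\frac{g_{ia}e^a_b\xi^b}{W}\cdot\frac{p^k}{p^0}$; using $h\le\sqrt{s}\le 2\sqrt{p^0q^0}$ from \eqref{e1}, $|g_{ia}e^a_b\xi^b|\le Ct^{2/3}$, $W\ge 2$, and $|p^k/p^0|\le Ct^{-2/3}$ from \eqref{edR1}, this is $\lesssim\sqrt{p^0q^0}$. For the $\partial_{p_k}W^{-1}$ piece, $\left|\tfrac{h}{2}\cdot n^0 g_{ia}e^a_b\xi^b\cdot W^{-2}\partial_{p_k}W\right|$; using \eqref{edR3}, $|\partial_{p_k}W|\le Cn^0 t^{-2/3}/W$, so this piece is $\lesssim h\cdot(n^0)^2\cdot t^{2/3}\cdot t^{-2/3}\cdot W^{-3}\le h(n^0)^2 W^{-3}$. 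Here I must be careful: $W^{-3}\le s^{-3/2}$, and by the \emph{lower} bound in \eqref{e3}, $s\ge \frac{q^0}{p^0}+\frac{p^0}{q^0}$, which does not by itself cancel $(n^0)^2\le 4(p^0)^2(q^0)^2$; but using instead $W^2\ge s$ and $s = 4+h^2$ one gets $W^{-3}\le (4+h^2)^{-3/2}$, and $h\le 2\sqrt{p^0q^0}$ while $(n^0)^2\le 4(p^0q^0)^2$, so $h(n^0)^2 W^{-3}\le C(p^0q^0)^{5/2}$ at worst --- still within $Cp^0(q^0)^4$ since $p^0,q^0\ge 1$.

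The main obstacle is \textbf{bookkeeping of the worst-case powers of $p^0$ and $q^0$}: the quotient structure with $W$ in the denominator means one must repeatedly decide whether to estimate $W$ from below by $\sqrt s\ge 2$ (losing nothing in $p,q$ but also gaining nothing) or by $\sqrt s\ge h$ (helping cancel the $h$ in the numerator). The claimed bound $Cp^0(q^0)^4$ is generous, so the strategy is simply to take the crudest available bound at each step --- $h\le 2\sqrt{p^0q^0}$, $n^0\le 2p^0q^0$, $|p^k/p^0|\le Ct^{-2/3}$, $|\theta^c_i|\le Ct^{2/3}$, $W^{-1}\le\tfrac12$ --- check that every resulting monomial in $p^0,q^0$ is dominated by $p^0(q^0)^4$ (using $p^0,q^0\ge 1$ to raise exponents freely), and observe that all powers of $t$ cancel because each $e^a_b$ contributes $t^{-2/3}$ and each $\theta^a_b$ contributes $t^{2/3}$ in exactly balancing pairs. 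I would present this as: expand, bound each of the (at most four) terms by the displayed inequalities, and collect.

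\begin{proof}
We differentiate \eqref{p'2}. Writing $W=\sqrt{(n^0)^2-(n_ce^c_d\xi^d)^2}$ and keeping $q_*$ fixed, we have
\[
\partial_{p_k}p'_i=\tfrac12\delta_{ki}
+\tfrac12(\partial_{p_k}h)\frac{n^0g_{ia}e^a_b\xi^b}{W}
+\tfrac{h}{2}(\partial_{p_k}n^0)\frac{g_{ia}e^a_b\xi^b}{W}
-\tfrac{h}{2}\,n^0g_{ia}e^a_b\xi^b\,\frac{\partial_{p_k}W}{W^2},
\]
where we used that $g_{ia}e^a_b=\theta^c_i\eta_{cb}$ depends only on $t$, so $\partial_{p_k}(g_{ia}e^a_b\xi^b)=0$.

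We bound each term. By \eqref{est_e^a_b}, $|g_{ia}e^a_b\xi^b|=|\theta^c_i\eta_{cb}\xi^b|\le Ct^{\frac23}$. By \eqref{e3} and the first inequality of \eqref{e1}, $W^2\ge s\ge 4$, so $W\ge 2$. Also $n^0=p^0+q^0\le 2p^0q^0$ and $h\le\sqrt s\le 2\sqrt{p^0q^0}$ by \eqref{e1}. For the first term, \eqref{edR2} gives $|\partial_{p_k}h|\le C(p^0)^{\frac12}(q^0)^{\frac32}t^{-\frac23}$, hence
\[
\Big|\tfrac12(\partial_{p_k}h)\tfrac{n^0g_{ia}e^a_b\xi^b}{W}\Big|
\le C(p^0)^{\frac12}(q^0)^{\frac32}t^{-\frac23}\cdot p^0q^0\cdot t^{\frac23}
\le C(p^0)^{\frac32}(q^0)^{\frac52}.
\]
For the second term, $\partial_{p_k}n^0=\partial_{p_k}p^0=p^k/p^0$ with $|p^k/p^0|\le Ct^{-\frac23}$ by \eqref{edR1}, so
\[
\Big|\tfrac{h}{2}(\partial_{p_k}n^0)\tfrac{g_{ia}e^a_b\xi^b}{W}\Big|
\le C\sqrt{p^0q^0}\cdot t^{-\frac23}\cdot t^{\frac23}\le C\sqrt{p^0q^0}.
\]
For the third term, \eqref{edR3} gives $|\partial_{p_k}W|\le Cn^0t^{-\frac23}/W$, so
\[
\Big|\tfrac{h}{2}n^0g_{ia}e^a_b\xi^b\,\tfrac{\partial_{p_k}W}{W^2}\Big|
\le C\,h\,(n^0)^2\,t^{-\frac23}\cdot t^{\frac23}\cdot W^{-3}
\le C\,h\,(n^0)^2(4+h^2)^{-\frac32}
\le C(p^0q^0)^{\frac52},
\]
using $W^2\ge s=4+h^2$, $n^0\le2p^0q^0$ and $h\le2\sqrt{p^0q^0}$.

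Collecting the four terms and using $p^0\ge1$, $q^0\ge1$, every monomial is dominated by $p^0(q^0)^4$, and all powers of $t$ have cancelled since each $e^a_b$ and each $\theta^a_b$ contributed $t^{-\frac23}$ and $t^{\frac23}$ in balancing pairs. This proves $|\partial_{p_k}p'_i|\le Cp^0(q^0)^4$ with $C$ independent of the metric.
\end{proof}
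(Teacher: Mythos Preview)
Your decomposition and differentiation are correct and match the paper exactly, but the final ``collecting'' step is wrong: the monomials you obtain are \emph{not} all dominated by $p^0(q^0)^4$. Concretely, your second term gives $(p^0)^{3/2}(q^0)^{5/2}$ and your fourth term gives $(p^0q^0)^{5/2}$ (or at best $(p^0q^0)^2$ if you exploit boundedness of $h(4+h^2)^{-3/2}$). Neither is $\le Cp^0(q^0)^4$: take $p^0$ large and $q^0=1$. The claimed bound is asymmetric---only a single power of $p^0$ is allowed---and your crude estimate $W\ge 2$ cannot produce that asymmetry.

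The missing idea is to use the \emph{full} lower bound in \eqref{e3}: since $W^2\ge s\ge p^0/q^0$, each factor $W^{-1}$ contributes $\sqrt{q^0/p^0}$, converting powers of $p^0$ into powers of $q^0$. The paper uses exactly this. For the second term one gets
\[
C(p^0)^{1/2}(q^0)^{3/2}t^{-2/3}\cdot(p^0+q^0)\cdot t^{2/3}\cdot\sqrt{q^0/p^0}
= C(q^0)^2(p^0+q^0)\le C\big(p^0(q^0)^2+(q^0)^3\big),
\]
and for the fourth term, using $W^{-3}\le (q^0/p^0)^{3/2}$ together with $h\le 2\sqrt{p^0q^0}$, $n^0=p^0+q^0$,
\[
C\sqrt{p^0q^0}\,(p^0+q^0)^2\,(q^0/p^0)^{3/2}
= C\frac{(q^0)^2}{p^0}(p^0+q^0)^2
\le C\Big(p^0(q^0)^2+(q^0)^3+\frac{(q^0)^4}{p^0}\Big),
\]
which \emph{is} bounded by $Cp^0(q^0)^4$ since $p^0\ge 1$. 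Replace your $W\ge 2$ by $W\ge\sqrt{p^0/q^0}$ throughout and the argument goes through.
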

\begin{proof}
Recall that $p'_i$ is written in the representation \eqref{p'2} as
\[
p'_i=\frac{p_i+q_i}{2}+\frac{h}{2}\frac{n^0g_{ia}e^a_b\xi^b}
{\sqrt{(n^0)^2-(n_ce^c_d\xi^d)^2}},
\]
and then $\partial_{p_k}p'_i$ is given by
\begin{align*}
\partial_{p_k}p'_i
&=\frac{\delta_i^k}{2}
+\frac{\partial_{p_k}h}{2}\frac{n^0g_{ia}e^a_b\xi^b}{\sqrt{(n^0)^2-(n_ce^c_d\xi^d)^2}}
+\frac{h}{2}\frac{(\partial_{p_k}p^0)g_{ia}e^a_b\xi^b}{\sqrt{(n^0)^2-(n_ce^c_d\xi^d)^2}}\\
&-\frac{h}{2}\frac{n^0g_{ia}e^a_b\xi^b}{(n^0)^2-(n_ce^c_d\xi^d)^2}
\partial_{p_k}\sqrt{(n^0)^2-(n_ce^c_d\xi^d)^2}.
\end{align*}
We now estimate each term separately. The first term is clearly bounded by $1/2$.
For the second term, we use \eqref{edR2} and \eqref{e3} to obtain
\begin{align*}
\bigg|\frac{(\partial_{p_k}h)n^0g_{ia}e^a_b\xi^b}{2\sqrt{(n^0)^2-(n_ce^c_d\xi^d)^2}}\bigg|
&\leq Ct^{-\frac23}\frac{(p^0)^{\frac12}(q^0)^{\frac32}(p^0+q^0)|g_{ia}e^a_b\xi^b|\sqrt{q^0}}{\sqrt{p^0}}\\
&\leq C\Big(p^0(q^0)^2+(q^0)^3\Big),
\end{align*}
where we used {\sf (A)}.
For the third term, we use \eqref{e1}, \eqref{edR1}, and \eqref{e3} to get
\begin{align*}
\bigg|\frac{h}{2}\frac{(\partial_{p_k}p^0)g_{ia}e^a_b\xi^b}{\sqrt{(n^0)^2-(n_ce^c_d\xi^d)^2}}\bigg|
&\leq Ct^{-\frac23}\frac{\sqrt{p^0q^0}|g_{ia}e^a_b\xi^b|\sqrt{q^0}}{\sqrt{p^0}}
\leq Cq^0.
\end{align*}
For the fourth term, we use \eqref{e1}, \eqref{e3}, and \eqref{edR3} to get
\begin{align*}
&\bigg|\frac{h}{2}\frac{n^0g_{ia}e^a_b\xi^b}{(n^0)^2-(n_ce^c_d\xi^d)^2}
\partial_{p_k}\sqrt{(n^0)^2-(n_ce^c_d\xi^d)^2}\bigg|\\
&\leq Ct^{-\frac23}\frac{\sqrt{p^0q^0}(p^0+q^0)|g_{ia}e^a_b\xi^b|q^0(p^0+q^0)\sqrt{q^0}}{p^0
\sqrt{p^0}}\\&\leq C(q^0)^2\bigg(p^0+q^0+\frac{(q^0)^2}{p^0}\bigg).
\end{align*}
We combine the above estimates to obtain
\[
|\partial_{p_k}p'_i|\leq Cp^0(q^0)^4,
\]
where we used the fact that $p^0\geq 1$ and $q^0\geq 1$, and this completes the proof.
\end{proof}

\begin{lemma}\label{lem_partial1}
Consider a momentum $p'_i$ in \eqref{p'1}. The following estimate holds:
\[
|\partial_{p_k}p'_i|\leq C\bigg(1+\frac{p^0}{|\hat{u}|}
+\frac{p^0}{|\hat{n}|}+\frac{(p^0)^2}{|\hat{u}|^2}
\bigg)(q^0)^3,
\]
where the constant $C$ does not depend on the metric.
\end{lemma}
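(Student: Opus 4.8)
The plan is to differentiate the expression \eqref{p'1} for $p'_i$ with respect to $p_k$ term by term and to estimate each piece, exactly as in the proof of Lemma \ref{lem_partial2}; the only genuinely new feature is the ``non-relativistic'' summand $\frac{h}{2}\big(\frac{n^0}{\sqrt s}-1\big)\frac{n_ae^a_b\omega^bn_i}{g^{cd}n_cn_d}$, whose denominator $g^{cd}n_cn_d$ may be small. In \eqref{p'1} the only $p$-dependent factors are $h$, $n^0$ (through $p^0$), $\sqrt s$, the covector $n_i=p_i+q_i$, and $g^{cd}n_cn_d$, while $g_{ia}e^a_b$ and $e^a_b$ do not depend on $p$, so the Leibniz rule yields
\begin{align*}
\partial_{p_k}p'_i&=\frac{\delta_i^k}{2}
+\frac{\partial_{p_k}h}{2}\,g_{ia}e^a_b\omega^b
+\frac{\partial_{p_k}h}{2}\Big(\frac{n^0}{\sqrt s}-1\Big)\frac{n_ae^a_b\omega^bn_i}{g^{cd}n_cn_d}\\
&\quad+\frac{h}{2}\,\partial_{p_k}\!\Big(\frac{n^0}{\sqrt s}\Big)\frac{n_ae^a_b\omega^bn_i}{g^{cd}n_cn_d}
+\frac{h}{2}\Big(\frac{n^0}{\sqrt s}-1\Big)\partial_{p_k}\!\Big[\frac{n_ae^a_b\omega^bn_i}{g^{cd}n_cn_d}\Big],
\end{align*}
and it suffices to bound the five summands.

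The estimate of each summand follows one pattern. First I would pass to an orthonormal frame, where $g_{ia}e^a_b\omega^b=\theta^c_i\omega_c$, $n_ae^a_b\omega^b=\hat n\cdot\omega$ and $g^{cd}n_cn_d=|\hat n|^2$, so that by {\sf (A)} one has $|g_{ia}e^a_b\omega^b|\le Ct^{2/3}$, $|n_ae^a_b\omega^b|\le|\hat n|$, $|n_i|=|\theta^c_i\hat n_c|\le Ct^{2/3}|\hat n|$ and $|p^k|=|e^k_a\hat p^a|\le Ct^{-2/3}p^0$; the powers $t^{\pm2/3}$ cancel against those carried by \eqref{edR1}--\eqref{edR4} and by $\partial_{p_k}n^0=p^k/p^0$, so no $t$-weight survives. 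What remains is sized by the elementary inequalities $2\le\sqrt s$, $h\le\sqrt s\le2\sqrt{p^0q^0}$, $p^0/q^0+q^0/p^0\le s$ and, crucially, $h^{-1}\le\sqrt{p^0q^0}/|\hat u|$ (i.e.\ $|\hat u|^2\le p^0q^0h^2$) together with $(\sqrt s)^{-1}\le h^{-1}$, all from \eqref{e1}--\eqref{e2}, and by the identity
\[
\frac{n^0}{\sqrt s}-1=\frac{|\hat n|^2}{\sqrt s\,(n^0+\sqrt s)},
\]
whose role is to cancel the dangerous denominator $|\hat n|^2=g^{cd}n_cn_d$ whenever it is multiplied by $\frac{n^0}{\sqrt s}-1$.

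Carrying this out, the first summand is $\le1/2$. For the second, \eqref{edR1} together with $h^{-1}\le\sqrt{p^0q^0}/|\hat u|$ gives a bound $Cq^0h^{-1}\le C(p^0)^{1/2}(q^0)^{3/2}/|\hat u|$, absorbed into $\frac{p^0}{|\hat u|}(q^0)^3$. For the third, the identity above turns $\big(\frac{n^0}{\sqrt s}-1\big)\frac{n_ae^a_b\omega^bn_i}{g^{cd}n_cn_d}$ into $\frac{(\hat n\cdot\omega)n_i}{\sqrt s(n^0+\sqrt s)}$, which is $\le Ct^{2/3}n^0/\sqrt s$; multiplying by $|\partial_{p_k}h|\le Cq^0t^{-2/3}/h$ and converting both $h^{-1}$ and $(\sqrt s)^{-1}$ into $\sqrt{p^0q^0}/|\hat u|$ (using $n^0\le2p^0q^0$) yields $\le C(p^0)^2(q^0)^3/|\hat u|^2$. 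For the fourth, one uses $|\partial_{p_k}(n^0/\sqrt s)|\le Ct^{-2/3}(\sqrt s)^{-1}+Cn^0q^0t^{-2/3}s^{-3/2}$ (from $\partial_{p_k}n^0=p^k/p^0$ and \eqref{edR1}), then $h\le\sqrt s$ and $s\ge p^0/q^0+q^0/p^0$, to obtain $\le C(q^0)^3$. For the fifth, \eqref{edR4} gives $|\partial_{p_k}[\,\cdot\,]|\le C/|\hat n|$, whence with $h\le\sqrt s$ and $\frac{n^0}{\sqrt s}-1\le\frac{n^0}{\sqrt s}$ the summand is $\le\frac{C}{2}\frac{n^0}{|\hat n|}\le C\frac{p^0}{|\hat n|}(q^0)^3$. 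Summing the five bounds and using $p^0\ge1$, $q^0\ge1$ gives the claimed estimate.

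The hard part is not any single estimate but keeping track of the singular factors. Each of $\partial_{p_k}h$, $\partial_{p_k}\sqrt s$ and $\partial_{p_k}\big[\frac{n_ae^a_b\omega^bn_i}{g^{cd}n_cn_d}\big]$ carries an inverse power of $h$, $\sqrt s$ or $|\hat n|$, and the temptation is to use the $h$-independent bound \eqref{edR2} for $\partial_{p_k}h$; doing so, however, leaves an unbounded factor $(p^0)^{1/2}$ when $q^0$ stays bounded, so one must keep the $1/h$ of \eqref{edR1} and trade it, via \eqref{e2}, for $1/|\hat u|$ --- harmless because it comes accompanied by exactly the power of $p^0$ that appears in the statement. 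Symmetrically, the denominator $|\hat n|^2$ of the non-relativistic term must be removed by the rewriting of $\frac{n^0}{\sqrt s}-1$ \emph{before} any crude bound on $n^0$ is inserted, or $n^0$ grows uncontrollably. These two devices are precisely what generate the weights $1/|\hat u|$, $1/|\hat u|^2$ and $1/|\hat n|$ in the final estimate.
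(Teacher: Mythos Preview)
Your proof is correct and follows essentially the same route as the paper's: the same five-term Leibniz decomposition, the same use of \eqref{edR1} and \eqref{edR4} for the derivative factors, and the same conversions $h^{-1}\le\sqrt{p^0q^0}/|\hat u|$ and $n^0\le 2p^0q^0$ at the end. The one cosmetic difference is your treatment of the third summand: you invoke the identity $\frac{n^0}{\sqrt s}-1=\frac{|\hat n|^2}{\sqrt s(n^0+\sqrt s)}$ to cancel the $|\hat n|^2$ in the denominator, whereas the paper simply bounds $\frac{n^0}{\sqrt s}-1\le\frac{n^0}{\sqrt s}$ and uses directly that $\big|\frac{n_ae^a_b\omega^bn_i}{g^{cd}n_cn_d}\big|=\frac{|\hat n\cdot\omega|\,|\theta^c_i\hat n_c|}{|\hat n|^2}\le Ct^{2/3}$ (the two $|\hat n|$'s in the numerator already do the cancellation); both routes land on $Cq^0n^0/(h\sqrt s)$ and hence on $C(p^0)^2(q^0)^3/|\hat u|^2$.
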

\begin{proof}
Recall that $p'_i$ is written in \eqref{p'1} as
\[
p'_i=
\frac{p_i+q_i}{2}+\frac{h}{2}\bigg(
g_{ia}e^a_b\omega^b
+\bigg(\frac{n^0}{\sqrt{s}}-1\bigg)
\frac{n_ae^a_b\omega^bn_i}{g^{cd}n_cn_d}
\bigg),
\]
and directly differentiate this with respect to $p_k$ to obtain
\begin{align*}
\partial_{p_k}p'_i&=\frac{\delta^k_i}{2}
+\frac{(\partial_{p_k}h)}{2}\bigg(
g_{ia}e^a_b\omega^b
+\bigg(\frac{n^0}{\sqrt{s}}-1\bigg)
\frac{n_ae^a_b\omega^bn_i}{g^{cd}n_cn_d}
\bigg)\\
&+\frac{h}{2}\bigg(
\partial_{p_k}\bigg[\frac{n^0}{\sqrt{s}}-1\bigg]
\frac{n_ae^a_b\omega^bn_i}{g^{cd}n_cn_d}
+\bigg(\frac{n^0}{\sqrt{s}}-1\bigg)
\partial_{p_k}\bigg[\frac{n_ae^a_b\omega^bn_i}{g^{cd}n_cn_d}\bigg]
\bigg).
\end{align*}
Note that in an orthonormal frame,
\[
\bigg|\frac{n_ae^a_b\omega^bn_i}{g^{cd}n_cn_d}\bigg|
= \frac{|\hat{n}_b\omega^b \theta^c_i\hat{n}_c|}{|\hat{n}|^2}
\leq Ct^{\frac23}.
\]
We now estimate the above quantities separately.
The first term is bounded by $1/2$, and for the second and the third terms
we use \eqref{edR1} and \eqref{e2} to get
\[
\bigg|\frac{(\partial_{p_k}h)}{2}
g_{ia}e^a_b\omega^b\bigg|\leq \frac{Cq^0}{h},
\]
and
\[
\bigg|\frac{(\partial_{p_k}h)}{2}
\bigg(\frac{n^0}{\sqrt{s}}-1\bigg)
\frac{n_ae^a_b\omega^bn_i}{g^{cd}n_cn_d}\bigg|
\leq Ct^{-\frac23}\frac{q^0n^0}{h\sqrt{s}}
\bigg|\frac{n_ae^a_b\omega^bn_i}{g^{cd}n_cn_d}\bigg|\leq \frac{Cq^0n^0}{h^2}.
\]
For the fourth term, we note that
\[
\partial_{p_k}\bigg[\frac{n^0}{\sqrt{s}}-1\bigg]=
\frac{\partial_{p_k}p^0}{\sqrt{s}}-\frac{n^0}{s}\partial_{p_k}\sqrt{s},
\]
and then we can use \eqref{edR1}. Then, we have
\[
\bigg|\frac{h}{2}
\partial_{p_k}\bigg[\frac{n^0}{\sqrt{s}}-1\bigg]
\frac{n_ae^a_b\omega^bn_i}{g^{cd}n_cn_d}\bigg|
\leq Ch\bigg(\frac{1}{\sqrt{s}}
+\frac{n^0q^0}{s\sqrt{s}}\bigg)
\leq C\Big(1+\frac{n^0q^0}{s}\Big).
\]
For the fifth term, we use \eqref{e1} and \eqref{edR4} to get
\begin{align*}
\bigg|\frac{h}{2}
\bigg(\frac{n^0}{\sqrt{s}}-1\bigg)
\partial_{p_k}\bigg[\frac{n_ae^a_b\omega^bn_i}{g^{cd}n_cn_d}\bigg]
\bigg|\leq \frac{Cn^0}{\sqrt{n_an^a}}.
\end{align*}
We combine the above estimates and apply $n^0\leq 2p^0q^0$ and \eqref{e2} to get
\[
|\partial_{p_k}p'_i|\leq C\bigg(1+\frac{p^0}{|\hat{u}|}
+\frac{(p^0)^2}{|\hat{u}|^2}
+\frac{p^0}{|\hat{n}|}\bigg)(q^0)^3,
\]
and this completes the proof.
\end{proof}

Now we are in position to construct classical solutions to the Boltzmann equation.
We choose the weight function as
\[
\exp(t^{-\varepsilon_1}\bar{g}^{ab}p_ap_b)=\exp(t^{-\varepsilon_1+\frac43}|\hat{p}|^2),
\]
and define
\begin{equation*}
\|f(t)\|=\sup_{\substack{k=1,2,3\\n=0,1}}
\Big\{|\exp(\tau^{-\varepsilon_1}\bar{g}^{ab}p_ap_b)\partial_{p_k}^nf(\tau,p_*)|
:0\leq\tau\leq t,\ p_*\in\bbr^3\Big\},
\end{equation*}
where $\varepsilon_1$ is the constant given in the assumption {\sf (A)}.

\begin{lemma}\label{lem_est1}
Suppose that a spacetime is given and the metric satisfies the assumption {\sf (A)}.
If $f$ is a solution of the Boltzmann equation, then $f$ satisfies the following estimate:
\[
\exp(t^{-\varepsilon_1}\bar{g}^{ab}p_ap_b)f(t,p_*)
\leq \|f(t_0)\|+C\|f(t)\|^2.
\]
\end{lemma}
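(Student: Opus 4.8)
The plan is to integrate the Boltzmann equation \eqref{boltzmann1} in time — in Bianchi~I the transport term vanishes, so a solution satisfies $f(t,p_*)=f(t_0,p_*)+\int_{t_0}^{t}Q(f,f)(s,p_*)\,ds$ — and then multiply by the weight $w(t,p_*):=\exp(t^{-\varepsilon_1}\bar{g}^{ab}p_ap_b)$. Writing $Q=Q_+-Q_-$ and using $f\geq 0$ together with $Q_-\geq 0$, the loss term is simply discarded: $w(t,p_*)f(t,p_*)\leq w(t,p_*)f(t_0,p_*)+\int_{t_0}^{t}w(t,p_*)Q_+(f,f)(s,p_*)\,ds$. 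For the data term I would use the monotonicity clause of {\sf (A)}: since $t\mapsto t^{-\varepsilon_1}\bar{g}^{ab}(t)p_ap_b$ is non-increasing, $w(t,p_*)\leq\exp(t_0^{-\varepsilon_1}\bar{g}^{ab}(t_0)p_ap_b)$, hence $w(t,p_*)f(t_0,p_*)\leq\|f(t_0)\|$ directly from the definition of the norm. Everything then reduces to showing $\int_{t_0}^{t}w(t,p_*)Q_+(s,p_*)\,ds\leq C\|f(t)\|^2$.

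For the gain term I would use the representation \eqref{boltzmann1}--\eqref{p'1}, whose collision parameter $\omega$ is the one appearing in Lemma \ref{lem_weight}, and run the following chain of estimates inside the $s$-integral. First, transfer the weight down from $t$ to $s\leq t$ by the same monotonicity, $w(t,p_*)\leq\exp(s^{-\varepsilon_1}\bar{g}^{ab}(s)p_ap_b)$, and pull this factor under the $d\omega\,dq_*$ integral. Second, bound the two post-collision factors of $f$ by $f(s,p'_*)f(s,q'_*)\leq\|f(t)\|^2\exp(-s^{-\varepsilon_1}\bar{g}^{ab}(s)(p'_ap'_b+q'_aq'_b))$, using that $\tau\mapsto\|f(\tau)\|$ is non-decreasing. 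Third — the crucial step — on the support of $\sigma$ apply Lemma \ref{lem_weight} at time $s$ to get $\bar{g}^{ab}(s)(p_ap_b-p'_ap'_b-q'_aq'_b)\leq C-\bar{g}^{ab}(s)q_aq_b$; since $s\geq t_0$ and $\bar{g}^{ab}(s)q_aq_b=s^{4/3}|\hat{q}|^2$, the product of the three exponentials collapses to $C\exp(-s^{-\varepsilon_1+4/3}|\hat{q}|^2)$.

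It then remains to estimate the $q_*$-integral of $v_M\sigma\exp(-s^{-\varepsilon_1+4/3}|\hat{q}|^2)$. I would use $\sigma\leq C$, the bound $v_M\leq 1$ (which follows from \eqref{e1}, since $h\sqrt{s}\leq s\leq 4p^0q^0$), $\int_{\bbs^2}d\omega=4\pi$, and the change of variables $q_*\mapsto\hat{q}$ whose Jacobian is exactly $(\det g(s))^{1/2}$, so that $(\det g(s))^{-1/2}dq_*=d\hat{q}_*$. This yields
\[
w(t,p_*)Q_+(s,p_*)\leq C\|f(t)\|^2\int_{\bbr^3}\exp\!\big(-s^{-\varepsilon_1+\tfrac43}|\hat{q}|^2\big)\,d\hat{q}_*=C\|f(t)\|^2\,s^{\frac{3\varepsilon_1}{2}-2}.
\]
Because $\varepsilon_1<\tfrac1{12}$ forces $\tfrac{3\varepsilon_1}{2}-2<-\tfrac{15}{8}<-1$, the integral $\int_{t_0}^{t}s^{\frac{3\varepsilon_1}{2}-2}\,ds$ is bounded by a constant depending only on $t_0$, and adding the data term gives the asserted inequality.

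I expect the main obstacle to be the third step above: making the weight of the \emph{pre}-collision momentum $p_*$ be absorbed by the weights of the post-collision momenta $p'_*,q'_*$. This is exactly the point where both the monotonicity of the scaled quadratic form $t^{-\varepsilon_1}\bar{g}^{ab}$ (to pass from $t$ to $s$) and, more essentially, the support assumption on the scattering kernel (entering through Lemma \ref{lem_weight}) are indispensable: without the latter the quantity $\bar{g}^{ab}(p_ap_b-p'_ap'_b-q'_aq'_b)$ need not be bounded and the gain term would not close. A secondary but error-prone point is keeping careful track of which time — $t$, $s$, or $t_0$ — each occurrence of $\bar{g}^{ab}$, $\hat{q}$, and $\|f(\cdot)\|$ is evaluated at, so that every weight transfer is carried out in the favourable direction.
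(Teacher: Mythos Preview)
Your argument is correct and essentially identical to the paper's: the only cosmetic difference is that the paper differentiates the product $\exp(t^{-\varepsilon_1}\bar{g}^{ab}p_ap_b)f(t,p_*)$ in time and uses $\frac{d}{dt}(t^{-\varepsilon_1}\bar{g}^{ab}p_ap_b)\leq 0$ to discard the resulting term, whereas you integrate $f$ first and then invoke the monotonicity of the weight to pass from $t$ to $s$ and to $t_0$. The weight transfer via Lemma~\ref{lem_weight}, the Gaussian bound in the orthonormal frame, and the integrability of $s^{\frac{3}{2}\varepsilon_1-2}$ are handled exactly as in the paper.
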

\begin{proof}
To the equation in the form of \eqref{boltzmann1} we multiply
the weight function to obtain
\begin{align*}
&\partial_t\Big[\exp(t^{-\varepsilon_1} \bar{g}^{ab} p_ap_b)f(t,p_*)\Big]\\
&=\frac{d}{dt}(t^{-\varepsilon_1}\bar{g}^{ab}p_ap_b)\exp(t^{-\varepsilon_1} \bar{g}^{cd} p_cp_d)f(t,p_*)
+\exp(t^{-\varepsilon_1} \bar{g}^{ab} p_ap_b)Q(f,f)(t,p_*)\\
&\leq (\det g)^{-\frac12} \iint v_M\sigma(p_*,q_*,\omega)f(p_*')f(q_*')
\exp(t^{-\varepsilon_1} \bar{g}^{ab} p_ap_b)
d\omega dq_*,
\end{align*}
where we ignored the loss term $Q_-$ and used the assumption {\sf (A)} such that
$d(t^{-\varepsilon_1}\bar{g}^{ab}p_ap_b)/dt\leq 0$. We now use
Lemma \ref{lem_weight} as follows:
\begin{align*}
&f(p_*')f(q_*')\exp(t^{-\varepsilon_1} \bar{g}^{ab} p_ap_b)\\
&=\Big(\exp(t^{-\varepsilon_1} \bar{g}^{ab} p'_ap'_b)f(p_*')\Big)
\Big(\exp(t^{-\varepsilon_1} \bar{g}^{ab} q'_aq'_b)f(q_*')\Big)\\
&\quad\times\exp(t^{-\varepsilon_1} \bar{g}^{ab}(p_ap_b+q_aq_b-p'_ap'_b-q'_aq'_b))
\exp(-t^{-\varepsilon_1} \bar{g}^{ab} q_aq_b)\\
&\leq C\Big(\exp(t^{-\varepsilon_1} \bar{g}^{ab} p'_ap'_b)f(p_*')\Big)
\Big(\exp(t^{-\varepsilon_1} \bar{g}^{ab} q'_aq'_b)f(q_*')\Big)\exp(-t^{-\varepsilon_1} \bar{g}^{ab} q_aq_b)\\
&\leq C\|f(t)\|^2\exp(-t^{-\varepsilon_1} \bar{g}^{ab} q_aq_b).
\end{align*}
Since $v_M$ and $\sigma(p_*,q_*,\omega)$ are bounded, we have
\begin{align*}
&(\det g)^{-\frac12}\iint v_M\sigma(p_*,q_*,\omega)f(p_*')f(q_*')
\exp(t^{-\varepsilon_1} \bar{g}^{ab} p_ap_b)
d\omega dq_*\\
&\leq C(\det g)^{-\frac12}\|f(t)\|^2\int \exp(-t^{-\varepsilon_1} \bar{g}^{ab} q_aq_b)dq_*\\
&=C\|f(t)\|^2\int \exp(-t^{-\varepsilon_1+\frac43}|\hat{q}|^2)d\hat{q}
\leq C\|f(t)\|^2t^{-2+\frac{3}{2}\varepsilon_1}.
\end{align*}
Since $0\leq\varepsilon_1<\frac{1}{12}$ is small, the quantity $t^{-2+\frac{3}{2}\varepsilon_1}$ is integrable,
and we obtain that:
\[
\exp(t^{-\varepsilon_1}\bar{g}^{ab}p_ap_b)f(t,p_*)
\leq \|f(t_0)\|+C\|f(t)\|^2,
\]
and this completes the proof.
\end{proof}

\begin{lemma}\label{Lem_estQ}
Suppose that a spacetime is given and the metric satisfies the assumption {\sf (A)}.
If $f$ is a solution of the Boltzmann equation, then $f$ satisfies the following estimate:
\[
|\exp(t^{-\varepsilon_1}\bar{g}^{ab}p_ap_b)\partial_{p_k}Q(f,f)|\leq
C \|f(t)\|^2t^{-2+\frac32\varepsilon_1}.
\]
\end{lemma}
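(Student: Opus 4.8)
The plan is to write $Q=Q_+-Q_-$ — in either of the two representations \eqref{boltzmann1} or \eqref{boltzmann2} — and to estimate $\partial_{p_k}Q_+$ and $\partial_{p_k}Q_-$ separately by differentiating under the integral sign. The derivative $\partial_{p_k}$ falls on three kinds of objects: the M\o ller velocity $v_M=h\sqrt s/(4p^0q^0)$ (together with the Glassey--Strauss factor $\sqrt s(n^0)^2((n^0)^2-(n_ae^a_b\xi^b)^2)^{-3/2}$ in the second representation), the scattering kernel $\sigma=\hat\sigma(\hat p,\hat q,\omega)$, and the product of distribution functions $f(p_*)f(q_*)$ (loss term) or $f(p'_*)f(q'_*)$ (gain term), the latter through the post-collision momenta via $\partial_{p_k}p'_i$ and $\partial_{p_k}q'_i=\delta^k_i-\partial_{p_k}p'_i$. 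Hence $\partial_{p_k}Q$ is a finite sum of double integrals over $(q_*,\omega)$ or $(q_*,\xi)$, and it suffices to bound each of them by $C\|f(t)\|^2t^{-2+\frac32\varepsilon_1}\exp(-t^{-\varepsilon_1}\bar g^{ab}p_ap_b)$.

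For the loss term the argument is essentially that of Lemma \ref{lem_est1}. Since the variable $p_*$ is not moved, $f(p_*)$ and $\partial_{p_k}f(p_*)$ are directly controlled by $\|f(t)\|$ times $\exp(-t^{-\varepsilon_1}\bar g^{ab}p_ap_b)$, which cancels the weight we multiply by; the derivatives of the kernel are bounded by $Ct^{-\frac23}$ times a polynomial in $p^0,q^0$, using \eqref{edR1}--\eqref{edR2} for $\partial_{p_k}h,\partial_{p_k}\sqrt s,\partial_{p_k}p^0$, the bound $|e^a_b|\le Ct^{-\frac23}$ together with $\partial_{p_k}\hat p_c=e^c_k$, and the hypothesis that $\hat\sigma$ has bounded derivatives; the remaining factor $f(q_*)\le\|f(t)\|\exp(-t^{-\varepsilon_1}\bar g^{ab}q_aq_b)$ provides the Gaussian decay in $\hat q$. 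Integrating in $q_*$ with the change of variables $(\det g)^{-\frac12}dq_*=d\hat q$ and $\bar g^{ab}q_aq_b=t^{\frac43}|\hat q|^2$, the integral of $\exp(-t^{\frac43-\varepsilon_1}|\hat q|^2)$ against a polynomial is $\le Ct^{-2+\frac32\varepsilon_1}$, which is the claimed bound (with a further gain of $t^{-\frac23}$ from the kernel derivative).

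For the gain term I would use the weight-transfer identity from the proof of Lemma \ref{lem_est1}: writing $\exp(t^{-\varepsilon_1}\bar g^{ab}p_ap_b)$ as the product of $\exp(t^{-\varepsilon_1}\bar g^{ab}p'_ap'_b)$, $\exp(t^{-\varepsilon_1}\bar g^{ab}q'_aq'_b)$, $\exp(t^{-\varepsilon_1}\bar g^{ab}(p_ap_b+q_aq_b-p'_ap'_b-q'_aq'_b))$ and $\exp(-t^{-\varepsilon_1}\bar g^{ab}q_aq_b)$, the middle factor is $\le C$ by Lemma \ref{lem_weight}, while the first two combine with $f(q'_*)$ and with $f(p'_*)$ or $\partial_{p'_i}f(p'_*)$ (controlled by the $n=0$ and $n=1$ parts of $\|f(t)\|$) to give $\le C\|f(t)\|^2$. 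Thus each gain term is bounded by $C\|f(t)\|^2$ times the $(q_*,\omega)$-integral of $\exp(-t^{-\varepsilon_1}\bar g^{ab}q_aq_b)$ against the kernel and the Jacobian factor $\partial_{p_k}p'_i$, for which one invokes Lemma \ref{lem_partial2} in representation \eqref{p'2} and Lemma \ref{lem_partial1} in representation \eqref{p'1}. After $(\det g)^{-\frac12}dq_*=d\hat q$ the polynomial factors in $q^0$ are harmless against $\exp(-t^{\frac43-\varepsilon_1}|\hat q|^2)$ and again produce the power $t^{-2+\frac32\varepsilon_1}$.

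The main obstacle is the control of the remaining momentum-growth in $\partial_{p_k}p'_i$: it grows polynomially in $p^0$ and, in representation \eqref{p'1}, carries the singular factors $|\hat u|^{-1},|\hat u|^{-2},|\hat n|^{-1}$. The singular factors are locally integrable in $q_*$ near $\hat q=\pm\hat p$ and are absorbed after pairing with $v_M$, using $h\le|\hat u|$ (so $h/|\hat u|\le1$) and $h\le\sqrt s$; the genuinely delicate point is the $p^0$-growth, which has to be compensated by the $1/(p^0q^0)$ in $v_M=h\sqrt s/(4p^0q^0)$ together with the lower bound \eqref{e3}, $(n^0)^2-(n_ae^a_b\xi^b)^2\ge s\ge p^0/q^0+q^0/p^0$, supplied by the Glassey--Strauss denominator in \eqref{boltzmann2}; one chooses, term by term, the representation for which the net power of $p^0$ is non-positive, so that the bound is uniform in $p_*$ with the stated exponent. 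The rest is bookkeeping: summing the finitely many terms and using $0<\varepsilon_1<\frac1{12}$ to see that all the time-powers that appear are $\le t^{-2+\frac32\varepsilon_1}$ after possibly enlarging $C$.
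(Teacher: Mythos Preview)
Your overall strategy---differentiate under the integral, split into the pieces $J_1$ (derivative of $v_M\sigma$), $J_2$ (derivative of the gain product), $J_3$ (loss term), and use the weight-transfer identity from Lemma~\ref{lem_est1} on the gain---is exactly the paper's, and your treatment of $J_1$ and $J_3$ is correct. The gap is in $J_2$, specifically in your last paragraph. Neither representation by itself yields a bound that is uniform in $p_*$. In the Glassey--Strauss form, Lemma~\ref{lem_partial2} gives $|\partial_{p_k}p'_i|\le Cp^0(q^0)^4$; if you try to kill the $p^0$ with $v_M$ and the denominator via \eqref{e3} as you suggest, the best you obtain after tracking powers is essentially $(n^0)^2$ times powers of $q^0$, and $(n^0)^2\sim (p^0)^2$ when $p^0\gg q^0$. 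In the Strain form, pairing $h\le|\hat u|$ against $v_M$ removes one power of $|\hat u|^{-1}$ from the worst term $(p^0)^2/|\hat u|^2$ but leaves $\sim\sqrt s\,p^0/(q^0|\hat u|)$, which still grows in $p^0$; local integrability in $q_*$ is beside the point, since the Gaussian in $\hat q$ already makes the $q$-integral finite---the issue is the \emph{supremum} over $p_*$. So the sentence ``one chooses, term by term, the representation for which the net power of $p^0$ is non-positive'' cannot be carried out as written: no such choice exists uniformly in $(p,q)$.

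The missing device is a case split in the \emph{variables}, not in the terms. The paper writes the $\bbs^2$-integral as $K$ and uses the change of variables \eqref{appendix6} to pass freely between the two forms \eqref{p'1} and \eqref{p'2}; it then splits into (a) $|\hat p|^2\le 1$, (b) $|\hat p|^2\ge 1$ with $|\hat p|^2\le 4|\hat q|^2$, and (c) $|\hat p|^2\ge 1$ with $|\hat p|^2\ge 4|\hat q|^2$. In (a) and (b) one has $p^0\le Cq^0$, so the factor $p^0$ in Lemma~\ref{lem_partial2} is converted into a harmless power of $q^0$ and the Glassey--Strauss form closes. In (c) one has $|\hat p\pm\hat q|\ge\tfrac12|\hat p|$ and $p^0\le\sqrt2\,|\hat p|$, hence each of $p^0/|\hat u|$, $p^0/|\hat n|$, $(p^0)^2/|\hat u|^2$ is bounded by an absolute constant, and Lemma~\ref{lem_partial1} gives $|\partial_{p_k}p'_i|\le C(q^0)^3$ with no $p^0$ at all; the Strain form then closes. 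This trichotomy is the step that makes the argument go through, and it is absent from your sketch.
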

\begin{proof}
In the proof of this lemma we need to consider the second expression of the Boltzmann equation \eqref{boltzmann2}. We first consider \eqref{boltzmann1} and take derivative to it with respect to $p_k$. Multiplying the weight function, we obtain the following quantities:
\begin{align*}
&\exp(t^{-\varepsilon_1}\bar{g}^{ab}p_ap_b)\partial_{p_k}Q(f,f)\\
&=(\det g)^{-\frac12}\iint \partial_{p_k}\Big[v_M\sigma({p_*,q_*,\omega})\Big]
\Big(f(p'_*)f(q'_*)-f(p_*)f(q_*)\Big)\exp(t^{-\varepsilon_1}\bar{g}^{ab}p_ap_b)d{\omega}dq_*\\
&+(\det g)^{-\frac12}\iint v_M\sigma(p_*,q_*,\omega)
\partial_{p_k}\Big[f(p'_*)f(q'_*)\Big]\exp(t^{-\varepsilon_1}\bar{g}^{ab}p_ap_b)d{\omega}dq_*\\
&-(\det g)^{-\frac12}\iint v_M\sigma(p_*,q_*,\omega)
\partial_{p_k}f(p_*)f(q_*)\exp(t^{-\varepsilon_1}\bar{g}^{ab}p_ap_b)d{\omega}dq_*
=:J_1+J_2+J_3.
\end{align*}
The estimate of $J_3$ is easily obtained and is
\[
|J_3|\leq C\|f(t)\|^2t^{-2+\frac32\varepsilon_1}.
\]
For $J_1$, we note that the scattering kernel is bounded and 
\[
|\partial_{p_k}v_M|\leq \frac{\sqrt{s}}{p^0q^0}|\partial_{p_k}h|
+\frac{h}{p^0q^0}|\partial_{p_k}\sqrt{s}|
+\frac{h\sqrt{s}}{(p^0)^2q^0}|\partial_{p_k}p^0|
\leq Cq^0t^{-\frac23}\leq Cq^0,
\]
where \eqref{e1} and \eqref{edR2} has been used for the first and the second quantities,
and \eqref{e1} and \eqref{edR1} for the third quantity.
Following the same calculation as in Lemma \ref{lem_est1}, we can see that $J_1$ is estimated as
\begin{align*}
|J_1|&\leq C(\det g)^{-\frac12}\|f(t)\|^2\int q^0\exp(-t^{-\varepsilon_1}\bar{g}^{ab}q_aq_b)dq_*\\
&=C\|f(t)\|^2\int \sqrt{1+|\hat{q}|^2}\exp(-t^{-\varepsilon_1+\frac43}|\hat{q}|^2)d\hat{q}\\
&=C\|f(t)\|^2t^{-2+\frac32\varepsilon_1}
\int \sqrt{1+t^{-\frac43+\varepsilon_1}|z|^2}\exp(-|z|^2)dz
\leq C\|f(t)\|^2t^{-2+\frac32\varepsilon_1},
\end{align*}
where $C$ may depend on initial data.

We now estimate $J_2$. Since the derivative term is expanded as
\[
\partial_{p_k}\Big[f(p'_*)f(q'_*)\Big]
=(\partial_{p_i}f)(p'_*)(\partial_{p_k}p'_i)f(q'_*)
+f(p'_*)(\partial_{p_i}f)(q'_*)(\partial_{p_k}q'_i),
\]
the quantity $J_2$ is separated into two integrals. The calculations are exactly same,
and we only consider the first case, i.e., $J_2=J_{21}+J_{22}$, where
\[
J_{21}=(\det g)^{-\frac12}\iint v_M\sigma(p_*,q_*,\omega)
(\partial_{p_i}f)(p'_*)(\partial_{p_k}p'_i)f(q'_*)
\exp(t^{-\varepsilon_1}\bar{g}^{ab}p_ap_b)d{\omega}dq_*.
\]
Note that the integral $J_{21}$ can be written as
\begin{align*}
J_{21}&=(\det g)^{-\frac12} \int_{\bbr^3} v_M K\exp(t^{-\varepsilon_1}\bar{g}^{ab}p_ap_b)dq_*,
\end{align*}
where
\begin{equation}\label{K1}
K=\int_{\bbs^2}\sigma(p_*,q_*,\omega)
(\partial_{p_i}f)(p'_*)(\partial_{p_k}p'_i)f(q'_*)d{\omega}.
\end{equation}
The second form of the Boltzmann equation \eqref{boltzmann2} is applied to the quantity $K$.
By the transformation ${\omega}\mapsto\xi$ with the Jacobian \eqref{appendix6}, we obtain
\begin{equation}\label{K2}
K=\int_{\bbs^2}\sigma(p_*,q_*,\xi)
(\partial_{p_i}f)(p'_*)(\partial_{p_k}p'_i)f(q'_*)\frac{\sqrt{s}(n^0)^2d\xi}
{((n^0)^2-(n_ae^a_b\xi^b)^2)^{3/2}},
\end{equation}
where $p'_*$ and $q'_*$ are now understood as \eqref{p'2}.
The estimates of $J_{21}$ are separated into three cases as follows:
(a) $|\hat{p}|^2\leq 1$,
(b) $|\hat{p}|^2\geq 1$ and $|\hat{p}|^2\leq 4|\hat{q}|^2$,
(c) $|\hat{p}|^2\geq 1$ and $|\hat{p}|^2\geq 4|\hat{q}|^2$.\bigskip

\noindent (a) In the first case, $p^0$ is bounded, and we can use \eqref{K2}
with the estimate of Lemma \ref{lem_partial2}:
\[
p^0=\sqrt{1+|\hat{p}|^2}\leq \sqrt{2},\quad
|\partial_{p_k}p'_i|\leq Cp^0(q^0)^4\leq C(q^0)^4.
\]
Together with \eqref{e1} and \eqref{e3}, the integral $K$ in \eqref{K2}
is estimated as follows:
\[
|K|\leq C\int_{\bbs^2}\sigma(p_*,q_*,\xi)|\partial_{p_i}f(p'_*)|(q^0)^6f(q'_*)d\xi.
\]
The integral $J_{21}$ is now estimated by the same argument as in the previous lemma:
\[
|J_{21}|\leq C\|f(t)\|^2t^{-2+\frac32\varepsilon_1}.
\]

\noindent (b) In the second case, we use again the second form of $K$ in \eqref{K2}. Then, we have
\[
p^0=\sqrt{1+|\hat{p}|^2}\leq \sqrt{1+4|\hat{q}|^2}\leq 2q^0,\quad
|\partial_{p_k}p'_i|\leq Cp^0(q^0)^4\leq C(q^0)^5.
\]
Together with \eqref{e1} and \eqref{e3}, we have
\[
|K|\leq C\int_{\bbs^2}\sigma(p_*,q_*,\xi)|\partial_{p_i}f(p'_*)|(q^0)^7f(q'_*)d\xi,
\]
and therefore again
\[
|J_{21}|\leq C \|f(t)\|^2t^{-2+\frac32\varepsilon_1}.
\]

\noindent (c) In the third case, we use \eqref{K1}
with the estimate of Lemma \ref{lem_partial1}.
Note that
\[
p^0=\sqrt{1+|\hat{p}|^2}\leq\sqrt{2}|\hat{p}|,\quad
|\hat{p}\pm\hat{q}|\geq|\hat{p}|-|\hat{q}|\geq\frac12|\hat{p}|.
\]
Hence, Lemma \ref{lem_partial1} implies that
\[
|\partial_{p_k}p'_i|\leq C(q^0)^3.
\]
The quantity $K$ in \eqref{K1} is estimated as
\[
|K|\leq C\int_{\bbs^2}\sigma({p_*,q_*,\omega})|\partial_{p_i}f(p'_*)|(q^0)^3f(q'_*)d\omega,
\]
and by the similar calculations we obtain
\[
|J_{21}|\leq C \|f(t)\|^2t^{-2+\frac32\varepsilon_1}.
\]
We combine the above estimates to get
\[
|\exp(t^{-\varepsilon_1}\bar{g}^{ab}p_ap_b)\partial_{p_k}Q(f,f)|\leq
C \|f(t)\|^2t^{-2+\frac32\varepsilon_1},
\]
which completes the proof.
\end{proof}

\begin{lemma}\label{lem_est2}
Suppose that a spacetime is given and the metric satisfies the assumption {\sf (A)}.
If $f$ is a solution of the Boltzmann equation, then $f$ satisfies the following estimate:
\[
|\exp(t^{-\varepsilon_1}\bar{g}^{ab}p_ap_b)\partial_{p_k}f(t,p_*)|\leq \|f(t_0)\|
+C\|f(t)\|^2.
\]
\end{lemma}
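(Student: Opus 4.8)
The plan is to mimic the proof of Lemma \ref{lem_est1}, differentiating the Boltzmann equation once more in $p_k$ and using the bound on $\partial_{p_k}Q(f,f)$ already established in Lemma \ref{Lem_estQ}. In the Bianchi I case the left-hand side of \eqref{boltzmann1} carries no transport term, so the equation reads simply $\partial_t f=Q(f,f)$; differentiating in $p_k$ (the solution being $C^1$, with the exchange of $\partial_t$ and $\partial_{p_k}$ justified as in the iteration scheme) gives $\partial_t(\partial_{p_k}f)=\partial_{p_k}Q(f,f)$.

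Writing $w(t,p_*):=\exp(t^{-\varepsilon_1}\bar{g}^{ab}p_ap_b)$ and differentiating the product, I would obtain
\[
\partial_t\big(w\,\partial_{p_k}f\big)=\Big(\tfrac{d}{dt}\big(t^{-\varepsilon_1}\bar{g}^{ab}p_ap_b\big)\Big)\,w\,\partial_{p_k}f+w\,\partial_{p_k}Q(f,f).
\]
By assumption {\sf (A)}, $\tfrac{d}{dt}(t^{-\varepsilon_1}\bar{g}^{ab})\leq 0$ in the sense of quadratic forms, so the coefficient $\tfrac{d}{dt}(t^{-\varepsilon_1}\bar{g}^{ab}p_ap_b)$ is nonpositive. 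In contrast with Lemma \ref{lem_est1}, here $\partial_{p_k}f$ need not have a sign, so this first term cannot simply be discarded; instead one regards the display as a linear ODE in $t$ for $w\,\partial_{p_k}f$ (at fixed $p_*$) with nonpositive zeroth-order coefficient, whose integrating factor is bounded by $1$. Integrating from $t_0$ to $t$ therefore gives
\[
\big|w(t,p_*)\,\partial_{p_k}f(t,p_*)\big|\leq\big|w(t_0,p_*)\,\partial_{p_k}f(t_0,p_*)\big|+\int_{t_0}^t\big|w(s,p_*)\,\partial_{p_k}Q(f,f)(s,p_*)\big|\,ds.
\]

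The first term on the right is at most $\|f(t_0)\|$ by the definition of the norm. For the integral I would invoke Lemma \ref{Lem_estQ}, namely $|w(s)\,\partial_{p_k}Q(f,f)(s)|\leq C\|f(s)\|^2 s^{-2+\frac32\varepsilon_1}$, and then use $\|f(s)\|\leq\|f(t)\|$ for $s\leq t$; since $0\leq\varepsilon_1<\tfrac{1}{12}$ the exponent satisfies $-2+\tfrac32\varepsilon_1<-1$, so $\int_{t_0}^\infty s^{-2+\frac32\varepsilon_1}\,ds<\infty$ and the integral is bounded by $C\|f(t)\|^2$. Combining the two bounds yields $|w(t,p_*)\,\partial_{p_k}f(t,p_*)|\leq\|f(t_0)\|+C\|f(t)\|^2$. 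Since Lemma \ref{Lem_estQ} does all the analytic work, there is no genuine obstacle; the only points needing a little care are the treatment of the (possibly sign-changing) factor $\partial_{p_k}f$ through the integrating factor, and, as in Lemma \ref{lem_est1}, the use of assumption {\sf (A)} to give the time-derivative of the weight the right sign.
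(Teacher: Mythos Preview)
Your proposal is correct and follows essentially the same approach as the paper: differentiate the Boltzmann equation in $p_k$, multiply by the weight $e^{t^{-\varepsilon_1}\bar{g}^{ab}p_ap_b}$, use the nonpositivity from assumption {\sf (A)} together with an integrating-factor argument to absorb the zeroth-order term, and then invoke Lemma \ref{Lem_estQ} and integrate $s^{-2+\frac32\varepsilon_1}$. The only cosmetic difference is that the paper writes $w$ for the exponent (so its integrating factor is literally $e^{-w}$), whereas you write $w$ for the full weight; your remark that the ``integrating factor is bounded by $1$'' should really refer to the propagators $e^{u(t)-u(s)}\le 1$ in the variation-of-constants formula, but the conclusion you draw is exactly right.
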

\begin{proof}
For simplicity, let us write
\[
w=t^{-\varepsilon_1}\bar{g}^{ab}p_ap_b,\quad
\phi=e^w\partial_{p_k}f,\quad
M=e^w\partial_{p_k}Q(f,f).
\]
Then, the Boltzmann equation can be rewritten as
\[
\partial_t\phi=(\partial_tw)\phi+M.
\]
Note that $w$ is nonnegative and decreasing by \eqref{est_gbar}.
Multiplying $e^{-w}$ to the above ODE and integrating it from $t_0$ to $t$, we have
\[
e^{-w(t)}\phi(t)=e^{-w(t_0)}\phi(t_0)+\int_{t_0}^te^{-w(s)}M(s)ds,
\]
and then
\[
|\phi(t)|\leq |\phi(t_0)|+\int_{t_0}^t|M(s)|ds.
\]
Applying Lemma \ref{Lem_estQ}, we conclude
\[
|\exp(t^{-\varepsilon_1}\bar{g}^{ab}p_ap_b)\partial_{p_k}f(t,p_*)|
\leq\|f(t_0)\|+C\|f(t)\|^2,
\]
which completes the proof.
\end{proof}

It is now easy to prove the global-in-time existence of classical solutions to the Boltzmann equation.
We combine Lemma \ref{lem_est1} and \ref{lem_est2} to obtain
\[
\|f(t)\|\leq \|f(t_0)\|+C\|f(t)\|^2,
\]
and by the well-known arguments as in \cite{Glassey,G06,G01,IS84,Lee2,S101},
we obtain the global-in-time existence of classical solutions for small initial data.

\begin{prop}\label{Prop_Boltzmann}
Suppose that a spacetime is given and the metric satisfies the assumption {\sf (A)}.
There exists a small positive constant $\varepsilon_0$ such that if $f(t_0)$ is $C^1$
and satisfies $\|f(t_0)\|<\varepsilon_0$,
then there exists a unique classical solution to the Boltzmann equation
corresponding to the initial data $f(t_0)$ satisfying
$\|f(t)\|\leq C\|f(t_0)\|$ on $[t_0,\infty)$. The solution $f$ is nonnegative and satisfies
in an orthonormal frame
\[
\hat{f}(t,\hat{p})\leq C\|f(t_0)\|\exp(-t^{-\varepsilon_1+\frac43}|\hat{p}|^2),
\]
where $\varepsilon_1$ is the constant given in the assumption {\sf (A)}.
\end{prop}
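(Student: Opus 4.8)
The plan is to run the standard Glassey–Strauss / iteration argument for the relativistic Boltzmann equation, but adapted to the time-dependent weight $\exp(t^{-\varepsilon_1}\bar g^{ab}p_ap_b)$. First I would set up the iteration scheme $f^{(0)}=f(t_0)$ and, given $f^{(m)}$, let $f^{(m+1)}$ solve the \emph{linear} transport equation $\partial_t f^{(m+1)}=Q_+(f^{(m)},f^{(m)})-f^{(m+1)}\nu(f^{(m)})$, where $\nu$ is the collision frequency from the loss term; this is the Kaniel–Shinbrot / Illner–Shinbrot type split that makes each step a positivity-preserving linear ODE along the characteristics $\dot p_b=\tfrac{1}{p^0}C^d_{ab}p^ap_d$ (which vanish in the Bianchi I case, so the characteristics are trivial and $p_*$ is constant along them). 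Local existence of each iterate is immediate since the right-hand side is a bounded integral operator once $\|f^{(m)}(t)\|$ is controlled, using that $v_M$ and $\sigma$ are bounded and the Gaussian weight gives integrability in $q_*$.

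Next I would establish the uniform-in-$m$ bound. Lemma~\ref{lem_est1} and Lemma~\ref{lem_est2} (whose proofs only used that $f$ solves the Boltzmann equation and that the metric satisfies {\sf (A)}, via Lemma~\ref{lem_weight} and Lemma~\ref{Lem_estQ}) give, when applied to the iteration, an inequality of the schematic form $\|f^{(m+1)}(t)\|\leq \|f(t_0)\|+C\|f^{(m)}(t)\|^2$. A standard fixed-point-on-the-reals argument then shows: if $\|f(t_0)\|<\varepsilon_0$ with $\varepsilon_0$ chosen so that $4C\varepsilon_0<1$, the sequence $a_{m+1}=\|f(t_0)\|+Ca_m^2$ with $a_0=\|f(t_0)\|$ stays bounded by the smaller root $a_\ast=\frac{1-\sqrt{1-4C\|f(t_0)\|}}{2C}\leq 2\|f(t_0)\|$, hence $\|f^{(m)}(t)\|\leq C\|f(t_0)\|$ uniformly in $m$ and $t\in[t_0,\infty)$. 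Because each iterate is nonnegative (the linear loss equation preserves positivity and $Q_+\geq 0$), the limit is nonnegative.

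Then I would prove convergence: writing the difference of two consecutive iterates, $Q$ is bilinear so $Q(f^{(m+1)},f^{(m+1)})-Q(f^{(m)},f^{(m)})$ is linear in $f^{(m+1)}-f^{(m)}$ with coefficients controlled by $\|f^{(m)}\|+\|f^{(m+1)}\|\leq C\|f(t_0)\|$; the same weighted estimates as in Lemma~\ref{lem_est1}, together with the integrability of $t^{-2+\frac32\varepsilon_1}$ on $[t_0,\infty)$ coming from $0\le\varepsilon_1<\frac1{12}$, give a contraction $\|f^{(m+1)}-f^{(m)}\|\leq C\|f(t_0)\|\,\|f^{(m)}-f^{(m-1)}\|$ with $C\|f(t_0)\|<1$ after shrinking $\varepsilon_0$. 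Hence $f^{(m)}\to f$ in the $\|\cdot\|$-norm, $f$ is $C^1$ (the $n=1$ part of the norm controls the $p_k$-derivative and Lemma~\ref{Lem_estQ} is stated precisely so that $\partial_{p_k}f$ inherits the same estimate), $f$ solves the Boltzmann equation, and $\|f(t)\|\leq C\|f(t_0)\|$ on $[t_0,\infty)$. Uniqueness follows from the same contraction estimate applied to two solutions with the same data. Finally, unwinding the definition of $\|f(t)\|$ and using $t^{-\varepsilon_1}\bar g^{ab}p_ap_b=t^{-\varepsilon_1+\frac43}|\hat p|^2$ gives the pointwise decay $\hat f(t,\hat p)\leq C\|f(t_0)\|\exp(-t^{-\varepsilon_1+\frac43}|\hat p|^2)$.

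The main obstacle I anticipate is not the functional-analytic iteration, which is routine, but verifying that Lemma~\ref{Lem_estQ}'s estimate survives when $f$ is replaced by successive iterates rather than an actual solution — in particular that the change-of-variables $\omega\mapsto\xi$ used in passing between \eqref{K1} and \eqref{K2}, and the three-case split (a)/(b)/(c) on $|\hat p|$ versus $|\hat q|$, go through with the \emph{linearized} gain term, so that the singular factors $p^0/|\hat u|$, $p^0/|\hat n|$, $(p^0)^2/|\hat u|^2$ in Lemma~\ref{lem_partial1} remain integrable against the Gaussian in $q_*$ after they are dominated by powers of $q^0$. One must also check the technical point that the weight $w(t,p_*)=t^{-\varepsilon_1}\bar g^{ab}p_ap_b$ is nonincreasing in $t$ along characteristics (here immediate since characteristics are constant and \eqref{est_gbar} gives $\partial_t w\le 0$), which is what lets the loss term be discarded with the correct sign in every estimate.
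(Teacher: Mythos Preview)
Your proposal is correct and takes essentially the same approach as the paper. The paper's own ``proof'' of this proposition consists of combining Lemma~\ref{lem_est1} and Lemma~\ref{lem_est2} to obtain the a~priori inequality $\|f(t)\|\leq \|f(t_0)\|+C\|f(t)\|^2$ and then invoking, in one sentence, the ``well-known arguments'' of Illner--Shinbrot, Glassey, Guo, and Strain for near-vacuum data; what you have written is precisely a fleshed-out version of those cited arguments (positivity-preserving gain/loss split, quadratic recursion on the norm, contraction for convergence and uniqueness), and your anticipated obstacles are the right technical checkpoints but present no genuine difficulty since Lemmas~\ref{lem_weight}, \ref{lem_partial2}, \ref{lem_partial1}, and \ref{Lem_estQ} are estimates on $Q(f,f)$ itself rather than on solutions and therefore transfer directly to the iterates.
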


\section{Main results}
We are now ready to prove global-in-time existence of classical solutions to the Einstein-Boltzmann
system \eqref{evolution1}--\eqref{S_ab}. The existence is proved by a standard iteration method.
Suppose that initial data
$g_{ab}(t_0)$, $k_{ab}(t_0)$, and $f(t_0)$ are given such that $F(t_0)\neq 0$ is sufficiently small,
and define an iteration for $\{g_n\}$, $\{k_n\}$, and $\{f_n\}$ as follows.
Let $(g_0)_{ab}(t)=t^{\frac43}\bar{g}_{ab}(t_0)$ and
$(k_0)_{ab}(t)=k_{ab}(t_0)$.
Then, there exists $(\theta_0)^a_b$ satisfying $(g_0)_{ab}=(\theta_0)^c_a(\theta_0)^d_b\eta_{cd}$,
which is given by $(\theta_0)^a_b(t)=t^{\frac23}\bar{\theta}^a_b(t_0)$,
and let $(e_0)^a_b$ be the inverse of $(\theta_0)^a_b$.
Here, $\bar{\theta}^a_b(t_0)$ is uniquely
determined by $(\bar{g}_0)_{ab}(t_0)=(\bar{\theta}_0)^c_a(t_0)(\bar{\theta}_0)^d_b(t_0)\eta_{cd}$.
Note that $(\bar{g}_0)_{ab}(t)=\bar{g}_{ab}(t_0)$, i.e., $(\bar{g}_0)_{ab}$ is a constant matrix,
hence clearly we have $d(t^{-\varepsilon_1}(\bar{g}_0)^{ab})/dt\leq 0$. In other words,
$g_0$ satisfies the assumption {\sf (A)} of Section \ref{Sec_Boltzmann}.
Then, by Proposition \ref{Prop_Boltzmann}, there exists a small positive constant $\varepsilon_0$
such that if $\|f(t_0)\|<\varepsilon_0$, then there exists a unique classical solution $f_0$,
which is the solution of the Boltzmann equation in a given spacetime with metric $g_0$
and satisfies $\|f_0(t)\|\leq C\|f(t_0)\|$.
Now, suppose that $f_n$ is given such that $\|f_n(t)\|\leq C\|f(t_0)\|$ with $\|f(t_0)\|<\varepsilon_0$.
Note that, since $\varepsilon_1<\frac{1}{12}$, we have in an orthonormal frame,
$\hat{f}_n(t,\hat{p})\leq C\|f(t_0)\|\exp(-t^{\frac54}\|\hat{p}|^2)$.
We now apply Proposition \ref{Prop_Einstein}.
Since we have assumed that $\|F(t_0)\|\neq 0$ is sufficiently small, there exists a small positive
constant $C_f$ such that if $f_n$ satisfies \eqref{assump_distribution}, then $g_{n+1}$ and $k_{n+1}$ exist
globally in time, which are the solutions of ODEs, which result when $g$ and $k$ of
\eqref{evolution1}--\eqref{evolution2} are replaced by $g_{n+1}$ and $k_{n+1}$, respectively,
and $\rho$ and $S_{ab}$ are constructed by $f_n$ via \eqref{rho} and \eqref{S_ab}, respectively.
Taking $\varepsilon_0$ smaller, if necessary, we can conclude that if $f_n$ is given such that
$\|f_n(t)\|\leq C\|f(t_0)\|$ with $\|f(t_0)\|<\varepsilon_0$, then $g_{n+1}$ and $k_{n+1}$ are
constructed such that they satisfy the assumption {\sf (A)}. Now, Proposition \ref{Prop_Boltzmann} again
applies, we obtain $f_{n+1}$, and this completes the iteration.

We have obtained iteration functions $\{g_n\}$, $\{k_n\}$, and $\{f_n\}$.
The estimates of Section \ref{Sec_Einstein} show that
the following quantities are uniformly bounded:
\[
\Big|t^2H_n(t)-\frac23t\Big|,\quad
|t^2F_n(t)|,\quad
|(\bar{g}_n)_{ab}(t)|,\quad
|(\bar{g}_n)^{ab}(t)|,\quad
|t^2(\dot{\bar{g}}_n)_{ab}(t)|,\quad
|t^2(\dot{\bar{g}}_n)^{ab}(t)|,
\]
where $H_n$ and $F_n$ are understood as the Hubble variable and the shear constructed by $g_n$ and $k_n$.
Using the evolution equations \eqref{evolution1}--\eqref{evolution2}, we can see that
the following quantities are also bounded:
\[
|t^{-\frac13}(k_n)_{ab}(t)|,\quad
|t^{\frac23}(\dot{k}_n)_{ab}(t)|.
\]
The estimates of Section \ref{Sec_Boltzmann} show that the following expressions are bounded:
\[
\exp(t^{-\varepsilon_1}(\bar{g}_n)^{ab}p_ap_b)f_n(t,p_*),\quad
|\exp(t^{-\varepsilon_1}(\bar{g}_n)^{ab}p_ap_b)\partial_{p_k}f_n(t,p_*)|.
\]
Taking limit, up to a subsequence, we find continuous functions $g$, $k$, and $f$.
Using the equations \eqref{evolution1}--\eqref{evolution2} again, it turns out that
$g$ and $k$ are classical solutions of \eqref{evolution1}--\eqref{evolution2}.
Similarly, we can show that $f$ is a classical solution of the Boltzmann equation
with additional estimates.
Uniqueness is also easily proved by a standard argument, and asymptotic behaviours
are obtained by the arguments of Section \ref{Sec_Einstein} and \ref{Sec_Boltzmann}.
The following is the main theorem.

\begin{thm}\label{Thm}
Consider the Einstein-Boltzmann system with Bianchi I symmetry \eqref{evolution1}--\eqref{S_ab}.
Suppose that the assumption on the scattering kernel holds
and the Hubble variable is initially positive, and set the time origin as in \eqref{choice}.
Let $g_{ab}(t_0)$, $k_{ab}(t_0)$, and $f(t_0)$ be initial data of the Einstein-Boltzmann system
satisfying the constraints \eqref{constraint1}--\eqref{constraint2} such that
$f(t_0)$ is $C^1$ and $F(t_0)\neq 0$ is sufficiently small.
Then, there exist small positive constants $\varepsilon_0$
and $\varepsilon_1$ such that if $\|f(t_0)\|<\varepsilon_0$, then there exist
unique classical solutions $g_{ab}$, $k_{ab}$, and $f$ to the Einstein-Boltzmann system
corresponding to the initial data.
The solutions exist globally in time, and the distribution function $f$ is nonnegative.
Moreover, there exist a small positive constant $\varepsilon$ and constant matrices $\mathcal{G}_{ab}$
and $\mathcal{G}^{ab}$ such that the following estimates hold:
\begin{eqnarray*}
H(t)&=&\frac23t^{-1}(1+O(\varepsilon t^{-1})),\\
F(t)&=&O(\varepsilon t^{-2}),\\
g_{ab}(t)&=&t^{+\frac43}\Big(\mathcal{G}_{ab}+O(\varepsilon t^{-1})\Big),\\
g^{ab}(t)&=&t^{-\frac43}\Big(\mathcal{G}^{ab}+O(\varepsilon t^{-1})\Big).
\end{eqnarray*}
The distribution function $f$ is bounded as $\|f(t)\|\leq C\|f(t_0)\|$,
and in particular it satisfies
\[
\hat{f}(t,\hat{p})\leq C\|f(t_0)\|\exp(-t^{-\varepsilon_1+\frac43}|\hat{p}|^2)
\]
in an orthonormal frame.
\end{thm}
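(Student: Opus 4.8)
The plan is to prove the theorem by a Picard-type iteration that decouples the Einstein and the Boltzmann parts, invoking Proposition~\ref{Prop_Einstein} and Proposition~\ref{Prop_Boltzmann} alternately, and then to pass to a limit by a compactness argument.

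First I would set up the iteration exactly as indicated above: take $(g_0)_{ab}(t)=t^{\frac43}\bar g_{ab}(t_0)$ and $(k_0)_{ab}(t)=k_{ab}(t_0)$, so that $(\bar g_0)_{ab}$ is constant and $g_0$ trivially satisfies assumption~{\sf (A)} of Section~\ref{Sec_Boltzmann}. Given $g_n$ satisfying~{\sf (A)}, Proposition~\ref{Prop_Boltzmann} produces $f_n$ with $\|f_n(t)\|\le C\|f(t_0)\|$, hence, since $\varepsilon_1<\frac1{12}$, the orthonormal-frame bound $\hat f_n(t,\hat p)\le C\|f(t_0)\|\exp(-t^{\frac54}|\hat p|^2)$, i.e.\ \eqref{assump_distribution} with $C_f=C\|f(t_0)\|$. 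Feeding $f_n$ into \eqref{rho}, \eqref{S_ab} and then into \eqref{evolution1}--\eqref{evolution2}, Proposition~\ref{Prop_Einstein} yields $g_{n+1},k_{n+1}$ globally on $[t_0,\infty)$ with the asserted asymptotics; the metric estimates together with \eqref{est_gbar} show that $g_{n+1}$ again satisfies~{\sf (A)}. The delicate point is the hierarchy of the small constants: one fixes $\varepsilon_0$ small enough that $C_f=C\varepsilon_0$ is below the threshold of Proposition~\ref{Prop_Einstein}, so that the resulting $\varepsilon$ and the associated $\varepsilon_1$ are small and, crucially, can be chosen uniformly in $n$; shrinking $\varepsilon_0$ once more if necessary closes the loop so that {\sf (A)} is reproduced at every stage and the iteration is well defined.

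Next I would extract $n$-independent bounds. From the estimates of Section~\ref{Sec_Einstein} one controls $|t^2H_n-\frac23 t|$, $|t^2F_n|$, $|(\bar g_n)_{ab}|$, $|(\bar g_n)^{ab}|$, $|t^2(\dot{\bar g}_n)_{ab}|$, $|t^2(\dot{\bar g}_n)^{ab}|$, and via \eqref{evolution1}--\eqref{evolution2} also $|t^{-\frac13}(k_n)_{ab}|$ and $|t^{\frac23}(\dot k_n)_{ab}|$; from Section~\ref{Sec_Boltzmann} one controls $\exp(t^{-\varepsilon_1}(\bar g_n)^{ab}p_ap_b)f_n$ and its $p_k$-derivative. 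These yield equicontinuity on compact subsets of $[t_0,\infty)\times\bbr^3$, so by an Arzel\`a--Ascoli/diagonal argument a subsequence of $(g_n,k_n,f_n)$ converges, with the relevant first derivatives, to continuous limits $(g,k,f)$. Passing to the limit in \eqref{evolution1}--\eqref{evolution2} and in \eqref{boltzmann1} --- for the collision term one uses the Gaussian weight to dominate the $q_*$-integration and the continuity of $p'_*,q'_*$ in the converging metric --- shows $(g,k,f)$ is a classical solution of the Einstein--Boltzmann system; the constraints \eqref{constraint1}--\eqref{constraint2} propagate from $t_0$ in the standard way, and $f\ge 0$ is inherited from the iterates.

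Uniqueness then follows from a Gronwall estimate for the difference of two solutions, subtracting \eqref{evolution1}--\eqref{evolution2} and \eqref{boltzmann1} and using the weighted norm of Section~\ref{Sec_Boltzmann} together with the Lipschitz-type bounds on $v_M\sigma$ and on the post-collision momenta already recorded in Lemmas~\ref{lem_partial2}--\ref{lem_partial1}. Finally the asymptotics require no new work: for the genuine solution $f$ satisfies \eqref{assump_distribution} with $C_f=C\|f(t_0)\|$, so Proposition~\ref{Prop_Einstein} gives directly the expansions of $H$, $F$, $g_{ab}$, $g^{ab}$ with constant matrices $\mathcal{G}_{ab},\mathcal{G}^{ab}$, and Proposition~\ref{Prop_Boltzmann} gives the stated bound on $\hat f$. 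The main obstacle I anticipate is precisely the bookkeeping of the nested smallness constants ($F(t_0)$, $\varepsilon_0$, $C_f$, $\varepsilon$, $\varepsilon_1$), ensuring that the iterates stay in the regime where both propositions apply with constants independent of $n$; given the estimates of Sections~\ref{Sec_Einstein}--\ref{Sec_Boltzmann}, the compactness step and the uniqueness argument are comparatively routine.
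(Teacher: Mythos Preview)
Your proposal is correct and follows essentially the same route as the paper: the same Picard-type iteration with the same seed $(g_0)_{ab}(t)=t^{4/3}\bar g_{ab}(t_0)$, alternating Propositions~\ref{Prop_Einstein} and~\ref{Prop_Boltzmann} with the same interplay of the smallness constants, the same list of uniform bounds, extraction of a convergent subsequence, and then reading off the asymptotics directly from the two propositions. Your write-up is in fact more explicit than the paper's on the compactness step (Arzel\`a--Ascoli) and on uniqueness (Gronwall), which the paper dismisses as ``standard''.
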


With the results of the theorem, we are able to show that the spacetime behaves as dust at late times.
We show that $\frac{S}{\rho}$ tends to zero which means that the ``pressure'' tends to zero
in comparison with the energy density. In an orthonormal frame the estimates of $S$ and $\rho$ are as follows:
\begin{align*}
\rho&=\int_{\bbr^3}\hat{f}(t,\hat{p})(1+|\hat{p}|^2)^{\frac12}d\hat{p}
\leq C\int_{\bbr^3}\exp(-t^{-\varepsilon_1+\frac43}|\hat{p}|^2)(1+|\hat{p}|^2)^{\frac12}d\hat{p}
\leq Ct^{-2+\frac32\varepsilon_1},\\
|\hat{S}_{ab}|&\leq\int_{\bbr^3}\hat{f}(t,\hat{p})|\hat{p}_a||\hat{p}_b|(1+|\hat{p}|^2)^{-\frac12}d\hat{p}
\leq C\int_{\bbr^3}\exp(-t^{-\varepsilon_1+\frac43}|\hat{p}|^2)|\hat{p}|^2d\hat{p}
\leq Ct^{-\frac{10}{3}+\frac{5}{2}\varepsilon_1}.
\end{align*}
Since $S=\eta^{ab}\hat{S}_{ab}$, we obtain that $S=O(t^{-\frac{10}{3}+\frac52\varepsilon_1})$
and $\rho=O(t^{-2+\frac32\varepsilon_1})$.
For $\frac{S}{\rho}$, we need an estimate of $\rho$ in the other direction.
We can use the fact that $\rho\geq N^0$. Using the upper bound of $H$ in \eqref{estH}
and the differential equation (\ref{N0}), we have
\begin{eqnarray*}
\dot{N}^0=-3HN^0\geq -2t^{-1}N^0,
\end{eqnarray*}
which can be integrated to obtain
\begin{eqnarray*}
\rho(t)\geq N^0(t)\geq N^0(t_0)\bigg(\frac{t}{t_0}\bigg)^{-2}.
\end{eqnarray*}
We obtain the following corollary.

\begin{cor}
Consider the solutions of the Einstein-Boltzmann system given in Theorem \ref{Thm}.
There exists a small positive constant $\varepsilon$ such that
the matter terms have the following asymptotic behaviour at late times:
\begin{eqnarray*}
\rho&=&O(t^{-2+\varepsilon}),\\
S&=&O(t^{-\frac{10}{3}+\varepsilon}),\\
\frac{S}{\rho}&=&O(t^{-\frac43+\varepsilon}).
\end{eqnarray*}
\end{cor}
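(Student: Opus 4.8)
The plan is to deduce everything from Theorem \ref{Thm}: the pointwise Gaussian decay of $\hat f$ yields the upper bounds on $\rho$ and $S$, conservation of particle number together with the a priori bound on the Hubble variable yields the lower bound on $\rho$, and the ratio estimate then follows by division.

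First I would bound $\rho$ and $S$ from above. Writing these matter terms in an orthonormal frame as in Section \ref{Sec_Einstein}, one has $\rho=\int_{\bbr^3}\hat f(t,\hat p)(1+|\hat p|^2)^{1/2}\,d\hat p$ and $S=\eta^{ab}\hat S_{ab}$ with $|\hat S_{ab}|\le\int_{\bbr^3}\hat f(t,\hat p)|\hat p|^2(1+|\hat p|^2)^{-1/2}\,d\hat p$. Inserting the bound $\hat f(t,\hat p)\le C\|f(t_0)\|\exp(-t^{-\varepsilon_1+4/3}|\hat p|^2)$ from Theorem \ref{Thm} and rescaling $\hat p=t^{-(4/3-\varepsilon_1)/2}z$ converts both integrals into convergent Gaussian integrals multiplied by the Jacobian $t^{-3(4/3-\varepsilon_1)/2}$; for $S$ the extra factor $|\hat p|^2$ contributes a further $t^{-(4/3-\varepsilon_1)}$. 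This yields $\rho=O(t^{-2+\frac32\varepsilon_1})$ and $\hat S_{ab}=O(t^{-\frac{10}{3}+\frac52\varepsilon_1})$, hence $S=O(t^{-\frac{10}{3}+\frac52\varepsilon_1})$.

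Next I would bound $\rho$ from below, which is needed only for the ratio $S/\rho$. From the representation \eqref{rho} of $\rho$ and the definition of $N^0$ we have $\rho\ge N^0$, since $(1+g^{cd}p_cp_d)^{1/2}\ge1$. In the Bianchi I case particle-number conservation reduces to \eqref{N0}, $\dot N^0=-3HN^0$; combining this with the a priori upper bound $H\le\frac23 t^{-1}$ from \eqref{estH} gives $\dot N^0\ge-2t^{-1}N^0$, which integrates to $N^0(t)\ge N^0(t_0)(t/t_0)^{-2}$. As the distribution function is nonnegative (and not identically zero, so that the ratio $S/\rho$ is meaningful), $N^0(t_0)>0$, and therefore $\rho(t)\ge Ct^{-2}$.

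Finally, dividing the upper bound on $S$ by the lower bound on $\rho$ gives $S/\rho=O(t^{-\frac43+\frac52\varepsilon_1})$. Taking $\varepsilon:=\frac52\varepsilon_1$ (the exponents in the bounds for $\rho$ and $S$ carry smaller multiples of $\varepsilon_1$, so they hold a fortiori with this $\varepsilon$) produces the three stated estimates. I expect the only point requiring care to be the lower bound: one must recognise that the only quantity conserved well enough to bound $\rho$ from below is the particle number, and that its decay rate is governed by the already-established sharp upper bound on $H$ rather than by the lower bound in \eqref{estH}. The remaining steps are routine Gaussian integral computations downstream of Theorem \ref{Thm}.
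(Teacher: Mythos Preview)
Your proposal is correct and follows essentially the same approach as the paper: the upper bounds on $\rho$ and $S$ come from the Gaussian decay of $\hat f$ in the orthonormal frame via the same rescaling, and the lower bound on $\rho$ is obtained exactly as you describe, via $\rho\ge N^0$ together with $\dot N^0=-3HN^0$ and $H\le\frac23 t^{-1}$. The paper also arrives at $\rho=O(t^{-2+\frac32\varepsilon_1})$, $S=O(t^{-\frac{10}{3}+\frac52\varepsilon_1})$, and then divides, so your choice $\varepsilon=\frac52\varepsilon_1$ matches.
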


Another consequence of the theorem is the following:
\begin{cor}
Consider the solutions of the Einstein-Boltzmann system given in Theorem \ref{Thm}. The spacetime in consideration is future geodesically complete.
\end{cor}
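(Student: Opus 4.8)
The plan is to combine the explicit late-time form of the metric established in Theorem~\ref{Thm} with the elementary structure of geodesics in a Bianchi~I spacetime. First I would work in a time-independent left-invariant frame, in which all structure constants of Bianchi~I vanish, so that the geodesic equation $\frac{dv_\beta}{d\tau}=\frac12\big[E_\beta(g_{\alpha\mu})+2\gamma^\delta_{\mu\beta}g_{\alpha\delta}\big]v^\alpha v^\mu$ reduces, for a spacelike index $\beta=b$, to $\dot v_b=0$; hence along any geodesic the covariant spatial components $v_a$ are constants. A future-directed causal geodesic has $v^0=dt/d\tau>0$, so it can be parametrized by the coordinate time $t$, and the normalization $v_\alpha v^\alpha=-\kappa$ with $\kappa\ge 0$, together with $g_{0a}=0$ and $g^{00}=-1$, gives
\[
\frac{dt}{d\tau}=v^0=\sqrt{\kappa+g^{ab}(t)\,v_a v_b},
\]
where the $v_a$ are now fixed numbers. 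Future geodesic completeness then reduces to two claims: that $v^0$ stays bounded on $[t_0,\infty)$, so that the geodesic reaches arbitrarily large $t$ and cannot blow up at finite affine parameter, and that $\int_{t_0}^\infty (v^0)^{-1}\,dt=\infty$, so that the value $t=\infty$ is attained only as $\tau\to\infty$.

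The core of the argument is then to insert $g^{ab}(t)=t^{-4/3}\big(\mathcal{G}^{ab}+O(\varepsilon t^{-1})\big)$ from Theorem~\ref{Thm}, where $\mathcal{G}^{ab}$ is the inverse of $\mathcal{G}_{ab}$ and, being a bounded limit of positive-definite matrices with uniformly bounded inverses, is itself constant and positive definite. Hence $g^{ab}(t)v_av_b$ is comparable to $t^{-4/3}$ with constants depending only on the fixed geodesic. For a timelike geodesic ($\kappa>0$) this gives $\sqrt{\kappa}\le v^0\le\sqrt{\kappa+Ct^{-4/3}}$, so $v^0$ is bounded on $[t_0,\infty)$ and $\tau(t)=\int_{t_0}^t(v^0)^{-1}\,dt'\ge c\,(t-t_0)\to\infty$. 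For a null geodesic ($\kappa=0$) the numbers $v_a$ cannot all vanish, so positive-definiteness of $\mathcal{G}^{ab}$ yields $v^0=t^{-2/3}\sqrt{(\mathcal{G}^{ab}+O(\varepsilon t^{-1}))v_av_b}$ with $v^0$ comparable to $t^{-2/3}$ at late times; thus $v^0$ is again bounded and $\tau(t)=\int_{t_0}^t(v^0)^{-1}\,dt'$ diverges at the rate $t^{5/3}$.

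Finally I would close the argument at the level of the geodesic flow: since $v^0$ is a continuous bounded function of $t$ on $[t_0,\infty)$ and $\dot t=v^0(t)$, the geodesic extends to all finite values of $t$, so its maximal future affine interval is either $[\tau_0,\infty)$ or a half-open interval on which $t\to\infty$, and the second alternative is excluded by the divergence of $\tau(t)$ shown above; hence every future-directed causal geodesic is future complete. The only point needing any care is the null case, where $v^0\to 0$ and one must check that the affine length still diverges; this is exactly where the Einstein--de Sitter rate $g^{ab}\sim t^{-4/3}$ (equivalently $g_{ab}\sim t^{4/3}$) of Theorem~\ref{Thm} is used, producing $\tau\sim t^{5/3}$, and no finer control of the metric is required.
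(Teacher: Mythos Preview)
Your argument is correct and follows essentially the same route as the paper: constancy of the covariant spatial momenta along geodesics in Bianchi~I, together with the decay $g^{ab}\sim t^{-4/3}$ from Theorem~\ref{Thm}, gives a uniform positive lower bound on $d\tau/dt=1/v^0$ and hence $\tau\to\infty$. You are in fact more careful than the paper, which writes $p^0=\sqrt{1+g^{ij}p_ip_j}$ (literally only the unit-mass timelike case) and leaves the null case implicit; your separate treatment of $\kappa=0$, with $v^0\sim t^{-2/3}$ and $\tau\sim t^{5/3}$, fills that gap cleanly.
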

In order to prove this, the main point is to consider the relation between the time variable $t$ and the affine parameter $\tau$ along any future directed causal geodesic which is:
\begin{eqnarray*}
\frac{d\tau}{dt}=\frac{1}{p^0}=\frac{1}{\sqrt{1+g^{ij}p_i p_j}}>C.
\end{eqnarray*}
The last inequality is obtained by realizing that the $p_i$ are constant along the characteristics for Bianchi I spacetimes and that the inverse of the metric decays. Integrating the inequality we see that $\tau$ goes to infinity as $t$ does.

\section{Outlook}
We have shown under small data assumptions and an assumption on the scattering kernel, that solutions to the Einstein-Boltzmann system with Bianchi I symmetry are future geodesically complete, that they isotropize and have a dust-like behaviour at late times. Spacetimes of Bianchi type I are the simplest case of the Bianchi class A. It would be thus of interest to consider other Bianchi types. Here we have treated massive particles and a natural generalization would be to consider massless particles or to consider a collection of particles with different masses. Our assumption on the scattering kernel is a technical assumption and we plan to remove or at least improve this assumption in the future. In the present paper we were dealing in a homogeneous setting. It is a simplifying assumption which should be removed as well, since conclusions do not have to prevail in an inhomogeneous setting as showed in \cite{preHans}.

\section*{Appendix}
In this part we study the Boltzmann collision operator and parametrization
of post-collision momenta in relativistic cases.
We first consider the Minkowski case, in which case
two different representations of the collision operator are known,
which are given by Glassey and Strauss \cite{GS93} and Strain \cite{S11}.
We use the argument given in \cite{GS93} to derive a similar
form of the collision operator, and it will be shown that
this form of the collision operator can also be derived in a direct way.
To extend these representations to the Bianchi type I case we apply the orthonormal
frame approach. An explicit expression of orthonormal frame will be given, and
then the representations for the Bianchi type I case will be obtained.

\subsection*{Collision operator in the Minkowski case}
Let us consider the collision operator of the
Boltzmann equation in the special relativistic case.
In its original form we have
\[
Q(f,f)=\frac{1}{p^0}\iiint_{\bbr^9}W(p,q|p',q')
\Big(f(p')f(q')-f(p)f(q)\Big)\frac{dp'}{p'^0}\frac{dq'}{q'^0}\frac{dq}{q^0},
\]
where the quantity $W$ is called
the transition rate \cite{S11}, and it is given by
\[
W(p,q|p',q')=\frac{s}{2}\sigma(h,\theta)\delta^{(4)}(p^\mu+q^\mu - p'^\mu - q'^\mu),
\]
where $\delta^{(4)}$ is the four-dimensional Dirac delta function which expresses the conservation of energy and momentum.
It is known in \cite{S11} that the nine dimensional integration above is reduced to
\begin{equation}
Q(f,f)=
\int_{\bbr^3}\int_{\bbs^2}\frac{h\sqrt{s}}{4p^0q^0}\sigma(h,\theta)
\Big(f(p')f(q')-f(p)f(q)\Big)
d\omega dq,\label{appendix1}
\end{equation}
and the post-collision momentum is parametrized as follows:
\begin{equation}
\left(
\begin{array}{c}
p'^0\\
p'^k
\end{array}
\right)=
\left(
\begin{array}{c}
\displaystyle
\frac{p^0+q^0}{2}
+\frac{h}{2}
\frac{(n\cdot\omega)}{\sqrt{s}}\\
\displaystyle
\frac{p^k+q^k}{2}
+\frac{h}{2}
\bigg(\omega^k
+\bigg(\frac{n^0}{\sqrt{s}}-1\bigg)\frac{(n\cdot\omega)n^k}{|n|^2}\bigg)
\end{array}
\right),\label{appendix2}
\end{equation}
where $\omega\in\bbs^2$.
To obtain the above representations, a suitable Lorentzian transformation
should be considered.

On the other hand, the nine dimensional
integration can be reduced in a different way
without using Lorentz transformations so that a different form of the collision operator is derived.
The post-collision momentum may be assumed to have the form of
$p'=p+r\xi$ as in \cite{GS93}, but we replace this by
\begin{equation}
p'=\frac{p+q}{2}+r\xi,\quad\xi\in\bbs^2,\label{appendix3}
\end{equation}
and then follow the calculations of \cite{GS93}.
We first consider the integration over $\bbr^6$ with respect to $p'$ and $q'$:
\begin{align*}
&\iint_{\bbr^6}\sigma(h,\theta)\delta^{(4)}(p^\mu+q^\mu - p'^\mu - q'^\mu)
\Big(f(p')f(q')-f(p)f(q)\Big)\frac{dp'}{p'^0}\frac{dq'}{q'^0}\\
&=\int_{\bbr^3}\sigma(h,\theta)\delta(p^0+q^0 - p'^0 - q'^0)
\Big(f(p')f(q')-f(p)f(q)\Big)\frac{dp'}{p'^0q'^0}\\
&=\int_0^\infty\int_{\bbs^2}\sigma(h,\theta)\delta(p^0+q^0 - p'^0 - q'^0)
\Big(f(p')f(q')-f(p)f(q)\Big)\frac{r^2d\xi dr}{p'^0q'^0},
\end{align*}
where $q'=p+q-p'$ in the first identity, and we assume \eqref{appendix3}
in the second identity. For the Dirac delta function we use the property
$\delta(x)=|a|\delta(ax)$ to obtain
\begin{align*}
&\delta(p^0+q^0 - p'^0- q'^0)\\
&=(p^0+q^0+p'^0+q'^0)\delta\Big((p^0+q^0)^2-(p'^0+q'^0)^2\Big)\\
&=2(p^0+q^0)\delta\Big((p^0+q^0)^2-(p'^0+q'^0)^2\Big).
\end{align*}
Similar arguments give the following calculations:
\begin{align*}
&2(p^0+q^0)\delta\Big((p'^0+q'^0)^2-(p^0+q^0)^2\Big)\\
&=2(p^0+q^0)\delta\Big(2p'^0q'^0-(p^0+q^0)^2+(p'^0)^2+(q'^0)^2\Big)\\
&=8(p^0+q^0)p'^0q'^0\delta\Big(4(p'^0)^2(q'^0)^2-((p^0+q^0)^2-(p'^0)^2-(q'^0)^2)^2\Big),
\end{align*}
and let $p=p(r)$ denote the quantity contained in the Dirac delta function.
Then, it is rewritten as follows:
\begin{align*}
p(r)
&=4(p'^0)^2(q'^0)^2-(p^0+q^0)^4
+2(p^0+q^0)^2((p'^0)^2+(q'^0)^2)
-((p'^0)^2+(q'^0)^2)^2\\
&=-(p^0+q^0)^4-((p'^0)^2-(q'^0)^2)^2
+2(p^0+q^0)^2((p'^0)^2+(q'^0)^2).
\end{align*}
Note that
\begin{align*}
((p'^0)^2-(q'^0)^2)^2
&=(|p'|^2-|q'|^2)^2
=\bigg(\bigg|\frac{p+q}{2}+r\xi\bigg|^2-\bigg|
\frac{p+q}{2}-r\xi\bigg|^2\bigg)^2\\
&=(2(p+q)\cdot r\xi)^2
=4(n\cdot\xi)^2r^2,
\end{align*}
and
\begin{align*}
(p'^0)^2+(q'^0)^2
&=2+\bigg|\frac{p+q}{2}+r\xi\bigg|^2+\bigg|
\frac{p+q}{2}-r\xi\bigg|^2
=2+\frac{|n|^2}{2}+2r^2,
\end{align*}
where $n=p+q$ and $n^0=p^0+q^0$.
Therefore, the quantity $p(r)$ is written as
\begin{align*}
p(r)&=-(n^0)^4-((p'^0)^2-(q'^0)^2)^2
+2(n^0)^2((p'^0)^2+(q'^0)^2)\\
&=-(n^0)^4-4(n\cdot\xi)^2r^2
+2(n^0)^2\Big(2+\frac{|n|^2}{2}+2r^2\Big)\\
&=r^2(4(n^0)^2-4(n\cdot\xi)^2)
-(n^0)^2((n^0)^2-|n|^2-4).
\end{align*}
We may write $(n^0)^2-(n\cdot\xi)^2=t_\alpha t^\alpha$ for $t^\alpha=(n\cdot\xi,n^0\xi)$
to obtain
\[
p(r)=4(t_\alpha t^\alpha) r^2-h^2(n^0)^2,
\]
where we used $(n^0)^2-|n|^2-4=s-4=h^2$.
We plug this to the above calculation:
\begin{align*}
&\int_0^\infty\int_{\bbs^2}\sigma(h,\theta)
\delta(p^0+q^0 - p'^0- q'^0)
\Big(f(p')f(q')-f(p)f(q)\Big)\frac{r^2d\xi dr}{p'^0q'^0}\\
&=8\int_0^\infty\int_{\bbs^2}\sigma(h,\theta)
(p^0+q^0)p'^0q'^0\delta(p(r))
\Big(f(p')f(q')-f(p)f(q)\Big)\frac{r^2d\xi dr}{p'^0q'^0}\\
&=8\int_0^\infty\int_{\bbs^2}\sigma(h,\theta)
n^0\delta\Big(4(t_\alpha t^\alpha)r^2-h^2(n^0)^2\Big)
\Big(f(p')f(q')-f(p)f(q)\Big)r^2 d\xi dr\allowdisplaybreaks\\
&=8\int_0^\infty\int_{\bbs^2}\sigma(h,\theta)
\frac{n^0}{4t_\alpha t^\alpha}\delta\bigg(r^2-\frac{h^2(n^0)^2}{4t_\alpha t^\alpha}\bigg)
\Big(f(p')f(q')-f(p)f(q)\Big)r^2 d\xi dr\\
&=8\int_0^\infty\int_{\bbs^2}\sigma(h,\theta)
\frac{n^0}{4t_\alpha t^\alpha}\frac{1}{2r}\delta\bigg(r-\frac{hn^0}{2\sqrt{t_\alpha t^\alpha}}\bigg)
\Big(f(p')f(q')-f(p)f(q)\Big)r^2 d\xi dr\\
&=\int_0^\infty\int_{\bbs^2}\sigma(h,\theta)
\frac{n^0}{t_\alpha t^\alpha}\delta\bigg(r-\frac{hn^0}{2\sqrt{t_\alpha t^\alpha}}\bigg)
\Big(f(p')f(q')-f(p)f(q)\Big)r d\xi dr\allowdisplaybreaks\\
&=\int_{\bbs^2}\sigma(h,\theta)
\frac{n^0}{t_\alpha t^\alpha}\frac{hn^0}{2\sqrt{t_\alpha t^\alpha}}
\Big(f(p')f(q')-f(p)f(q)\Big) d\xi\\
&=\frac{1}{2}\int_{\bbs^2}\sigma(h,\theta)
\frac{h(n^0)^2}{(t_\alpha t^\alpha)^{3/2}}
\Big(f(p')f(q')-f(p)f(q)\Big)d\xi.
\end{align*}
Since $t_\alpha t^\alpha=(n^0)^2-(n\cdot\xi)^2$,
we combine the other quantities to get
\begin{equation}
Q(f,f)=\int_{\bbr^3}\int_{\bbs^2}
\frac{hs(n^0)^2\sigma(h,\theta)}{4p^0q^0((n^0)^2-(n\cdot\xi)^2)^{3/2}}
\Big(f(p')f(q')-f(p)f(q)\Big)d\xi dq.\label{appendix4}
\end{equation}
From the above calculations we find $r=\frac{hn^0}{2\sqrt{t_\alpha t^\alpha}}$ to write
the post-collision momentum as
\begin{equation}
\left(
\begin{array}{c}
p'^0\\
p'^k
\end{array}
\right)=
\left(
\begin{array}{c}
\displaystyle
\frac{p^0+q^0}{2}+\frac{h}{2}\frac{(n\cdot\xi)}{\sqrt{(n^0)^2-(n\cdot\xi)^2}}\\
\displaystyle
\frac{p^k+q^k}{2}+\frac{h}{2}\frac{n^0\xi^k}{\sqrt{(n^0)^2-(n\cdot\xi)^2}}
\end{array}
\right),
\label{appendix5}
\end{equation}
where $\xi\in\bbs^2$.

We may now compare \eqref{appendix1} and \eqref{appendix4} to see that the two
parameters $\omega$ and ${\xi}$ have the following relation:
\begin{equation}
d\omega=\frac{\sqrt{s}(n^0)^2}{((n^0)^2-(n\cdot\xi)^2)^{3/2}}d\xi,\label{appendix6}
\end{equation}
and this can be proved by direct calculations. Suppose that a post-collision
momentum $p'^\alpha$ is given. It has two different representations \eqref{appendix2}
and \eqref{appendix5}, and we have the following identities:
\begin{align*}
\frac{n\cdot\omega}{\sqrt{s}}&=\frac{n\cdot\xi}{\sqrt{(n^0)^2-(n\cdot\xi)^2}},\\
\omega^k+\bigg(\frac{n^0}{\sqrt{s}}-1\bigg)
\frac{(n\cdot \omega)n^k}{|n|^2}
&=\frac{n^0\xi^k}{\sqrt{(n^0)^2-(n\cdot\xi)^2}}.
\end{align*}
Taking coordinates satisfying $n=(0,0,|n|)$, we write
$\omega=\omega(\alpha,\beta)$ and $\xi=\xi(\phi,\theta)$ as follows:
\begin{align*}
\omega&=(\sin\alpha\cos\beta,\sin\alpha\sin\beta,\cos\alpha),\\
\xi&=(\sin\phi\cos\theta,\sin\phi\sin\theta,\cos\phi),
\end{align*}
and then the second identity reduces to
\[
\Big(\sin\alpha\cos\beta,\sin\alpha\sin\beta,\frac{n^0}{\sqrt{s}}\cos\alpha\Big)
=\frac{n^0(\sin\phi\cos\theta,\sin\phi\sin\theta,\cos\phi)}{\sqrt{(n^0)^2-|n|^2\cos^2\phi}},
\]
where $\alpha,\phi\in[0,\pi]$ and $\beta,\theta\in[0,2\pi]$.
This quantity describes a certain two-dimensional surface in $\bbr^3$,
and the above identity shows that the surface is parametrized in two different ways.
This implies the surface element can be computed in two different ways.
Let $X$ denote the surface. Then, we have
\[
dS=|\partial_\alpha X\times\partial_\beta X|d\alpha d\beta
=|\partial_\phi X\times\partial_\theta X|d\phi d\theta.
\]
By direct calculations, we have
\begin{align*}
\partial_\alpha X&=\Big(\cos\alpha\cos\beta,
\cos\alpha\sin\beta,-\frac{n^0}{\sqrt{s}}\sin\alpha\Big),\\
\partial_\beta X&=\Big(
-\sin\alpha\sin\beta,\sin\alpha\cos\beta,0\Big),
\end{align*}
and
\[
\partial_\alpha X\times\partial_\beta X=
\Big(\frac{n^0}{\sqrt{s}}\sin^2\alpha\cos\beta,
\frac{n^0}{\sqrt{s}}\sin^2\alpha\sin\beta,
\sin\alpha\cos\alpha\Big).
\]
Hence, we have
\begin{align*}
|\partial_\alpha X\times\partial_\beta X|^2&=
\frac{(n^0)^2}{s}\sin^4\alpha
+\sin^2\alpha\cos^2\alpha
=\frac{\sin^2\alpha}{s}\Big((n^0)^2\sin^2\alpha+s\cos^2\alpha\Big)\\
&=\frac{\sin^2\alpha}{s}\Big((n^0)^2-|n|^2\cos^2\alpha\Big)
=\frac{\sin^2\alpha}{s}\Big((n^0)^2-(n\cdot\omega)^2\Big),
\end{align*}
and therefore we obtain
\begin{equation}
|\partial_\alpha X\times\partial_\beta X|d\alpha d\beta
=\frac{\sqrt{(n^0)^2-(n\cdot\omega)^2}}{\sqrt{s}}d\omega,\label{appendix7}
\end{equation}
where we used $d\omega=\sin\alpha d\alpha d\beta$.
We now consider the second case $X=X(\phi,\theta)$.
By direct calculations again, we have
\begin{align*}
\partial_\phi X&=\partial_\phi\bigg[\frac{n^0}{\sqrt{(n^0)^2-|n|^2\cos^2\phi}}\bigg]
(\sin\phi\cos\theta,\sin\phi\sin\theta,\cos\phi)\\
&\quad+\frac{n^0}{\sqrt{(n^0)^2-|n|^2\cos^2\phi}}
(\cos\phi\cos\theta,\cos\phi\sin\theta,-\sin\phi)\allowdisplaybreaks\\
&=\frac{-n^0|n|^2\sin\phi\cos\phi}{((n^0)^2-|n|^2\cos^2\phi)^{3/2}}
(\sin\phi\cos\theta,\sin\phi\sin\theta,\cos\phi)\\
&\quad+\frac{n^0}{\sqrt{(n^0)^2-|n|^2\cos^2\phi}}
(\cos\phi\cos\theta,\cos\phi\sin\theta,-\sin\phi),\\
\intertext{and}
\partial_\theta X&=\frac{n^0}{\sqrt{(n^0)^2-|n|^2\cos^2\phi}}
(-\sin\phi\sin\theta,\sin\phi\cos\theta,0).
\end{align*}
Note that $|\partial_\phi X\times\partial_\theta X|^2=|\partial_\phi X|^2|\partial_\theta X|^2$,
where the quantities are calculated as
\begin{align*}
|\partial_\phi X|^2&=
\frac{(n^0)^2|n|^4\sin^2\phi\cos^2\phi}{((n^0)^2-|n|^2\cos^2\phi)^3}
+\frac{(n^0)^2}{(n^0)^2-|n|^2\cos^2\phi},\\
\intertext{and}
|\partial_\theta X|^2&=
\frac{(n^0)^2\sin^2\phi}{(n^0)^2-|n|^2\cos^2\phi}.
\end{align*}
Hence, we have
\begin{align*}
&|\partial_\phi X\times \partial_\theta X|^2\\
&=\frac{(n^0)^4\sin^2\phi}{((n^0)^2-|n|^2\cos^2\phi)^4}
\Big(|n|^4\sin^2\phi\cos^2\phi+((n^0)^2-|n|^2\cos^2\phi)^2\Big)\\
&=\frac{(n^0)^4\sin^2\phi}{((n^0)^2-|n|^2\cos^2\phi)^4}
\Big(|n|^4\cos^2\phi+(n^0)^4
-2(n^0)^2|n|^2\cos^2\phi\Big)\allowdisplaybreaks\\
&=\frac{(n^0)^4\sin^2\phi}{((n^0)^2-(n\cdot\xi)^2)^4}
\Big(|n|^2(n\cdot\xi)^2+(n^0)^4
-2(n^0)^2(n\cdot\xi)^2\Big)\\
&=\frac{(n^0)^4\sin^2\phi}{((n^0)^2-(n\cdot\xi)^2)^4}
\Big((n^0)^4-(n^0)^2(n\cdot\xi)^2-s(n\cdot\xi)^2\Big)\\
&=\frac{(n^0)^4\sin^2\phi}{((n^0)^2-(n\cdot\xi)^2)^3}
\bigg((n^0)^2-\frac{s(n\cdot\xi)^2}{(n^0)^2-(n\cdot\xi)^2}\bigg).
\end{align*}
On the other hand, we can use the first relation between $\xi$ and $\omega$
to the last quantity as follows:
\[
(n^0)^2-\frac{s(n\cdot\xi)^2}{(n^0)^2-(n\cdot\xi)^2}
=(n^0)^2-(n\cdot\omega)^2.
\]
Then, we obtain
\begin{equation}
|\partial_\phi X\times\partial_\theta X|d\phi d\theta
=\frac{(n^0)^2\sqrt{(n^0)^2-(n\cdot\omega)^2}}
{((n^0)^2-(n\cdot\xi)^2)^{3/2}}d\xi,\label{appendix8}
\end{equation}
where $d\xi=\sin\phi d\phi d\theta$.
Finally, we know from calculus that the surface element
does not depend on the way in which the surface is parametrized,
and therefore we compare \eqref{appendix7} and \eqref{appendix8}
to get the desired result:
\[
d\omega=\frac{\sqrt{s}(n^0)^2}{((n^0)^2-(n\cdot\xi)^2)^{3/2}}d\xi,
\]
which we have already observed in \eqref{appendix6}.

\section*{Acknowledgements}
H.\ Lee has been supported by the TJ Park Science Fellowship of POSCO TJ Park Foundation. This work was initiated while E.N. was funded by the Irish Research Council and he is currently funded by a Juan de la Cierva grant of the Spanish government. He thanks Kyung Hee University for the hospitality during his stay where part of this work was done and thanks  Alan D. Rendall and John Stalker for several discussions.

\end{document}